\newcommand{\poly}{\textnormal{poly}}
\theoremstyle{plain}
\newtheorem{theorem}{Theorem}[section]
\newtheorem{lemma}[theorem]{Lemma}
\newtheorem{corollary}[theorem]{Corollary}
\newtheorem{conjecture}[theorem]{Conjecture}
\newtheorem{definition}[theorem]{Definition}
\newtheorem{prop}[theorem]{Proposition}
\newtheorem{remark}[theorem]{Remark}
\theoremstyle{definition}
\crefname{theorem}{Theorem}{Theorems}
\crefname{lemma}{Lemma}{Lemmas}
\crefname{proposition}{Proposition}{Propositions}
\crefname{definition}{Definition}{Definitions}
\crefname{corollary}{Corollary}{Corollaries}
\crefname{example}{Example}{Examples}
\crefname{section}{Section}{Sections}
\crefname{appendix}{Appendix}{Appendices}
\crefname{table}{Table}{Tables}
\crefname{conjecture}{Conjecture}{Conjectures}
\DeclareMathOperator{\tr}{tr}
\renewcommand{\P}[2][]{{\textnormal{Pr}_{#1}}{\left[#2\right]}}
\newcommand{\E}[2][]{{\textnormal{E}_{#1}}{\left[#2\right]}}
\newcommand{\Proj}[1]{\textnormal{Proj}\pr*{#1}}
\newcommand{\ketbra}[2]{\ket{#1}\bra{#2}}
\newcommand{\R}{\mathbb{R}}
\newcommand{\C}{\mathbb{C}}
\newcommand{\Z}{\mathbb{Z}}
\newcommand{\eqdef}{:=}
\newcommand{\one}{\mathds{1}}
\newcommand{\Stab}{\textnormal{Stab}}
\DeclarePairedDelimiter{\set}{\lbrace}{\rbrace}
\DeclarePairedDelimiter{\abs}{\lvert}{\rvert}
 \DeclarePairedDelimiter{\norm}{\lVert}{\rVert}
\DeclarePairedDelimiter{\pr}{\lparen}{\rparen}
\newcommand{\x}{\otimes}
\newcommand{\rank}{\textnormal{rank}}
\newcommand{\F}{\mathbb{F}_2}
\title{Quadratic Lower bounds on the Approximate Stabilizer Rank:\\
A Probabilistic Approach}
\author{Saeed Mehraban \footnote{Tufts CS,  Medford MA, Saeed.Mehraban@tufts.edu}\quad Mehrdad Tahmasbi\footnote{UIUC CS, Champaign IL, mehrdad@illinois.edu}}
\date{\today}
\begin{document}

\maketitle
\abstract{
    The approximate stabilizer rank of a quantum state is the minimum number of terms in any approximate decomposition of that state into stabilizer states. Bravyi and Gosset \cite{Bravyi_Gosset_2016} showed that the approximate stabilizer rank of a so-called ``magic'' state like $\ket{T}^{\otimes n}$, up to polynomial factors, is an upper bound on the number of classical operations required to simulate an arbitrary quantum circuit with Clifford gates and $n$ number of  $T$ gates. As a result, an exponential lower bound on this quantity seems inevitable. Despite this intuition, several attempts using various techniques could not lead to a better than a linear lower bound on the ``exact'' rank of $\ket{T}^{\otimes n}$, meaning the minimal size of a decomposition that \emph{exactly} produces the state. 
    For the ``approximate'' rank, which is more realistically related to the cost of simulating quantum circuits,
    no lower bound better than $\tilde \Omega(\sqrt n)$ has been known
\cite{Peleg2022lowerbounds,Lovitz2022newtechniques}. In this paper, we improve the lower bound on the approximate rank to $\tilde \Omega (n^2)$ for a wide range of the approximation parameters. An immediate corollary of our result is the existence of polynomial time computable functions which require a super-linear number of terms in any decomposition into exponentials of quadratic forms over $\F$, resolving a question of \cite{williams2018limits}. Our approach is based on a strong lower bound on the approximate rank of a quantum state sampled from the Haar measure, a step-by-step analysis of the approximate rank of a magic-state teleportation protocol to sample from the Haar measure, and a result about trading Clifford operations with $T$ gates by \cite{low2018trading}.}

\section{Introduction}

Is there an efficient classical algorithm to simulate arbitrary quantum physical systems? This fundamental question plays a vital role in numerous science and engineering disciplines. For instance, in quantum chemistry, one may translate this question into the ability to measure the structural properties of molecules or design new materials \cite{chemistry}. Alternatively, in condensed matter theory, it is pertinent to our ability to predict the phases of quantum materials or sampling from thermal distributions \cite{Bauer_Bravyi_Motta_Chan_2020}. Interestingly, this question also plays a non-trivial role in seemingly unrelated fields, such as theoretical computer science, cryptography, or number theory \cite{shor1999polynomial, regev2009lattices, Jozsa2001, search}. 

In theoretical computer science, this question is formulated as the relationship between two complexity classes known as ``bounded-error quantum polynomial time'' (\textbf{BQP}) and ``bounded-error classical polynomial time'' (\textbf{BPP}). Since the early days of quantum mechanics, it was observed that many-body quantum systems have exponentially large phase spaces with counter-intuitive dynamics and unexpected features such as the duality of wave and particle aspects of subatomic systems (see \cite{weinberg2015lectures} for some of the historical remarks). They furthermore demonstrate non-classical correlations known as entanglement \cite{Einstein_Podolsky_Rosen_1935}. Hence, the popular belief is that simulation of quantum systems requires exponential classical resources. This observation indeed motivated the development of quantum computing initiated by researcher such as Feynman in the 1980s \cite{Feynman_1982}. Subsequently, a breakthrough result by Shor demonstrated that an efficient classical algorithm to simulate arbitrary quantum computations would also efficiently factor large numbers. This problem is crucial to the security of encryption schemes like RSA, for which no polynomial-time classical algorithm has been discovered despite centuries of research. It is thus natural to conjecture that $\textbf{BPP} \neq \textbf{BQP}$. However, rigorous proof for this statement seems unlikely with current complexity theoretic tools since, for example, a proof of this statement will readily yield a separation of complexity classes such as ``polynomial time'' \textbf{P} and ``polynomial space'' \textbf{PSPACE}, which has been open beside an extensive amount of research over the past few decades.

Since an unconditional separation of $\textbf{BPP}$ and $\textbf{BQP}$ seems out-of-reach, it is insightful to ask the question from a complementary angle: are there restricted but non-trivial family of quantum systems that can be efficiently simulated classical computers. It turns out that there are several examples of classically simulable quantum circuits. For instance, if the amount of quantum entanglement at every step of a quantum circuit is limited, the system can be simulated efficiently \cite{Vidal}. Examples of such circuits include log-depth one-dimensional quantum circuits. Other classically simulable circuits based on constrained architecture include constant depth 2D random quantum circuits on two-dimensional architectures \cite{Napp} or adiabatic computations with large spectral gaps \cite{Osborne}. There is, however,  an important class of quantum circuits known as Clifford circuits, which can generate maximal entanglement and have large circuit depth but admits efficient classical simulations through a well-known theorem due to Gottesmann and Knill \cite{GottesmanKnill}. This theorem works by considering quantum states ``stabilized'' by particular subgroups of the Pauli group; these states, also known as stabilizer states, were first introduced in the context of error-correcting codes \cite{calderbank1996good} and subsequently found applications in several areas in quantum information science \cite{raussendorf2000quantum, Knill_Benchmark2008}. The Clifford operations normalize the Pauli group, meaning they keep the group unchanged under conjugation, and hence, they map stabilizer states to stabilizer states. Starting from a stabilizer state, Gottesman-Knill's algorithm keeps track of generators of the stabilizer subgroup corresponding to the quantum state under simulation. This method leads to a strong simulation of Clifford circuits acting on stabilizer states on a classical computer, meaning that the output amplitudes of such computations can be computed exactly. Note that several classically simulable gate sets, such as match gates or the infinite-dimensional Gaussian gates, generate large amounts of entanglement and can be simulated classically efficiently. These simulation algorithms, however, capture variants of the Gottesman-Knill theorem for different physical particles known as Bosons \cite{CVsim} or Fermions \cite{matchgates}. In other words, the idea behind the Gottesman-Knill theorem is fundamental to all of these simulation techniques.

The Clifford gate set can only approximate a restricted family of quantum operations. However, this gate set becomes universal if we augment this gate set with an additional non-Clifford gate known as $T$, or $\pi/4$ phase shift. In the context of this universal gate set, the mentioned \textbf{BPP}  vs. \textbf{BQP} question translates to characterizing the cost of classical simulation in terms of the number of $T$ gates in an arbitrary quantum computation compiled into Clifford and $T$ gates. Building on the Gottesman and Knill theorem, several works \cite{Bravyi_Gosset_2016, Bravyi_Smith_Smolin_2016} provided upper bounds on this classical simulation cost by developing classical simulation algorithms for quantum circuits dominated by Clifford gates. In particular, Bravi and Gosset \cite{Bravyi_Gosset_2016} demonstrated that for a constant $c < 1$, a classical computer could simulate a quantum circuit with $n$ qubits, $\poly(n)$ Clifford gates, and $m$ number of $T$ gates in time poly$(n) \cdot 2^{cm}$. In their approach, they first consider teleportation of the so-called (non-stabilizer) magic state $\ket T = \frac 1 {\sqrt 2} (\ket 0 + e^{i\pi/4} \ket 1)$ to simulate the effect of a $T$ gate at the middle of the circuit using Clifford gates and measurement in the computational basis. This operation converts a quantum circuit consisting of Clifford and $T$ gates to a Clifford circuit with measurements and inputs $\ket T^{\x m}$ followed by some zero states.
Then, they decompose $\ket T^{\otimes m}$ as a linear combination of $2^{c \cdot m}$ stabilizer states and use the Gottesman-Knill algorithm for each stabilizer state in this linear combination in time $\poly(n)$. By the linearity of quantum mechanics, they can find the output of the circuit when the input is $\ket T^{\x m}$ in time $\poly(n) \cdot 2^{c \cdot m}$. In this approach, a crucial quantity is the minimum number $r$ such that there exist stabilizer states $\ket {s_1}, \ldots, \ket{s_r}$ and complex numbers $c_1, \ldots, c_r$ such that $\ket{T}^{\otimes m} = c_1 \ket{s_1} + \ldots + c_r \ket {s_r}$. We call this quantity the stabilizer rank of $\ket T ^{\x m}$ and denote it by $\chi(\ket T^{\x m})$. If we are interested in the minimum number of stabilizer states that approximate the $\ket T^{\otimes m}$ state within $\delta$ 2-norm, we arrive at the definition of the approximate rank, which we denote by $\chi_\delta(\ket T^{\x m})$. For an arbitrary state $\ket \psi$, we denote its approximate stabilizer rank with the approximation parameter $0 \leq \delta \leq 1$ by $\chi_\delta(\ket \psi)$.

In this paper, we study lower bounds on the number of steps the above simulation technique based on decomposing magic states into stabilizer states requires. In other words, our goal is to prove a lower bound on the stabilizer rank for the magic state $\ket T^{\otimes m}$. This question is essential in many ways. First, it gives significant insight into the relationship between \textbf{BQP} and \textbf{BPP} and why quantum computations obtain speedup over classical computations by studying a lower bound on \textbf{BQP} against a canonical class of simulation method on \textbf{BPP}. Similar questions have arisen in computational complexity, for instance, in the context of the \textbf{P} v.s. \textbf{NP} question,  we know that specific restricted subclasses of \textbf{P} known as monotone circuits require an exponential lower bound to solve \textbf{NP}-hard problems \cite{monot1, monot2, monot3}. 

Secondly, this question is conceptually an intriguing one. The complexity of classical simulation for quantum circuits is about counting the minimal number of computational steps that successfully simulate an ``exponentially-sized" family of problems. In contrast, the problem of computing the stabilizer rank is ``one counting problem'' about ``one'' functional structure. It is counter-intuitive that the latter, which can be viewed as a question in functional analysis, would give non-trivial information about the former. Third, as we will highlight in Section~\ref{sec:cond}, a lower bound on the (approximate) stabilizer rank organically connects with several interesting structural questions about complexity classes. For instance, we can show that if the exact rank is $r$, then $\textbf{P}^{\#\textbf{P}} \subseteq \textnormal{TIME}(\poly(n) \cdot r)/r$; here $\textnormal{TIME}(\poly(n) \cdot r)/r$ means a computation which runs in time $\poly(n) \cdot r$ and has access to $O(r)$ bits of advice providing the description of the stabilizer decomposition. One of the immediate implications of the result of Bravyi and Gosset \cite{Bravyi_Gosset_2016} is that assuming a polynomial upper bound on the approximate rank with approximation parameter $\delta$ implies that sampling within total variation distance $O(\delta)$ from arbitrary quantum circuits can be done in \textbf{BPP}; recent progress in quantum complexity theory has demonstrated that assuming plausible conjectures about the average-case hardness of specific approximate counting problems, sampling within total variation distance from quantum computers using \textbf{BPP} implies the collapse of the polynomial hierarchy (see for example \cite{AA10, bremner2011classical, bouland2017complexity}). 


\subsection{Overview of the main results}

Even though we expect the exact stabilizer rank $\chi(\ket T^{\x m})$ to grow exponentially with $m$, the best-known lower bound has been $\tilde \Omega(m)$ due to three different groups \cite{Labib2022stabilizerrank, Peleg2022lowerbounds, Lovitz2022newtechniques}. As we will explain in Section \ref{sec:prev}, these three results use three different proof techniques, but they all stop at a linear lower bound. The situation with approximate stabilizer rank is slightly worse because the best-known lower bound for this quantity is $\sqrt m$ up to poly-logarithmic factors. An immediate conjecture is whether we can prove a super-linear lower bound on either the exact or approximate ranks. 

In this paper, we resolve this conjecture by proving a nearly quadratic lower bound for the $\ket T^{\otimes m}$ state.
\begin{theorem}[Informal statement of the main result]
Let $0 <\delta < 1$, then $\chi_\delta (\ket T^{\otimes m}) = \frac{\Omega(m^2)}{\poly
\log m}$. 
\label{thm:main-informal}
\end{theorem}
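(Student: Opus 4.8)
The plan is to reduce the theorem to two ingredients: a lower bound on the approximate stabilizer rank of a Haar-random state on \emph{few} qubits, and a $T$-cheap way to produce such a random state from $|T\rangle^{\otimes m}$ using only Clifford operations and measurements. For the first ingredient I would show that a Haar-random $n$-qubit state $|\psi\rangle$ satisfies $\chi_\delta(|\psi\rangle)=\Omega\!\big(2^n/(n^2+\log(1/\delta))\big)$ with probability $1-o(1)$, for every fixed $0<\delta<1$, by a covering argument. There are only $2^{O(n^2)}$ $n$-qubit stabilizer states, so the set of states within $\delta$ of some linear combination of $r$ of them is covered by a $\delta$-net of size $2^{O(rn^2)}(1/\delta)^{O(r)}$ (choose the $r$ stabilizer states, then net the $r$-dimensional span). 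On the other hand, concentration of measure on the sphere—Lévy's lemma, or directly a bound on a high moment of $|\langle\phi|\psi\rangle|^2$—shows that a Haar-random state lands within $\delta$ of any fixed state with probability $2^{-\Omega(2^n\log(1/\delta))}$; a union bound over the net beats this exactly when $r\ll 2^n/(n^2+\log(1/\delta))$.

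The second ingredient is a transfer principle: Clifford circuits with feed-forward measurements cannot increase approximate stabilizer rank by much. Concretely, if $|\varphi\rangle$ is produced from $|T\rangle^{\otimes m}$ (tensored with stabilizer ancillas) by such a circuit, then on a measurement branch $\vec b$ of probability $p_{\vec b}$ one has $\chi_{2\delta/\sqrt{p_{\vec b}}}(|\varphi_{\vec b}\rangle)\le\chi_\delta(|T\rangle^{\otimes m})$: Cliffords preserve the size of a stabilizer decomposition, measuring a qubit of a stabilizer state returns a (sub-normalized) stabilizer state, and renormalizing a branch inflates the approximation error by at most $1/\sqrt{p_{\vec b}}$. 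The key point is that the $T$-gadget corrections are themselves Clifford, so teleportation forces no postselection; the only probability loss comes from a handful of ``free'' measurements used to inject randomness, which I would keep down to $O(\log(1/\delta))$ near-uniform measurements so that $p_{\vec b}=\Omega(\delta^{O(1)})$ and the error inflation is a constant.

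The crux is building a magic-state-teleportation protocol that, consuming only $m=\tilde O(2^{n/2})$ copies of $|T\rangle$ (plus Cliffords, dirty ancillas, feed-forward, and $O(\log(1/\delta))$ free measurements), outputs a state whose distribution is close enough to Haar on $n\approx 2\log_2 m$ qubits that the first step still applies with high probability. The naive route—compiling a Haar-typical state-preparation unitary into Clifford$+T$—costs $\Theta(2^n/n^2)\approx m^2$ $T$-gates and would only reproduce the known $\tilde\Omega(m)$-type bound; the improvement comes from the space–time trade-off of \cite{low2018trading}, which realizes an arbitrary $n$-qubit state preparation (equivalently a diagonal unitary with $2^n$ distinct phases) using only $\tilde O(2^{n/2})$ non-Clifford gates at the price of $\tilde O(2^{n/2})$ dirty ancilla qubits. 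Combining the three steps with $2^n=\tilde\Theta(m^2)$ then yields $\chi_{O(\delta)}(|T\rangle^{\otimes m})\ge\Omega(2^n/n^2)=\Omega(m^2)/\poly\log m$.

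I expect the main obstacle to be making this last step rigorous: one must match the statistical quality of the cheaply produced random state—which moments of $|\langle\phi|\psi\rangle|^2$, equivalently which net, it has to fool—against $\delta$, against the dirty-ancilla overhead, and against the probability of the branch one ultimately postselects on, i.e., show that $\tilde O(2^{n/2})$ $T$-gates suffice not merely to prepare a single Haar-typical state but to do so through a protocol whose rank and approximation error stay controlled on a branch of non-negligible probability. A secondary difficulty is pinning down the exact range of $\delta$ for which the whole chain of inequalities survives (the statement promises ``a wide range''). Finally, the $\F$ corollary should follow by instantiating the bound for an efficiently describable magic-like state and invoking the standard dictionary between stabilizer-state amplitudes and exponentials of quadratic forms over $\F$, under which a stabilizer decomposition of size $r$ is precisely a representation of the associated function as a sum of $r$ such exponentials.
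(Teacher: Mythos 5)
Your overall architecture matches the paper's (a counting-plus-concentration lower bound for Haar-random states, the $\tilde O(2^{n/2})$ $T$-count state-preparation of \cite{low2018trading}, and a gadget-teleportation transfer of approximate rank), but the middle of your plan has a genuine gap. You ask the magic-state protocol to ``inject randomness'' and to output a state whose \emph{distribution} is close enough to Haar that the first step applies; this is both unnecessary and unattainable at the stated cost. The concentration/union-bound step effectively uses almost all moments of the Haar measure (the tail must beat $2^{O(rn^2)}$ choices with an $e^{-\Omega(2^n)}$ bound), and no ensemble built from only $\tilde O(2^{n/2})$ $T$ gates can be expected to reproduce moments of order $\approx 2^n$; indeed the paper points out that $t$-designs only give tails like $e^{O(n^2M)-\Omega(t)}$, which is why replacing Haar by a cheap pseudorandom ensemble fails. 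The correct move is to derandomize: Haar is used only to prove the \emph{existence} of one fixed $n$-qubit state $\ket{\phi}$ with $\chi_\delta(\ket{\phi})=\Omega((1-\delta^2)^2 2^n/n^2)$, and then the circuit of \cite{low2018trading} deterministically prepares that particular $\ket{\phi}$ (to error $4^{-n}$, absorbed into $\delta$). No closeness-to-Haar of any output distribution is ever needed, and your ``main obstacle'' disappears.

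Your transfer principle also needs repair as stated. The bound $\chi_{2\delta/\sqrt{p_{\vec b}}}(\ket{\varphi_{\vec b}})\le\chi_\delta(\ket{T}^{\otimes m})$ is useless here, because the full measurement record of the $m$ $T$-gadgets has $p_{\vec b}=2^{-m}$, so the error inflates by $2^{m/2}$; restricting the ``loss'' to $O(\log(1/\delta))$ extra measurements does not help since every gadget measurement still splits the approximation error between its two branches. The paper's fix, which you gesture at but do not carry out, is a per-measurement balanced-branch argument: because $|\langle 0|T\rangle|=|\langle 1|T\rangle|$, each gadget measurement outcome is exactly uniform, so of the two branches at least one has unnormalized error at most $\delta/\sqrt{2}$, and renormalizing by exactly $\sqrt{2}$ returns the \emph{same} parameter $\delta$; since both outcomes yield the target state up to a Clifford correction, one may iterate measurement by measurement choosing the good branch, with no accumulation of error and no probability bookkeeping at all. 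This balance is also why the paper's bound is specific to $\ket{T}$ (and its Clifford equivalents) rather than arbitrary magic states. With these two corrections your chain $2^n=\tilde\Theta(m^2)$ gives exactly the claimed $\Omega(m^2)/\poly\log m$ bound, and the first ingredient can in fact be done without netting the coefficients: being $\delta$-close to a combination of $M$ stabilizer states forces $\norm{P_S\ket{\phi}}^2\ge 1-\delta^2$ for the span $S$, so a union bound over the at most $e^{0.54 n^2 M}$ choices of stabilizer subsets plus L\'evy's inequality suffices.
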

As the above theorem indicates, our result works for a wide range of error parameters. The $T$ state a magic state in the second level of the Clifford hierarchy, meaning the group of operators that preserve the Clifford under conjugation; the third level is the group of operations that preserve the second level, and so on. Obviously, our result holds for quantum states that are Clifford equivalent to the $T$ state but does not hold for arbitrary magic states. We remark that for $\delta = 0$, our result holds for any magic state.

\begin{proof}[Proof Sketch.]
    Our method is a probabilistic one and has three main steps. As the first step, we show that for a random quantum state $\ket \phi$ with $n$ qubits sampled from the Haar measure, the approximate rank satisfies the following strong concentration bound
\begin{equation}
Pr [\chi_\delta (\ket \phi) \leq M] \leq e^{n^2 M - \Omega(2^n)}
\label{eq:conc}
\end{equation}
for any $\delta < 1$. As a result, we conclude that for all quantum states except for an exponentially small measure, the rank is at least $2^{n- o(1)}$. 

Second, we show we can sample from the Haar measure with exponentially small error using $\ket T^{\otimes m}$ for $m \approx n 2^{n/2}$ and $m$ adaptive measurement. We know that arbitrary quantum states can be implemented using $\tilde {O} (2^n)$ number of $T$ gates. However, by adding extra ancilla states initially in zeros and trading $T$ gates with Clifford operations, we can implement arbitrary quantum states using $\tilde O (2^{n/2})$; this result is due to \cite{low2018trading}. The third step is to show that the adaptive measurements do not increase the approximate rank of the $T$ states. We remark that this step of the analysis, in its current form, relies critically on the structure of the state and works only for the $T$ state due to its balanced structure, i.e., $|\langle 0| T\rangle|^2 = |\langle 1| T\rangle|^2$, and not arbitrary magic states. Putting these three steps together, and by rescaling $m = n 2^{n/2}$, we obtain $\frac{\Omega(m^2)}{\text{polylog}(m)}$ lower bound on the approximate rank of $\ket T^{\otimes m}$. 

The main bottleneck for going beyond the quadratic lower bound is that we need at least $\tilde {\Omega} (2^{n/2})$ $T$ states to sample with high precision from the Haar measure; see \cite{low2018trading} to see why this lower bound holds. We may wonder if we can use, instead of the Haar measure, pseudo-random quantum states such as approximate $t$-designs, which approximate the first $t$ moments of the Haar measure and use $\ll 2^{n/2}$ $T$ gates. It turns out the bound in Equation \ref{eq:conc} relies on almost all moments of the Haar measure. For instance, the main strength of this bound is due to the $2^n$ factor in the tail. For $t$-designs we only get tails like $e^{n^2 M - \Omega(t)}$. 

\end{proof}

Next, we study the relationship between circuit complexity and approximate stabilizer rank. As shown above, we obtain a quadratic lower bound on the approximate rank of a simple state like $\ket T^{\otimes m}$, which has linear circuit complexity. More generally, our result implies the following
\begin{theorem}[Stabilizer rank and circuit complexity]
For any number $d$ there exists a quantum state with circuit complexity at most $n^d \poly\log (n)$ and stabilizer rank at least $n^d$.
\label{thm:circ-complx}
\end{theorem}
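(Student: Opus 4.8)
The plan is to obtain \Cref{thm:circ-complx} as an immediate corollary of \Cref{thm:main-informal} by a rescaling argument, the witnessing state being simply $\ket T^{\otimes m}$ for a suitable polynomial $m=m(n)$. Fix a number $d>0$ and put $m=\lceil n^d\rceil$. Note that $\ket T^{\otimes m}$ is a state on $m=\poly(n)$ qubits, so in this statement $n$ is a scale parameter governing the target complexity, not the literal number of qubits of the state.

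The first step is to upper bound the circuit complexity, which is essentially free: since $\ket T = T H\ket 0$, the state $\ket T^{\otimes m}$ is produced from $\ket 0^{\otimes m}$ by applying $H$ and then $T$ to each of the $m$ qubits, i.e.\ by $2m$ gates from the Clifford$+T$ set (for any other fixed universal gate set, Solovay--Kitaev costs only a polylogarithmic overhead). Hence the circuit complexity of $\ket T^{\otimes m}$ is $O(m)=O(n^d)\le n^d\poly\log n$ once $n$ is large enough to absorb the constant. The second step is to invoke \Cref{thm:main-informal}: for every fixed $\delta\in(0,1)$ it gives $\chi_\delta(\ket T^{\otimes m})=\Omega(m^2)/\poly\log m$, and substituting $m=\lceil n^d\rceil$ (so that $\poly\log m=\poly\log n$ for fixed $d$) yields $\chi_\delta(\ket T^{\otimes m})=\Omega\!\bigl(n^{2d}/\poly\log n\bigr)$, which exceeds $n^d$ for all sufficiently large $n$ --- in fact by a factor $\tilde\Omega(n^d)$, so there is ample slack. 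Since the exact stabilizer rank is never smaller than the approximate one, the same lower bound holds for $\chi(\ket T^{\otimes m})$ as well, so the conclusion is insensitive to whether ``stabilizer rank'' is read as exact or approximate. Taking the state to be $\ket T^{\otimes m}$ completes the argument.

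I do not expect a genuine obstacle: all the difficulty has already been handled by \Cref{thm:main-informal}, and this statement merely repackages the pair (circuit complexity $\Theta(m)$, approximate rank $\tilde\Omega(m^2)$) of $\ket T^{\otimes m}$ under the ``$n^d$'' normalization. The two things needing a line of care are (i) writing down the trivial linear-in-$m$ circuit for $\ket T^{\otimes m}$ in whatever gate model ``circuit complexity'' refers to, and (ii) checking that the $\poly\log$ loss in \Cref{thm:main-informal} is dominated by the polynomial gap between $n^{2d}$ and $n^d$, which holds for $n\ge n_0(d)$. One caveat worth flagging: if one insists that the witness be an $n$-qubit state rather than a $\poly(n)$-qubit one, the argument as written gives the statement only for $1\le d\le 2$ (using $\ket T^{\otimes n}$ directly); reaching larger $d$ with a bounded number of qubits would need a lower-bound method not routed through the $\tilde\Omega(2^{k/2})$ $T$-cost of Haar sampling, which is outside the scope of the present techniques.
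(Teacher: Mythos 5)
Your route proves a weaker statement than the one the paper intends, and the caveat you flag at the end is exactly the crux. In this theorem (as instantiated by Corollary~\ref{cor:circuit-rank} and Proposition~\ref{th:poly-stab-rank}, and as needed for Conjecture~\ref{conj:stab-complx} to be non-trivial), $n$ is the number of qubits of the witness state, not a free scale parameter. With $m=\lceil n^d\rceil$ your witness $\ket{T}^{\otimes m}$ lives on $n^d$ qubits; measured against its own qubit count it exhibits only the pair (circuit complexity $\Theta(m)$, rank $\tilde\Omega(m^2)$), i.e.\ the $d\le 2$ case, exactly as you concede. So for $d>2$ the proposal does not establish Theorem~\ref{thm:circ-complx}, and your closing remark that going further ``would need a lower-bound method not routed through the $T$-cost of Haar sampling'' is where you part ways with the paper: such a method is already present in the paper and is how this theorem is actually proved.

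The paper's proof (Corollary~\ref{cor:circuit-rank}) does not invoke Theorem~\ref{thm:main-informal} at all. Take $k$ smallest with $C(1-\delta^2)^2 2^k/k^2 \geq n^d$, so $k = O(d\log n)$; by Lemma~\ref{lm:random-stab-rank} there is a $k$-qubit state $\ket{\psi}$ with $\chi_\delta(\ket{\psi})\geq n^d$, and by the exact synthesis bound of \cite[Theorem 3.3]{knill1995approximation} it is prepared by $O(k2^k)=n^d\poly\log(n)$ two-qubit gates. Tensoring with $\ket{0^{n-k}}$ gives an $n$-qubit state with the same circuit complexity and the same rank lower bound (projecting the ancillas onto $\ket{0^{n-k}}$ turns any approximate stabilizer decomposition of the padded state into one for $\ket{\psi}$ with no more terms). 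This works for every $d$ because the probabilistic rank bound is applied at qubit scale $k\approx d\log n$, where the generic circuit-complexity upper bound $O(k2^k)$ and the rank lower bound $\Omega(2^k/k^2)$ differ only by polylog factors; no $\ket{T}^{\otimes m}$ teleportation or $T$-count accounting is needed for this statement. Your argument, by contrast, is a correct but strictly weaker repackaging of Theorem~\ref{thm:main-informal}: under the $n$-qubit reading it buys nothing beyond $d\le 2$.
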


We note that based on \cite{Harrow_Horodecki_2016, Haferkamp2022randomquantum} except for an exponentially small fraction, almost all quantum states from random quantum circuits of size $s$ may not have circuit complexity less than $s^{1/5}$. This gives insight into why proving an exponential lower bound on the stabilizer rank of $\ket T^{\otimes n}$ might be a difficult task; likely $\ket T^{\otimes n}$ appears as one of the rare states whose circuit complexity may be compressed and the probabilistic method may not work anymore. As we will show in Section \ref{sec:designs} a weaker version of this result can be deduced from the properties of $t$ designs. In particular, we show that a quantum state of circuit complexity at most $O(n^{5d})$ and stabilizer rank at least $n^d$ exists. However, if the dependency on the number of $T$ gates used in \cite{Roth_2023} is improved to linear, we obtain exactly the result of Theorem \ref{thm:circ-complx}. Alternatively, this result can be obtained from constructions of unitary $t$-designs based on random quantum circuits in time $O(t^{5 + o(1)}) \text{poly} (n)$ in \cite{Harrow_Horodecki_2016, Haferkamp2022randomquantum}. As indicated in \cite{hunter2019unitary}, the bound on $t$ can likely be improved to linear. In that case, we again obtain Theorem~\ref{thm:circ-complx}.

We raise the following conjecture:
\begin{conjecture}[Stabilizer rank and circuit complexity]
For any constants $d \leq d'$, there exists a quantum state with circuit complexity at most $n^d$ and stabilizer rank at least $n^{d'}$.
\label{conj:stab-complx}
\end{conjecture}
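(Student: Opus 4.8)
I would split the conjecture into a range that already follows from the results above and the genuinely open range. For the first: tensoring $\ket T^{\otimes m}$ with $n-m$ ancillas prepared in $\ket 0$ changes neither its circuit complexity nor its approximate stabilizer rank up to constant factors, since projecting the ancillas back onto $\ket 0^{\otimes(n-m)}$ is a stabilizer-preserving operation and hence carries a $\delta$-approximate stabilizer decomposition of $\ket T^{\otimes m}\otimes\ket 0^{\otimes(n-m)}$ to one of $\ket T^{\otimes m}$ with no more terms (the reverse inequality being immediate). So by Theorem~\ref{thm:main-informal}, for every $d'<2$ the $n$-qubit state $\ket T^{\otimes m}\otimes\ket 0^{\otimes(n-m)}$ with $m=\lceil n^{d'/2}\poly\log n\rceil\le n$ has circuit complexity $\Theta(n^{d'/2}\poly\log n)$ and stabilizer rank $\tilde\Omega(m^2)\ge n^{d'}$, which gives Conjecture~\ref{conj:stab-complx} for all $d'/2<d\le d'<2$; up to a polylogarithmic slack in the circuit-complexity bound this subsumes the diagonal case of Theorem~\ref{thm:circ-complx} in that range.

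The genuine content of the conjecture is the regime $d'\ge 2$, the cleanest instance being $d=1$: a state built from only $n$ gates whose stabilizer rank is nonetheless at least $n^{d'}$ for every constant $d'$. The canonical witness is $\ket T^{\otimes n}$ itself, and proving the conjecture for it is exactly the super-polynomial (conjecturally exponential) stabilizer-rank lower bound that remains out of reach here and in all prior work. The plan would be to push the probabilistic method of this paper further: find an $n$-qubit ensemble $\ket\phi$ that is preparable with $\poly(n)$ gates (and few $T$ gates) but still obeys a tail bound of the form $\Pr[\chi_\delta(\ket\phi)\le M]\le e^{O(n^2)M-f(n)}$ with $f(n)=n^{\omega(1)}$. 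A typical member would then have stabilizer rank $\Omega(f(n)/n^2)=n^{\omega(1)}$ at circuit complexity $\poly(n)$, and padding as above would realize every pair $(d,d')$.

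The hard part --- and the reason this is only a conjecture --- is that the sole known route to such a tail bound is the moment computation behind~\eqref{eq:conc}, whose strength (the $\Omega(2^n)$ in the exponent, rather than the $\Omega(t)$ one obtains from a $t$-design) comes precisely from the Haar ensemble matching essentially all $\sim 2^n$ of its moments. An ensemble matching $n^{\omega(1)}$ moments is a super-polynomial design, and by the trading lower bound of~\cite{low2018trading} such ensembles require a super-polynomial number of $T$ gates, hence super-polynomial circuit complexity. Equivalently, the probabilistic route can only certify a stabilizer rank as large as the circuit complexity divided by $n^2$ --- the wrong direction for $d\le d'$ by a factor $n^2$ --- while the structural and rank-theoretic techniques behind the linear exact lower bounds of~\cite{Labib2022stabilizerrank,Peleg2022lowerbounds,Lovitz2022newtechniques} and the quadratic bound obtained here do not presently scale past degree two. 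I expect the crux to be exactly the bottleneck flagged after the proof of Theorem~\ref{thm:main-informal}: one needs either a genuinely new lower-bound technique for stabilizer rank, or a $\poly(n)$-gate pseudorandom family whose effective number of usable moments is super-polynomial despite the small gate budget.
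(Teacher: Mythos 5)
The statement you were asked about is \cref{conj:stab-complx}, which the paper poses as an open conjecture and does not prove, so there is no proof of it in the paper to compare against; you correctly treat it as open rather than claiming to close it. What you do establish is sound and consistent with the paper's actual results: padding $\ket{T}^{\otimes m}$ with $\ket{0}$ ancillas does not change the approximate rank, because $I\otimes\ketbra{0^{n-m}}{0^{n-m}}$ sends stabilizer states to scalar multiples of stabilizer states --- this is exactly step $(a)$ in \cref{sec:proof-lb} --- and combining this with \cref{th:lower-bound-stab} yields witnesses for the pairs $d'/2<d\le d'<2$, the same territory as \cref{thm:circ-complx} and \cref{cor:circuit-rank}. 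Your barrier analysis also mirrors the paper's own discussion: the $\Omega(2^n)$ tail in \cref{lm:random-stab-rank} uses essentially all moments of the Haar measure, a $t$-design only yields a tail of strength $t$ (\cref{lm:t-design-rank}), certifying rank $M$ that way forces $t=\Omega(Mn)$, and so the probabilistic route cannot certify rank more than roughly the non-Clifford (hence circuit) complexity.

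Two caveats. First, the genuinely open regime is not only $d'\ge 2$: your padding argument covers only $d>d'/2$, so pairs such as $(d,d')=(0.6,1.5)$ are just as open as $(1,3)$; the correct frontier is $d\le d'/2$, i.e.\ rank super-quadratic in circuit complexity, which is exactly the super-quadratic barrier the paper flags after the proof sketch of \cref{thm:main-informal}. Second, you attribute to \cite{low2018trading} the claim that an ensemble matching super-polynomially many moments must use super-polynomially many $T$ gates. That is not a theorem of \cite{low2018trading} (which supplies the synthesis upper bound, and a $T$-count lower bound for Haar state preparation); for designs it is precisely the paper's \cref{con:converse-t-design}, whose only stated evidence is the $\tilde\Omega(nt)$ total-gate bound of \cite{Harrow_Horodecki_2016}. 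Neither caveat undermines what you actually prove, since the second occurs only in your explanation of why the conjecture is hard, but the attribution should be corrected.
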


\subsection{Complexity theoretic implications}
\label{sec:quad-uncertainty}

While proving unconditional separations between complexity classes is difficult, a simpler milestone is finding complexity-theoretic lower bounds against specific families of simple functions. For instance, we can consider the problem of representing Boolean functions in a specific complexity class such as \textbf{NP} as a linear combination of simple functions.
For example, the so-called quadratic uncertainty principle \cite{filmus2014real} is the conjecture that in any exact decomposition of the AND function into 
\begin{equation}
    \sum_{j =1}^r c_j (-1)^{Q_j(x_1, \ldots, x_n)}
\label{eq:quad}
\end{equation}
we need $r \geq 2^{\Omega (n)}$, where $Q_j$ are quadratic polynomials over $\F$, and $c_j \in \mathbb C$. The best lower bound on this problem is linear due to Williams \cite{williams2018limits}. As observed by \cite{Peleg2022lowerbounds} Eq.~\eqref{eq:quad} is exactly the overlap of sums of stabilizer states with bit strings. Because the AND function is itself a stabilizer function, it has a stabilizer rank of $1$, and we do not hope to improve this bound based on a stabilizer lower bound approach. Williams furthermore showed that for any $k > 0$ there exists a function $f_k \in $\textbf{NP} such that $r \geq \Omega (n^k)$ for any decomposition of $f_k$ into linear combination $\sum_{i=1}^r\alpha_i(-1)^{Q_i}$ 
 for quadratic polynomials $Q_1, \cdots, Q_r$. An open problem is whether the same is true for functions in \textbf{P}; specifically, it is an open question whether there exists a function in \textbf{P} with super-linear $r$. As an immediate corollary of our result, we resolve this question by proving a (nearly) quadratic lower bound on the ``approximate'' rank $r$ for the function defined in the following corollary  (We remark that while the work of Williams \cite{williams2018limits} concerns exact decompositions, Chen and Williams \cite{chen2019stronger} managed to extend some of the results in \cite{williams2018limits} to hold for the approximate decompositions as well).
\begin{theorem}
 Let $M : \{0,1\}^n \rightarrow \{0,1\}$ be such that $M(x_1, \ldots, x_n) = 1$ iff  $x_1 + \ldots + x_n = 0 \mod 8$. Then
 any ``approximate'' decomposition of $M$ into a decomposition $S_r(x)$ like Eq. \eqref{eq:quad} such that $\frac 1 {2^n} \sum_x |M(x) - S_r(x)|^2 \leq \delta$ requires $r = \tilde \Omega (n^2)$.
\end{theorem}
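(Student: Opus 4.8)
The plan is to convert an approximate quadratic-form decomposition of $M$ into an approximate stabilizer-state decomposition of a magic state on $\Theta(n)$ qubits, and then invoke Theorem~\ref{thm:main-informal}.

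\emph{Step 1 (Fourier reduction over $\Z_8$).} Put $\omega=e^{i\pi/4}$ and $|y|=y_1+\cdots+y_n$. Expanding the indicator over the characters of $\Z_8$ gives
\[
M(y)=\mathbf{1}[|y|\equiv 0\bmod 8]=\tfrac18\sum_{k=0}^{7}\omega^{k|y|}.
\]
Since $\omega^{k|y|}=\prod_i\omega^{ky_i}=2^{n/2}\,\langle y\,|\,(T^{k}\ket{+})^{\otimes n}\rangle$, the vector $\sum_y M(y)\ket y$ is proportional to $\sum_{k=0}^{7}\ket{T_k}^{\otimes n}$ with $\ket{T_k}:=T^{k}\ket{+}$. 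One checks directly that $\ket{T_k}$ is Clifford-equivalent to $\ket T$ when $k$ is odd ($\ket{T_1}=\ket T$, $\ket{T_3}=S\ket T$, $\ket{T_5}=Z\ket T$, $\ket{T_7}=S^{\dagger}\ket T$) and is a stabilizer state when $k$ is even ($\ket{+},\ket{+i},\ket{-},\ket{-i}$). Hence $\ket{\psi_M}\propto\sum_k\ket{T_k}^{\otimes n}$ is a superposition of four Clifford images of $\ket T^{\otimes n}$ together with four stabilizer states.

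\emph{Step 2 (from function error to state error).} Each $(-1)^{Q_j(y)}$ appearing in $S_r(y)=\sum_{j=1}^{r}c_j(-1)^{Q_j(y)}$ is, up to the global factor $2^{n/2}$, the amplitude vector of a real full-support stabilizer state $\ket{\phi_j}$. Since $\|M\|_2^2=\#\{y:|y|\equiv 0\bmod 8\}=(1+o(1))2^{n}/8$ and the eight states $\ket{T_k}^{\otimes n}$ have pairwise overlaps at most $\cos^{n}(\pi/8)$, the hypothesis $\tfrac{1}{2^{n}}\sum_y|M(y)-S_r(y)|^2\le\delta$ becomes, after renormalising, $\bigl\|\,\ket{\psi_M}-\sum_j c_j'\ket{\phi_j}\bigr\|=O(\sqrt\delta)$, i.e.\ $\chi_{O(\sqrt\delta)}(\ket{\psi_M})\le r$. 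Discarding the four even-$k$ stabilizer terms gives $\chi_{O(\sqrt\delta)}(\widehat G)\le r+4$, where $\widehat G\propto\sum_{k\ \mathrm{odd}}\ket{T_k}^{\otimes n}=\sum_{j=0}^{3}(S^{j}\ket T)^{\otimes n}$; note $\widehat G$ has norm $2(1+o(1))$ and is supported exactly on $\{y:|y|\equiv 0\bmod 4\}$.

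\emph{Step 3 (reduction to $\ket T^{\otimes m}$, the crux) and conclusion.} It then suffices to show that $\widehat G$ (or $\ket{\psi_M}$) can be transformed into a good approximation of $\ket T^{\otimes m}$, for some $m=\Omega(n)$, using only Clifford unitaries, stabilizer ancillas, and post-selected computational-basis measurements whose success probability is bounded below by an absolute constant. Such operations cannot increase the stabilizer rank and degrade a $2$-norm approximation only by a constant factor, so one would get $\chi_{O(\sqrt\delta)}(\ket T^{\otimes m})\le r+4$, whence Theorem~\ref{thm:main-informal} with $m=\Theta(n)$ gives $r=\tilde\Omega(m^2)=\tilde\Omega(n^2)$ for the stated range of $\delta$.

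I expect Step~3 to be the main obstacle. Writing $\widehat G=(I+S^{\otimes n})\ket W$ with $\ket W=\ket T^{\otimes n}+(Z\ket T)^{\otimes n}=\sqrt2\,T^{\otimes n}\ket{\mathrm{GHZ}'}$ for the stabilizer state $\ket{\mathrm{GHZ}'}=H^{\otimes n}\tfrac{1}{\sqrt2}(\ket{0^n}+\ket{1^n})$ (using $\ket T=T\ket+$ and $Z\ket T=T\ket-$) exhibits $\widehat G$ as a stabilizer state acted on by $T^{\otimes n}$ and then by the non-unitary diagonal $I+S^{\otimes n}$, which is proportional to the projector onto $\{|y|\equiv 0\bmod 4\}$. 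The trouble is that this restricted magic state lives on a codimension-two subspace, and its reduced state on any $\Theta(n)$ of its qubits is a near-uniform \emph{mixture} of the four states $(S^{j}\ket T)^{\otimes m}$ rather than a pure magic state; recovering a clean $\ket T^{\otimes m}$ therefore seems to require either a stabilizer ``flagging / dimension-restoring'' gadget, or a re-run of the Haar-measure argument behind Theorem~\ref{thm:main-informal} directly for $\widehat G$ (respectively $\ket{\psi_M}$), exploiting the balanced structure $|\langle 0|T\rangle|=|\langle 1|T\rangle|$ exactly as in Step~3 of that proof so that the adaptive measurements do not inflate the rank.
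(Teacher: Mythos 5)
Your Steps 1 and 2 are sound: the character expansion $M(y)=\tfrac18\sum_{k=0}^{7}\omega^{k|y|}$, the identification of the odd-$k$ terms with Clifford images of $\ket T$ and the even-$k$ terms with stabilizer states, and the translation of the hypothesis $\frac1{2^n}\sum_y|M(y)-S_r(y)|^2\le\delta$ into an $O(\sqrt\delta)$-approximate stabilizer decomposition of $\ket{\psi_M}$ (and hence of $\widehat G\propto \Pi_{|y|\equiv 0\bmod 4}\ket T^{\otimes n}$, at the cost of $4$ extra terms) are all correct up to constant-factor bookkeeping. But Step 3 is a genuine gap, not a technicality: the whole theorem now rests on showing that the \emph{projected} magic state $\widehat G\propto\sum_{j=0}^{3}(S^{j}\ket T)^{\otimes n}$ can be converted into $\ket T^{\otimes m}$, $m=\Omega(n)$, by rank-non-increasing operations with constant post-selection probability, and you give no construction. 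As you yourself observe, the natural measurement-based attempts collapse $\widehat G$ into (approximately) a mixture of the four states $(S^{j}\ket T)^{\otimes m}$ without revealing $j$, so Lemma~\ref{lm:T-gadget}-style arguments do not produce a clean magic state, and Theorem~\ref{thm:main-informal} cannot be invoked. A direct rank lower bound for $\widehat G$ or $\ket{\psi_M}$ also does not follow from the main theorem as stated, so the proposal does not prove the statement.

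The paper avoids your Step 3 entirely by running the reduction in the opposite direction. It first observes that the main lower bound survives when the rank is taken with respect to the subclass $\mathcal Q\subset\Stab_n$ of full-support quadratic phases, so the \emph{function} $T(x)=e^{2\pi i|x|/8}$ itself requires $\tilde\Omega(n^2)$ terms in any $\delta$-approximate decomposition $\sum_j c_j(-1)^{Q_j}$. Writing $T(x)=\sum_{j=0}^{7}e^{2\pi i j/8}M_j(x)$ with $M_j(x)=\mathbf 1[|x|\equiv j\bmod 8]$, the triangle inequality shows that if every $M_j$ admitted a cheap $(\delta/7)$-approximate decomposition then so would $T$; hence some $M_j$ is hard. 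Finally the substitution $M_0(1,\dots,1,x_1,\dots,x_n)=M_j(x_1,\dots,x_n)$ (fixing $j$ inputs to $1$ keeps quadratic phases quadratic and inflates the mean-square error by at most $2^{j}$) transfers the hardness to $M_0=M$. In this direction one never has to lower bound the rank of a coherent superposition or projection of magic states, which is exactly where your argument gets stuck; if you want to salvage your route, you would need a new distillation gadget or a separate probabilistic argument tailored to $\widehat G$, neither of which is supplied.
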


\begin{proof}
    Our proof is inspired by an observation made in \cite{Peleg2022lowerbounds}, which implied a ``linear'' lower bound on the ``exact'' decomposition of $M$ into quadratic phases. We use our main result (Theorem \ref{thm:main-informal}) to prove a (nearly) ``quadratic'' lower bound on the number of terms in any ``approximate'' decompositions of the Boolean function $M$ into quadratic phases.
    
    Define $T(x_1, \ldots, x_n) = 2^{n/2}\langle x_1 \ldots x_n \ket T = e^{i 2\pi (\abs{x})/8 }$, where $\abs{x}$ is the Hamming weight of $x$. Let $\mathcal Q \subseteq \Stab_n$ be the set of all stabilizer states of the form $\frac 1 {\sqrt {2^n}}\sum_x (-1)^{Q(x)}\ket x$ for a quadratic polynomial $Q : \F^n \rightarrow \F$. 
    With slight modification, the (nearly) quadratic lower bound in Theorem~\ref{thm:circ-complx} also applies when we define the notion of rank with respect to the set $\mathcal Q$ instead of $\Stab_n$. As a result, for any string $x \in \F^n$ and for any decomposition satisfying $\frac 1 {2^n}\sum_x|T(x) - \sum_{j = 1}^{r_n}(-1)^{Q(x)}|^2 \leq \delta$, we have $r_n = \tilde \Omega (n^2)$, where $\delta$ is a constant fixed in advance. $T$ depends only on $|x| \mod 8$ and therefore can be written as $T(x) = \sum_{j = 0}^7 e^{2\pi i  \frac {j}8} M_j (x)$ such that $M_j (x) = 1$ iff $\abs{x} = j \mod 8$. Therefore, using the triangle inequality, there exists $j \in \{0, 1, \ldots, 7\}$ such that any decomposition $S_r$ with $r$ terms that satisfies $\frac 1 {2^n} \sum_x|M_j(x) - S_r(x)|^2 \leq \delta/7$ implies $r = \tilde \Omega (n^2)$. If $j = 0$, we are done. If $j \neq 0$ we use the following reduction: 
    $$
    M_0 (\underbrace{1,\ldots,1}_j, x_1, \ldots, x_n) = M_j (x_1, \ldots, x_n).
    $$
    Suppose there exists a decomposition with $r$ terms $S_r$ such that $\frac 1 {2^{n + j}} \sum_{y,x} |M_0 (y_1, \ldots, y_j, x_1, \ldots, x_n) - S (y_1, \ldots, y_j, x_1, \ldots, x_n )|^2 \leq \lambda$ therefore $S'_r (x_1, \ldots, x_n)= S (y_1 = 1,\ldots, y_j = 1, x_1, \ldots, x_n)$ which has $\leq r$ terms satisfies $\frac 1 {2^n} \sum_x |M_j(x) - S'_r(x)|^2 \leq 2^j \lambda$. By choosing $\lambda = \delta/ (7 \cdot 2^j)$ we conclude the proof. 
\end{proof} 

\begin{remark}
    The above result can be viewed as a lower bound for a classical problem from an upper bound for a quantum problem. In particular, we use an upper bound on the number of $T$ gates that can synthesize a quantum state with specific statistical properties to give a lower bound on the number of quadratic terms required to synthesize $M(x)$. 
\end{remark}

\subsection{Conditional lower bounds on the approximate rank}
\label{sec:cond}
Proving an unconditional exponential lower bound on the approximate stabilizer rank seems to be a difficult task. Can we prove this statement assuming plausible conjectures? First, consider the exact rank. We can show the following result.

\begin{theorem}
The exact rank of the magic state $|T\rangle^{\otimes m}$ is super-polynomial unless the permanent has polynomial circuits.
\label{thm:perm-det}
\end{theorem}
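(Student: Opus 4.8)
\medskip
\noindent\textbf{Proof plan.}
The plan is to prove the contrapositive: assuming $\chi(\ket T^{\x m})\le\poly(m)$ for all $m$, I would exhibit polynomial-size circuits for the permanent. The argument rests on three ingredients. First, Valiant's theorem that the permanent of a $0/1$ matrix is $\#\textbf{P}$-complete; note that an $n\times n$ such permanent is a nonnegative integer of only $O(n\log n)$ bits, so producing it with a $\poly(n)$-size circuit is a meaningful goal. Second, the standard encoding of counting into exact quantum amplitudes: the gate set $\{\textnormal{Toffoli},H\}$ is quantum-universal, and each Toffoli can be realized with $O(1)$ many $T$ gates over Clifford${}+T$, so a circuit with $g$ such gates is a Clifford${}+T$ circuit with only $O(g)$ many $T$ gates; moreover, for a circuit $C$ on $N$ qubits with $h$ Hadamard gates, $\langle 0^N | C | 0^N\rangle$ equals $2^{-h/2}$ times the signed number of computational-basis Feynman paths from $0^N$ back to $0^N$, and this signed count ranges exactly over $\mathsf{GapP}$. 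Since $\#\textbf{P}\subseteq\mathsf{GapP}$, for every $n\times n$ matrix $A$ I would build a polynomial-size circuit $C_A$ over $\{\textnormal{Toffoli},H\}$---hence a Clifford${}+T$ circuit on $N=\poly(n)$ qubits using $m=\poly(n)$ many $T$ gates---with $\langle 0^N | C_A | 0^N\rangle = 2^{-h/2}\,\textnormal{Per}(A)$ for a known $h=\poly(n)$. Third, the Bravyi--Gosset exact strong-simulation algorithm: teleporting the $T$ gates of $C_A$ turns the amplitude into a Clifford computation on $\ket{0^N}\x\ket T^{\x m}$ with intermediate measurements, and substituting a decomposition $\ket T^{\x m}=\sum_{j=1}^{r}c_j\ket{s_j}$ with $r=\chi(\ket T^{\x m})$ and running Gottesman--Knill term by term evaluates the amplitude in time $\poly(N)\cdot r$; supplied the $c_j$ to $\poly(n)$ bits of precision, this suffices to recover $\textnormal{Per}(A)=2^{h/2}\langle 0^N|C_A|0^N\rangle$ exactly, since $\textnormal{Per}(A)$ is an integer of $\poly(n)$ bits.

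Assembling these: under the hypothesis $r=\poly(m)=\poly(n)$, I would hardwire the $r$ coefficients and stabilizer tableaux of the $\ket{s_j}$ as $\poly(n)$ bits of nonuniform advice---they exist and are polynomially short by assumption, even though we cannot claim to compute them efficiently---and unroll the above procedure into a circuit of size $\poly(n)$ that, on input $A$, outputs $\textnormal{Per}(A)$. This is precisely the instantiation at $r=\poly$ of the inclusion $\textbf{P}^{\#\textbf{P}}\subseteq\textnormal{TIME}(\poly(n)\cdot r)/r$ recorded earlier, and it is also exactly why the conclusion is about polynomial \emph{circuits} rather than a polynomial-time \emph{algorithm}: we know the stabilizer decomposition is short but not that it is efficiently constructible.

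The one step I expect to require genuine care is the second ingredient, and specifically keeping the $T$-count polynomial. A black-box compilation of an arbitrary polynomial-size quantum circuit into Clifford${}+T$ could in principle use $2^{\Theta(N)}$ many $T$ gates, which would render the Bravyi--Gosset step worthless; the fix is to work natively over $\{\textnormal{Toffoli},H\}$ so the $T$-count is $O(g)=\poly(n)$ from the start (Solovay--Kitaev compilation, costing only $\poly\log$ per gate, would also do). The remaining points---that the stabilizer-sum evaluation can be carried out to the required $\poly(n)$ bits of precision, that a single amplitude of $C_A$ already pins down the whole permanent, and the routine bookkeeping of Valiant's reduction---I would treat only briefly.
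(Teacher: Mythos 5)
Your proposal follows essentially the same route as the paper's proof: work in the contrapositive, encode a GapP-complete quantity (the paper uses the gap of a classical circuit, you use the permanent via Valiant, which is the same thing up to the standard equivalence) as an amplitude of a Clifford${}+T$ circuit with polynomially many $T$ gates, teleport the $T$ gates onto $\ket{T}^{\x m}$, hardwire a minimal stabilizer decomposition $\ket{T}^{\x m}=\sum_{j=1}^r c_j\ket{s_j}$ as nonuniform advice, and evaluate term by term with Gottesman--Knill in time $\poly(n)\cdot r$. Your emphasis on keeping the $T$-count polynomial by compiling natively over $\{\textnormal{Toffoli},H\}$ matches the paper's use of the exact Toffoli-over-Clifford${}+T$ construction.

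The one place where you wave off what the paper treats as the genuinely nontrivial step is the advice length for the coefficients. You assert that ``supplied the $c_j$ to $\poly(n)$ bits of precision, this suffices to recover $\textnormal{Per}(A)$ exactly,'' but this presupposes that the $c_j$ in a minimal decomposition can be specified to additive accuracy $2^{-\poly(m)}$ using only polynomially many bits, i.e.\ that their magnitudes are at most $2^{\poly(m)}$. A priori nothing rules out wildly large coefficients with massive cancellations, in which case the advice string would not be polynomially short and the $2^{h/2}$ amplification would destroy exact recovery of the integer. The paper closes this by observing that minimality forces $\ket{s_1},\dots,\ket{s_r}$ to be linearly independent, so the Gram matrix $G$ with $G_{ij}=\braket{s_i}{s_j}$ is invertible and $\mathbf{c}=G^{-1}\boldsymbol{\beta}$ with $\beta_i=\bra{s_i}T\rangle^{\x m}$; since the entries of $G$ are structured stabilizer overlaps, $G^{-1}$ (and hence $\mathbf{c}$) can be controlled to polynomial precision, and the advice also stores the affine-subspace/quadratic-form descriptions of the $\ket{s_j}$. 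You should add this (or an equivalent magnitude bound on the $c_j$) rather than treating the precision bookkeeping as routine; with that supplied, your argument coincides with the paper's.
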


\begin{proof}
We show that if $\chi(\ket T^{\x m}) = \poly(m)$, there exists a polynomial-time algorithm with polynomial advice (and therefore a polynomial-size circuit) for the problem of computing the gap of a polynomial-size classical circuit. A polynomial-size circuit for this problem implies a polynomial-size circuit for the permanent.  To see this, given a polynomial-size function $f: \{0,1\}^\ast \rightarrow \{0,1\}$, let $U_f$ be a reversible implementation of the map $\ket x \ket 0 \rightarrow \ket x \ket {f(x)}$. 
Then $\langle 0^n 1| H^{\otimes n+1} U_f H^{\otimes n} \otimes I|0^{n+1}\rangle = \frac{gap(f)}{\sqrt 2 2^n}$. 

Next, we compile this circuit into Clifford and $T$ gates. Note that Hadamard gates are already Clifford, and we can implement $U_f$ using Toffoli gates. Furthermore, a Toffoli gate can be produced exactly using four Clifford gates and two $T$ gates (See Figure 1 of \cite{amy2013meet}).
Next, we demonstrate how to compute this quantity using Clifford gates and $T$ states. If we have $m = \poly(n)$ number of $T$ gates, using \cite[Eq.~(16)]{Bravyi_2019}, we obtain
\begin{align}
    2^{m/2}\langle 0^n 1| \otimes \bra{0^m} C_f |0^{n+1}\rangle \otimes \ket{T}^{\x m}= \frac{gap(f)}{\sqrt 2 2^n}
\end{align}
for an appropriate polynomial-size Clifford circuit $C_f$. Consider a decomposition 
\begin{align}
    \ket T^{\x m} = \sum_{i=1}^{r} c_i \ket{s_i}
\end{align}
for $r=\chi(\ket T^{\x m})$ and  stabilizer states $\ket{s_1}, \cdots, \ket{s_r}$. We show that if $r$ is at most polynomially large in $m$ (and hence $n$), then we can use polynomial in $n$ number bits of advice to represent $c_1, \ldots, c_r$. It is not immediately clear that polynomial bits of advice that keep a $2^{-m^{O(1)}}$ precision for $c_i$s is sufficient; we prove this statement in the following. Let $G \in \mathbb{C}^{r \times r}$ be the Gram matrix corresponding to $\ket{s_1}, \ldots, \ket{s_r}$ with entries $G_{ij} = \langle{s_i|s_j}\rangle$. Let $\beta_i = \langle{s_i|T^{\otimes m}}\rangle$, and let $\boldsymbol{\beta}$ and $\mathbf{c}$, respectively, be $r \times 1$ vectors with $i$'th entry $\beta_i$ and $c_i$. Therefore $\boldsymbol{\beta} = G \mathbf{c}$. Next, we observe that $G$ is invertible; because, $\ket{s_1}, \ldots \ket{s_r}$ are linearly independent, otherwise, the decomposition of $\ket{T^{\otimes m}}$ would not be minimal. Therefore,  to prove that polynomial number of bits of advice is enough to store $c_i$s with $2^{-m^{O(1)}}$ precision,  it is enough to show that polynomial bits of precision are enough to store $G^{-1}$. To see this, consider a Gaussian elimination approach to compute $G^{-1}$; for each step of the algorithm, polynomial bits of precision are enough to store the intermediate steps of the algorithm. We also use the advice to store the full description of $\ket{s_1}, \ldots, \ket{s_r}$; in order to represent each stabilizer state, we need to use the advice to store the description of an affine subspace $A$ ($\leq m$ generators), a quadratic function $Q$, and a linear function $l$, which we can do with polynomial bits of precision (see Section \ref{sec:stab-form}). Note that our argument heavily relies on the assumption that $r = m^{O(1)}$ (which is the assumption we made in the statement of this Theorem).

We then compute the gap using the following expression
\begin{align}
    2^{m/2}\langle 0^n 1| \otimes \bra{0^m} C_f |0^{n+1}\rangle \otimes \ket{T}^{\x m} = \sum_{i=1}^{\chi(\ket T^{\x m})} c_i 2^{m/2}\langle 0^n 1| \otimes \bra{0^m} C_f |0^{n+1}\rangle \otimes \ket{s_i}.
\end{align}
Using the Gottesman-Knill algorithm, we can find each $\langle 0^n 1| \otimes \bra{0^m} C_f \otimes I|0^{n+1}\rangle \otimes \ket{s_i}$ in polynomial time and evaluate the whole sum using polynomial bits of advice representing $c_i$. Hence, we can find $gap(f)$ in polynomial time conditioned on $\chi(\ket T^{\x m}) = \poly(m)$.

\end{proof}

\begin{remark}
    In complexity theory literature, it is known that permanent having polynomial-size circuits is equivalent to the collapse of complexity classes $\textbf{P}^{\#\textbf{P}} \subseteq \textbf{P} /poly$ which implies $\textbf{P}^{\# \textbf{P}} = \textbf{MA}$.  The collapse to \textbf{MA} is by \cite{babai1991non}.
\end{remark}

\begin{remark}
    Sampling one bit from the output distribution of stabilizer circuits is complete for the complexity class $\oplus$\textbf{L} \cite{AaronsonGottesman}, which is the class of problems that are solvable on a non-deterministic logspace machine which the even parity of a non-deterministic path is an indication of acceptance. Hence, we suspect that the full power of $\textbf{P}$ is not necessary to prove Theorem~\ref{th:poly-stab-rank} and we might be able to replace \textbf{P} with a weaker class such as $\oplus$\textbf{L}. However, to show this, we need to show that strong simulation of stabilizer circuits is also possible in $\oplus$\textbf{L}. More realistically, we suspect a strong simulation of Clifford circuits is possible in gap\textbf{L} $= \textbf{DET}$, the same way a strong simulation of \textbf{BQP} is possible in gap\textbf{P}.
\end{remark}

We note that the above theorem implies the following immediate corollary for the approximate rank:
\begin{corollary}
Let $\delta = 1/2^{\Omega(n)}$, then the approximate rank of $\ket T^{\otimes n}$ is at least super-polynomial unless the permanent polynomial has polynomial-size circuits.
\end{corollary}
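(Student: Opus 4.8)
The plan is to prove the approximate analogue of \Cref{thm:perm-det} by re-running that proof verbatim while tracking the approximation error, and observing that an exponentially small $\delta$ is swamped by none of the amplification factors that appear. Concretely, suppose $\chi_\delta(\ket{T}^{\otimes n}) \le \poly(n)$ for $\delta = 2^{-cn}$ with a constant $c$ to be fixed. Given a polynomial-size $f : \{0,1\}^k \to \{0,1\}$, compile $U_f$ into Clifford$+T$ exactly as in \Cref{thm:perm-det}, producing a polynomial-size Clifford circuit $C_f$ and $m = \poly(k)$ copies of $\ket{T}$ with
\begin{equation*}
2^{m/2}\,\bra{0^k 1}\otimes\bra{0^m}\, C_f\,\big(\ket{0^{k+1}}\otimes\ket{T}^{\otimes m}\big)\;=\;\frac{gap(f)}{\sqrt{2}\,2^{k}} .
\end{equation*}
Apply the hypothesis with $n=m$ to obtain an approximate decomposition $\ket{\widetilde T}=\sum_{i=1}^{r}c_i\ket{s_i}$ with $r=\poly(m)=\poly(k)$, stabilizer states $\ket{s_i}$, and $\big\|\ket{T}^{\otimes m}-\ket{\widetilde T}\big\|\le 2^{-cm}$.

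Since $C_f$ is unitary and $\bra{0^k 1}\otimes\bra{0^m}$ has norm at most one, replacing $\ket{T}^{\otimes m}$ by $\ket{\widetilde T}$ perturbs the left-hand side by at most $2^{m/2}\cdot 2^{-cm}$. As $gap(f)$ is an integer, it is enough that the total error in the reconstructed value of $gap(f)$ — namely $\sqrt{2}\,2^{k}\cdot 2^{m/2}\cdot 2^{-cm}$ together with the rounding error incurred by storing the $c_i$ with finitely many bits — stays below $1/2$. Because $m$ grows at least linearly in $k$, a sufficiently large constant $c$ makes the first contribution at most $1/4$. It then remains to argue, as in the exact case, that (i) a polynomial amount of advice suffices to specify the $c_i$ and the stabilizer states $\ket{s_i}$ to a precision whose induced error is at most $1/4$, and (ii) each overlap $2^{m/2}\,\bra{0^k 1}\otimes\bra{0^m}\,C_f\,(\ket{0^{k+1}}\otimes\ket{s_i})$ is computable in polynomial time by Gottesman--Knill; summing these, scaling by $\sqrt{2}\,2^{k}$, and rounding recovers $gap(f)$, hence a polynomial-size circuit for $gap$, hence for the permanent, which (by the equivalence recalled in the remark after \Cref{thm:perm-det}) yields $\textbf{P}^{\#\textbf{P}}\subseteq\textbf{P}/\poly$; this is the contrapositive of the claim.

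The one step that is genuinely new relative to \Cref{thm:perm-det}, and the main obstacle, is (i). In the exact case, minimality of the decomposition forces the $\ket{s_i}$ to be linearly independent, so the Gram matrix $G$ is invertible and $\mathbf{c}=G^{-1}\boldsymbol{\beta}$ is shown to be storable with polynomially many bits via Gaussian elimination. For an approximate decomposition this is not automatic, so I would first sparsify to a linearly independent subfamily of the $\ket{s_i}$ (discarding redundant states and re-solving, which only decreases $r$), and then bound $\|G^{-1}\|$, and hence $\max_i|c_i|\le\|G^{-1}\|\,\|\boldsymbol{\beta}\|\le\|G^{-1}\|\sqrt{r}$, using the standard fact that the Gram matrix of $r$ distinct stabilizer states on $m$ qubits has smallest singular value at least $2^{-\poly(m,r)}=2^{-\poly(k)}$. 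Consequently $\poly(k)$ bits capture each $c_i$ (integer part and $2^{-\poly(k)}$ precision), and the advice can simply hand over these bits — no runtime inversion is needed — keeping the induced error below $1/4$. All remaining ingredients (the exact Clifford$+T$ synthesis of Toffoli, the $t$-count stabilizer-simulation identity, and the polynomial-bit descriptions of stabilizer states) are unchanged, so once (i) is established the corollary follows.
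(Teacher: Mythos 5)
Your proposal is correct and follows essentially the same route as the paper's own proof: substitute the $\delta$-approximating state into the reduction of \Cref{thm:perm-det}, observe that the state error is amplified by at most $\sqrt{2}\,2^{k+m/2}$, and take $\delta$ exponentially small with a sufficiently large constant in the exponent so that the integer $gap(f)$ survives rounding, yielding polynomial-size circuits for the permanent. Your extra care in step (i) (sparsifying to a linearly independent family and lower-bounding the smallest singular value of the stabilizer Gram matrix) is sound but slightly more than needed, since the minimizer in the definition of $\chi_\delta$ already admits a minimal, hence linearly independent, exact stabilizer decomposition, so the paper simply reuses the advice argument of the exact case verbatim.
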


\begin{proof}
    Let $\|\ket \psi - \ket T^{\otimes n}\|_2 \leq \delta$, then $\ket \psi = \ket T^{\otimes n} + \ket \delta$ where $\|\ket \delta \|_2 = O(\delta)$. Therefore, similar to the proof of Theorem \ref{thm:perm-det}, the gap of a function $f$ can be encoded as 
    $$
    gap(f) = \sqrt 2 2^{n + m/2}\langle 0^n 1| \otimes \bra{0^m} C_f |0^{n+1}\rangle \otimes \ket \psi + O( 2^{m/2 + n}{\delta})
    $$
    Without loss of generality, we can assume $n = O(m)$ because $m$ is a constant times the size of the circuit for $f$. So the error term grows like $O(2^{3/2 m} \delta)$. Suppose $\delta < 1/2^{5/2m}$ then using the same line of reasoning as in the proof of \ref{thm:perm-det} we obtain the gap within precision $1/2^{\Omega(n)}$ which is already a gap\textbf{P} hard problem.
\end{proof}

This corollary may not seem very surprising, but it is not trivial since for any $\delta > 0$ (even for $\delta = 1/ 2^{2^n}$!) there are quantum states that have very high exact rank and very small approximate rank with parameter $\delta$. To see this, let $\delta$ be fixed according to the above and let $\ket \psi$ be a quantum state with $\chi_0(\psi) = 2^{\Omega (n)}$ that is orthogonal to $\ket 0$; then $\ket \phi = \delta \ket \psi + \sqrt {1-\delta^2} \ket 0$ has exact rank $2^{\Omega(n)}$ and approximate rank $1$ with parameter $O(\delta) > 0$.

Can we obtain the same lower bound for when $\delta$ is a constant? The result of \cite{Bravyi_Gosset_2016} can be re-stated as the following.

\begin{theorem}
If $\chi_\delta(\ket T^{\otimes n})$ is upper bounded by a polynomial, then there exists a \textbf{BPP}$/$poly algorithm to sample from the output distribution of arbitrary polynomial-size quantum circuits within $O(\delta)$ total variation distance. 
\label{thm:BG-TV}
\end{theorem}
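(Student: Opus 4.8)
The plan is to run the Bravyi--Gosset simulation algorithm \cite{Bravyi_Gosset_2016}, supplying the hypothesised polynomial-size $\delta$-approximate stabilizer decomposition of the magic states as the non-uniform advice, and to track how the $2$-norm error $\delta$ turns into an $O(\delta)$ error in total variation distance of the sampled outputs. So fix an arbitrary quantum circuit of size $s = \poly(n)$. First I would compile it into Clifford and $T$ gates with $\poly(s)$ overhead (Solovay--Kitaev in general, exact Clifford synthesis when the circuit is already over this gate set), which costs only an arbitrarily small inverse polynomial in operator norm and hence at most $1/\poly$ in total variation distance. Then magic-state injection replaces each $T$-gate by a gadget consuming one $\ket T$ ancilla, producing a circuit $\mathcal{C}$ built solely from Clifford gates and computational-basis measurements (some outcomes feeding forward into later Clifford corrections), acting on $\ket{0^{w}}\otimes\ket T^{\otimes m}$ with $w,m = \poly(s)$, whose measured output register has the same distribution $\mathcal D$ as the original circuit.

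The advice, indexed by $s$, would be a $\delta$-approximate decomposition $\ket{\tilde\psi} = \sum_{i=1}^{\chi} c_i \ket{s_i}$ of $\ket T^{\otimes m}$ with $\chi = \chi_\delta(\ket T^{\otimes m}) \le \poly(m) = \poly(s)$ stabilizer states $\ket{s_i}$ (each stored as an affine subspace plus a quadratic and a linear form over $\F$, cf.\ Section~\ref{sec:stab-form}) and coefficients $c_i$, everything kept to $\poly(s)$ bits of precision. Running $\mathcal{C}$ on $\ket{0^w}\otimes\ket{\tilde\psi}$ instead of $\ket{0^w}\otimes\ket T^{\otimes m}$ changes the output distribution by at most $O(\delta)$ in total variation distance, because (after renormalising $\ket{\tilde\psi}$, whose norm is within $\delta$ of $1$) the two global states are $2$-norm-close to within $O(\delta)$, isometries preserve this, and tracing out or measuring cannot increase trace distance. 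So it suffices to sample from the output of $\mathcal{C}$ on $\ket{0^w}\otimes\ket{\tilde\psi}$ up to a further $1/\poly$ error.

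I would do that by processing the measurements of $\mathcal{C}$ one at a time, as in \cite{Bravyi_Gosset_2016}: conditioned on the outcomes fixed so far, the probability of the next outcome equals $\norm{\Pi\,U(\ket{0^w}\otimes\ket{\tilde\psi})}_2^2 \big/ \norm{U(\ket{0^w}\otimes\ket{\tilde\psi})}_2^2$ for the Clifford $U$ applied up to that point and a suitable computational-basis projector $\Pi$; expanding $\ket{\tilde\psi} = \sum_i c_i\ket{s_i}$ and using that $U$ maps each $\ket{0^w}\otimes\ket{s_i}$ to a stabilizer state, every required quantity is the squared norm of a sum of $\chi$ (sub-normalised) stabilizer states. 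These are estimated by the randomised norm-estimation subroutine of \cite{Bravyi_Gosset_2016}, which needs only that the inner products $\langle s_i|s_j\rangle$ and $\langle x|s_i\rangle$ are efficiently computable together with a concentrating random-stabilizer-state estimator for $\norm{\Pi\ket\psi}_2^2$; it runs in time $\poly(w,\chi,1/\epsilon,\log(1/p_{\mathrm{fail}}))$ and returns a $(1\pm\epsilon)$ multiplicative estimate with probability at least $1-p_{\mathrm{fail}}$. Over the $\poly(s)$ measurements the sampled distribution lands within a $(1\pm\poly(s)\epsilon)$ multiplicative factor, hence within $O(\poly(s)\epsilon)$ in total variation distance, of the target; choosing $\epsilon = 1/\poly(s)$ small enough (and $p_{\mathrm{fail}}$ exponentially small, union bounded over the steps) gives a \textbf{BPP}$/$poly algorithm running in time $\poly(s)$ and sampling within $\delta + 1/\poly = O(\delta)$ of $\mathcal D$ (the inverse polynomial taken below $\delta$ whenever $\delta \ge 1/\poly$).

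All of the conceptual content is from \cite{Bravyi_Gosset_2016}; the two points that need care here are (i) verifying that $\poly(s)$ bits of precision genuinely suffice for the advice, so that it is short and the rounding does not spoil the norm estimates — this can be argued exactly as in the proof of Theorem~\ref{thm:perm-det}, by running Gaussian elimination on the Gram matrix of the $\ket{s_i}$ and bounding the bit-lengths of the intermediate quantities — and (ii) controlling how the multiplicative errors of the norm-estimation subroutine compound over the $\poly(s)$ sequential measurements. The second point is the main obstacle: a merely additive accuracy guarantee would let exponentially small conditional marginals blow up the accumulated total variation error, whereas the multiplicative guarantee makes the per-step relative errors simply add, keeping the total $O(\poly(s)\epsilon)$ and hence controllable.
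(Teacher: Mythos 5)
Your proposal is correct and takes essentially the same approach the paper intends: Theorem~\ref{thm:BG-TV} is stated there as a restatement of the Bravyi--Gosset result \cite{Bravyi_Gosset_2016} with no separate proof, and your argument simply fills in the standard details (gadgetization, the approximate decomposition supplied as advice with polynomially many bits of precision, the multiplicative norm-estimation subroutine applied measurement-by-measurement, and the $2$-norm-to-total-variation conversion). Nothing further is needed.
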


The hardness of sampling within total variation distance has been the subject of extensive research over the past decade. In particular, Aaronson and Arkhipov \cite{AA10} demonstrated that assuming a conjecture about the anti-concentration and a conjecture about the hardness of approximating the permanent of Gaussian matrices, sampling within total variation distance from arbitrary quantum computations is hard unless the polynomial hierarchy collapses. We use a variant of this result based on hardness of sampling from random circuits \cite{arute2019quantum,AaronsonChen, bouland2017complexity} to deduce the following
\begin{theorem}[Approximate rank and polynomial hierarchy]
Assuming that for $0.1$ fraction of quantum circuits $U$ with polynomial-size approximating $\langle x |U| 0\rangle$ within $\delta$ multiplicative error is $\#$\textbf{P}-hard, stabilizer rank $\chi_\delta(\ket T^{\otimes n})$ is super-polynomial in $n$ unless the polynomial hierarchy collapses to the fourth level.
\end{theorem}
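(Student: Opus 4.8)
The plan is to argue by contradiction, combining Theorem~\ref{thm:BG-TV} with the now-standard ``quantum supremacy'' machinery --- Stockmeyer approximate counting together with a hiding argument --- in the style of Aaronson--Arkhipov and its refinements \cite{AA10, AaronsonChen, bouland2017complexity}. Suppose $\chi_\delta(\ket T^{\otimes n}) = \poly(n)$. By Theorem~\ref{thm:BG-TV} there is then a \textbf{BPP}$/\poly$ algorithm $\mathcal A$ which, on input (the description of) a polynomial-size quantum circuit $U$, outputs a sample from a distribution $q_U$ with $\|q_U - p_U\|_{\mathrm{TV}} = O(\delta)$, where $p_U(x) = \abs{\langle x | U | 0 \rangle}^2$. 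The $O(\cdot)$ hides only an absolute constant; this slack in the error parameter is reconciled with the hardness conjecture in the usual way, so I will simply write $\delta$ for it below.

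The next step is Stockmeyer's approximate counting theorem. Hard-wiring the polynomial advice of $\mathcal A$ turns it into a polynomial-size randomized circuit, so $q_U(x) = \Pr[\mathcal A(U) = x]$ is (after normalization) a $\#\textbf{P}$-style counting quantity; hence there is a procedure in \textbf{FBPP}$^{\textbf{NP}}/\poly$ that, given $U$ and $x$, outputs $\tilde q$ with $\tilde q \in \big[(1 - \tfrac{1}{\poly})\, q_U(x),\, (1 + \tfrac{1}{\poly})\, q_U(x)\big]$. It then remains to upgrade this to a multiplicative estimate of the true quantity $p_U(x)$ for the appropriate fraction of circuits. Applying Markov's inequality to $\sum_x \abs{q_U(x) - p_U(x)} = O(\delta)$ shows that, for each $U$, all but an $O(\delta)$ $p_U$-fraction of strings $x$ satisfy $q_U(x) = (1 \pm O(\delta))\, p_U(x)$; the standard hiding/symmetrization step (appending Pauli-$X$ gates to relocate the target string to $0^n$ and absorbing them into $U$, keeping $U$ inside the circuit family named in the conjecture) converts this into a statement about a $1 - O(\delta) > 0.9$ fraction of circuits. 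Combined with the Stockmeyer estimate, this yields a multiplicative $\delta$-approximation to $\abs{\langle x | U | 0 \rangle}^2$ for at least a $0.1$ fraction of circuits, computed in \textbf{FBPP}$^{\textbf{NP}}/\poly \subseteq \textbf{BPP}^{\Sigma_2^{\mathrm p}}/\poly$, i.e.\ inside the third level of \textbf{PH} with polynomial advice.

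Finally I would invoke Toda's theorem, $\textbf{PH} \subseteq \textbf{P}^{\#\textbf{P}}$: since the approximation task above is $\#\textbf{P}$-hard by hypothesis, all of \textbf{PH} (indeed $\textbf{P}^{\#\textbf{P}}$) lies in the third level of $\textbf{PH}/\poly$, and a Karp--Lipton-style argument absorbs the polynomial advice at the cost of one extra quantifier, collapsing \textbf{PH} to its fourth level --- the desired contradiction. I expect the main obstacle to be the middle step: making the passage from an $O(\delta)$-total-variation sampler to a genuine multiplicative approximation of $\abs{\langle x | U | 0 \rangle}^2$ fully rigorous, in particular matching the precise error model, circuit distribution, and $0.1$ fraction used in the stated hardness conjecture without secretly invoking an anticoncentration hypothesis, and carefully tracking how the unavoidable $/\poly$ advice (inherited from not having an explicit description of the stabilizer decomposition of $\ket T^{\otimes n}$) pushes the collapse one level below the advice-free third-level collapse of Aaronson--Arkhipov.
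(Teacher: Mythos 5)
Your overall architecture (the Bravyi--Gosset sampler from Theorem~\ref{thm:BG-TV}, Stockmeyer counting relative to \textbf{NP}, hiding, Toda, and a Karp--Lipton-style collapse) matches the paper's, but the middle step --- which you yourself flag as the main obstacle --- has a genuine gap, and the fix you propose does not close it. Markov's inequality applied to $\sum_x |q_U(x)-p_U(x)| = O(\delta)$ gives a guarantee over $x$ drawn from $p_U$: the set of strings on which $q_U(x)$ multiplicatively approximates $p_U(x)$ has large \emph{$p_U$-mass} (and even then, relative error $t$ holds only outside a set of $p_U$-mass $O(\delta/t)$, so ``relative error $O(\delta)$ outside $p_U$-mass $O(\delta)$'' is not what Markov yields). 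The hiding step then produces circuits $U' = X^x U$ in which $x$ is distributed according to $p_U$, so the induced measure on circuits is the family distribution reweighted by $2^n p_U(x)$ --- biased exactly toward circuits whose relevant output probability is large. The hardness hypothesis, however, quantifies over a $0.1$ fraction of circuits under the unbiased distribution (with a fixed or uniformly hidden target string). Converting ``good for $0.9$ of the $p_U$-biased measure'' into ``good for $0.9$ of the conjecture's measure'' requires knowing that $p_U(x)$ is typically $\Omega(2^{-n})$, i.e.\ precisely the anticoncentration statement you were trying to avoid. Without it, the set of circuits on which your estimator succeeds could have negligible measure under the conjecture's distribution and need not intersect the hard $0.1$ fraction at all.

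The paper closes this by making anticoncentration unconditional rather than avoiding it: the hard family is chosen to be an approximate $2$-design, for which the Paley--Zygmund bound (Theorem~3 of \cite{harrow2023approximate}) guarantees output probabilities of size $\Omega(2^{-n})$ for a constant fraction of circuits; combined with Stockmeyer (and Adleman to replace \textbf{BPP} by \textbf{P}$/\poly$) this gives an $\textbf{FP}^{\textbf{NP}}/\poly$ multiplicative estimate of the output probability at $0^n$ for a $1/8 - o(1)$ fraction of circuits, which exceeds the $0.1$ in the hypothesis. The endgame is then as you sketch, made precise: Toda gives $\textbf{PH} \subseteq \textbf{P}^{\#\textbf{P}} \subseteq \textbf{P}^{\textbf{NP}}/\poly$, hence $\mathbf{\Sigma}_2 \subseteq \mathbf{\Pi}_2/\poly$, and Corollary~5.7 of \cite{cai2005competing} collapses \textbf{PH} to $\mathbf{\Sigma}_4$. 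To repair your write-up you should either adopt the same $2$-design/Paley--Zygmund step or build anticoncentration explicitly into the hardness hypothesis; the Markov-plus-hiding shortcut by itself does not suffice.
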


\begin{proof}
    Suppose the stabilizer rank is bounded by a polynomial. We show that there exists a protocol within $\textbf{P}^{\textbf{NP}}/$poly which can approximate $\langle 0| U|0\rangle$ within multiplicative error $O(\delta)$ for a $\geq 0.1$ fraction of quantum circuits. If we assume this task is $\#$\textbf{P}-hard, then using Toda's result \cite{toda1991pp}, $\textbf{PH} \subseteq \textbf{P}^{\#\textbf{P}} \subseteq \textbf{P}^{\textbf{NP}}/$poly. This implies $\mathbf{\Sigma}_2 \subseteq \textbf{PH} \subseteq \textbf{P}^{\textbf{NP}}/\poly \subseteq \mathbf{\Pi}_2/\poly$ which implies the collapse of the polynomial hierarchy by the Corollary 5.7 of \cite{cai2005competing}: $\mathbf{\Sigma}_2 \subseteq \mathbf{\Pi}_2/poly \implies \textbf{PH} \subseteq \mathbf{S}_2^{\mathbf{\Sigma}_2} \subseteq \textbf{ZPP}^{\mathbf{\Sigma}_3} \subseteq \mathbf{\Sigma}_4$; here $\mathbf{S}_2$ is called the symmetric alternation complexity class (see the \href{https://complexityzoo.net/Complexity_Zoo:S#s2p}{complexity zoo} for definition).

     Now we choose a family of polynomial size random quantum circuits $\mu$ such that $\mu$ is an approximate $2$-design and for any string $x$ for $> 0.1$ of $U \sim \mu$ approximating the output probability $p_{x,U} := |\bra x | U \ket 0|^2$ is hard for $\#$\textbf{P}. We use the algorithm by Bravyi and Gosset \cite{Bravyi_Gosset_2016} we referred to in Theorem \ref{thm:BG-TV} which samples from a circuit within $O(\delta)$ total variation distance in polynomial time, assuming the stabilizer rank of $T$ states are bounded by a polynomial. Taking the description of the approximate stabilizer decomposition of $\ket T^{\otimes n}$, this algorithm can be implemented in $\textbf{BPP}/\poly$. Using standard results based on Stockmeyer's approximate counting \cite{stockmeyer1983complexity} and the Paley-Zygmund anti-concentration bound (see for example Theorem 3 of \cite{harrow2023approximate}) this implies that there is a $\textbf{FBPP}^{\textbf{NP}}/$poly $=\textbf{FP}^{\textbf{NP}}/$poly algorithm to compute $\bra 0 U \ket 0$ for $1/8 - o(1)$ fraction of circuits from $\mu$. In the previous line, we have used Adleman-Bennet-Gill's \cite{adleman1978two, charles1981relative} $\textbf{BPP} \subseteq \textbf{P}/$poly.
     

\end{proof}

There have been many quantum proposals for demonstrating the hardness of sampling from quantum computers within total variation distance assuming plausible conjectures \cite{bremner2011classical, bouland2017complexity}. Some of these proposals were implemented recently in scales larger than what was available previously \cite{arute2019quantum, zhong2020quantum}.

\begin{remark}
Define the multiplicative distance between two quantum states with $n$ qubits with respect to stabilizer projectors as $d(\ket A, \ket B)_ {mult} := \max_{G \trianglelefteq \mathcal P_n} \max\pr*{\frac{\bra A \Pi_G \ket A}{\bra B \Pi_G \ket B}, \frac{\bra B \Pi_G \ket B}{\bra A \Pi_G \ket A}} - 1$, where the maximization is over Abelian subgroups of Pauli and $\Pi_G$ is the projector onto the subspace stabilized by $G$. If we define closeness between two quantum states according to this multiplicative distance, then we can place a super-polynomial lower bound on the approximate rank with respect to this measure assuming the same assumption of Theorem \ref{thm:perm-det}, i.e., $\textbf{P}^{\# \textbf{P}} \subseteq \textbf{P}/poly \implies \textbf{P}^{\# \textbf{P}} = \textbf{MA}$. To see this, we recall the first algorithm in \cite{Bravyi_Gosset_2016} where the authors give a Monte-Carlo procedure for estimating the amplitudes of quantum circuits with $t$ number of $T$ gates and polynomially many Clifford gates within relative error. In their procedure, they write the amplitude of any such circuit as a number proportional to $\bra{T} ^{\otimes t}\Pi_G \ket{T}^{\otimes t}$ for some stabilizer subgroup $G$ (see Equation~(9) of \cite{Bravyi_Gosset_2016}). As a result, if the approximate rank of the state $\ket \psi$ within multiplicative error is polynomial, then we repeat the procedure in \cite{Bravyi_Gosset_2016} assuming $\ket \psi = \ket{T}^{\otimes t}$ to get an estimation to a $\#$\textbf{P}-hard problem such as the gap of a Boolean function within relative error. This implies $\textbf{P}^{\#\textbf{P}} \subseteq \textbf{BPP}/\poly \subseteq \textbf{P}/\poly$ and hence $\textbf{P}^{\#\textbf{P}} = \textbf{MA}$ using \cite{babai1991non}.
\end{remark}

\subsection{Related works}
\label{sec:prev}
As we mentioned before, there have been several results achieving lower bounds on the stabilizer rank prior to this work. Here, we briefly review some of these results. The first work in this line of research was \cite{Bravyi_Smith_Smolin_2016} where the authors proved a lower bound of $\Omega(\sqrt{n})$ on the exact stabilizer rank of magic state with proof techniques similar to ours. More precisely, they constructed an arbitrary $n$-qubit quantum state with exact stabilizer rank $\Omega(n)$, which can be prepared by $O(n^2)$ $T$ gates and Clifford gates. Then, they used the quantum teleportation idea to show that the exact stabilizer rank of $O(n^2)$ magic states is at least $\Omega(n)$. The authors of \cite{Bravyi_2019} established an exponential lower bound in a restricted setting in which we only consider stabilizer states that are tensor products of $\ket{0}$ and $\ket{+}$ states. They used ultra-metric matrices machinery to characterize the Gram matrix of such stabilizer states. 

In \cite{Peleg2022lowerbounds} the authors proved a linear lower bound on the exact stabilizer rank by carefully investigating coefficients in the computational basis of any linear combination of $o(n)$ stabilizer states. In particular, they showed that there are two vectors in computation basis $\ket{x}$ and $\ket{x'}$ such that the number of ones in $x$ and $x'$ is different, but their corresponding coefficients are the same, so it cannot be associated with appropriate magic states. As a part of the same study, they also proved a lower bound of $\tilde \Omega(\sqrt{n})$ on the approximate stabilizer rank of magic states using tools from the polynomial method and an analysis of Boolean functions. 

Labib in ~\cite{Labib2022stabilizerrank} proved a linear lower bound on exact rank using tools from higher-order Fourier analysis. He showed that $\ket{T}^{\x n}$ has an exponentially small correlation with a class of functions, so-called ``quadratic non-classical phases'' defined in higher-order Fourier analysis, and are inherently connected to stabilizer states. Then, he showed that any function written as $o(n)$ of such functions cannot have such a small correlation with all quadratic non-classical phases. Finally, among other contributions, the authors of \cite{Lovitz2022newtechniques} almost re-derived the results of \cite{Peleg2022lowerbounds} using a modified version of a result in number theory on the subset-sum representation of a sequence of numbers with exponentially increasing subsequence. However, their lower bound on exact rank was $O(n/\log n)$ instead of $O(n)$, they handled both exact and approximate using similar approaches.

\subsection{Discussions and open questions}
\begin{enumerate}

    \item \textbf{Strengthening the bounds:}
    In this work, we provided a lower bound of $\frac{\Omega(n^2)}{\text{poly}\log n}$ on the approximate stabilizer rank of $\ket T^{\otimes n}$ or any state in the second Clifford hierarchy. We suspect that with more careful analysis, one can remove the $\poly\log(m)$ factor from our lower bound. We may also be able to strengthen our result to hold for the approximate rank of all magic states; right now, the bound works only for the $\ket{T}$ state (and its Clifford equivalents) and/or exact rank. Nevertheless, obtaining a super-quadratic lower bound using our approach would be much more challenging. Indeed, with our proof technique, any deterministic or randomized construction of quantum states with low ``non-Clifford complexity'' (defined in~Definition~\ref{def:tau}) but high approximate stabilizer rank is of interest (See also Conjecture \ref{conj:stab-complx}). However, at least for two natural classes of probability distributions over quantum states, i.e., the Haar measure and $t$-designs, it seems that the non-Clifford complexity of a ``typical'' quantum state grows at least quadratically with approximate stabilizer rank.
    
    \item \textbf{Proving an exponential lower bound:} 
    In our approach, we realize that the instances corresponding to states with high stabilizer rank and low circuit complexity correspond to an exponentially small fraction of quantum states. As a result, we believe that one needs to probe the structure of the stabilizer states more closely in order to make progress on this result. Building on the work of Labib \cite{Labib2022stabilizerrank}, we suggest the following sufficient condition to improve the bound on stabilizer rank to exponential.

\begin{conjecture}
Let $\ket \psi$ be a quantum state with stabilizer rank $r$, such that $F(\ket \psi) := \max_{s \in \Stab_n} |\langle s| \psi\rangle|^2 < 1/e^{\Omega(n)}$, then there exists a stabilizer state $\ket s$ such that $|\langle s | T\rangle^{\x n}|^2 \geq \frac 1 {\poly(r)}$.
\end{conjecture}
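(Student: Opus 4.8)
\emph{Towards a proof.}
The plan is to reduce the conjecture to a single well-conditioning statement about the Gram matrix of a minimal stabilizer decomposition; after that reduction, that statement is the whole game. Let $\ket{\psi}$ be as in the conjecture and fix a minimal decomposition $\ket{\psi} = \sum_{i=1}^{r} c_i \ket{s_i}$, so the $\ket{s_i}$ are linearly independent, $r = \chi(\ket{\psi})$, and the Gram matrix $G \in \C^{r\times r}$ with $G_{ij} = \langle s_i | s_j\rangle$ is positive definite. Put $\beta_i \eqdef \langle s_i | \psi\rangle$, so that $\beta = Gc$ and $1 = \langle\psi|\psi\rangle = c^\dagger G c = \beta^\dagger G^{-1}\beta \leq \norm{\beta}_2^2/\lambda_{\min}(G)$; hence
\[
F(\ket{\psi}) \geq \max_i \abs{\langle s_i | \psi\rangle}^2 = \norm{\beta}_\infty^2 \geq \frac{\norm{\beta}_2^2}{r} \geq \frac{\lambda_{\min}(G)}{r}.
\]
So it suffices to prove the following: \emph{for the Gram matrix $G$ of any minimal stabilizer decomposition, $\lambda_{\min}(G) \geq 1/\poly(r)$.} Granting this, every rank-$r$ state has $F(\ket{\psi}) \geq 1/\poly(r)$; combined with the hypothesis $F(\ket{\psi}) < e^{-\Omega(n)}$ this already forces $r = \chi(\ket{\psi}) = 2^{\Omega(n)}$, and since $F(\ket{T}^{\x n}) = 2^{-\Theta(n)}$ (see, e.g., \cite{Bravyi_Smith_Smolin_2016}) the closest stabilizer state to $\ket{T}^{\x n}$ then witnesses $\abs{\langle s | T^{\x n}\rangle}^2 = F(\ket{T}^{\x n}) \geq 1/\poly(r)$, as required; applied directly to $\ket{\psi} = \ket{T}^{\x n}$ it is exactly the sought exponential lower bound on $\chi(\ket{T}^{\x n})$.

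It remains to prove $\lambda_{\min}(G) \geq 1/\poly(r)$, and here the two elementary estimates fall short. The coherence bound $\lambda_{\min}(G) \geq 1 - (r-1)\max_{i\neq j}\abs{G_{ij}} \geq 1 - (r-1)/\sqrt{2}$ is vacuous already for $r \geq 2$. And although each overlap $G_{ij}$ is a normalized Gauss sum --- of modulus $0$ or $2^{-k/2}$ for a nonnegative integer $k$ governed by ranks of the associated quadratic forms, and lying in $2^{-O(n)}\Z[\zeta_8]$ --- so that a near-linear-dependence ``too close'' to an exact one would contradict minimality of the decomposition, clearing denominators costs a factor $2^{\Theta(n)}$ and the resulting determinant bound is $n$-dependent and far too weak to feed back through the reduction. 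A genuinely structural argument, in the spirit of Labib \cite{Labib2022stabilizerrank}, appears to be required. Writing $\lambda_{\min}(G) = \min_{\norm{v}_2 = 1}\norm{\sum_i v_i \ket{s_i}}_2^2$, the quantity measures how nearly a balanced combination of $r$ stabilizer states --- each, on its support $A_i \subseteq \F^n$, a quadratic phase --- can cancel; by the Gauss-sum formula a near-cancellation forces a large ``cluster'' of the $\ket{s_i}$ whose pairwise differences are low-rank quadratic forms on a common subspace, and the goal would be to show such a cluster is nevertheless $\ell_2$-well-conditioned, or else contains an exact linear relation (contradicting minimality). There is some cause for optimism: heavily clustered families can be perfectly conditioned --- for instance the $2^n - 1$ stabilizer states $\tfrac{1}{\sqrt{2}}(\ket{0^n} + \ket{v})$, $v \neq 0$, have all pairwise overlaps $1/2$, yet any $r$ of them have $\lambda_{\min}(G) = 1/2$. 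The natural tool for making this quantitative is the $U^3$ inverse theorem over $\F^n$ which, unlike the higher Gowers inverse theorems, carries polynomial bounds, and would convert a small $\lambda_{\min}(G)$ into correlation of $\ket{T}^{\x n}$ with a single quadratic phase at a loss of only $\poly(r)$ --- exactly the slack the conjecture allows.

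The main obstacle is precisely the inequality $\lambda_{\min}(G) \geq 1/\poly(r)$, and I do not think it is even known to be true: we know of no rank-$r$ state with $F$ exponentially small in a way that would violate it, but none is known to be impossible either. Proving it seems to require new structural input, of which the two most plausible forms are (i) a structure theorem asserting that a minimal decomposition of a state with exponentially small stabilizer fidelity cannot be near-linearly-dependent --- which demands control over the joint geometry of many stabilizer states rather than merely their pairwise overlaps --- and (ii) exploiting the tensor-power structure of $\ket{T}^{\x n}$, echoing the remark in our proof sketch that even the teleportation step of the main result relies on $\ket{T}$-specific structure: if a minimal decomposition of $\ket{T}^{\x n}$ inherited approximate product structure, its Gram matrix would factorize approximately and $\lambda_{\min}(G)$ would become tractable. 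Of these, I would expect (ii) to be the more promising point of departure.
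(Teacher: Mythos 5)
The statement you are working on is presented in the paper as an open conjecture: the paper gives no proof of it, only surrounding context (notably the remark that the best bound one can currently prove on the largest stabilizer overlap of a rank-$r$ state is $1/4^r$, which is exactly why the conjecture is raised as a sufficient condition for exponential rank lower bounds). Your write-up, as you yourself say, is not a proof either, so the relevant question is whether your reduction genuinely narrows the problem. The reduction step itself is sound: for a minimal decomposition $\ket{\psi}=\sum_{i=1}^r c_i\ket{s_i}$ with Gram matrix $G$ and $\beta_i=\langle s_i|\psi\rangle$, the chain $1=\beta^\dagger G^{-1}\beta\le\norm{\beta}_2^2/\lambda_{\min}(G)$ and $F(\ket{\psi})\ge\norm{\beta}_2^2/r\ge\lambda_{\min}(G)/r$ is correct, and your example of the states $\tfrac{1}{\sqrt 2}(\ket{0^n}+\ket{v})$ with $\lambda_{\min}(G)=1/2$ is computed correctly. (Minor point: the fidelity $F(\ket{T}^{\x n})=\cos(\pi/8)^{2n}$ is from the low-rank stabilizer decompositions paper of Bravyi et al., not Bravyi--Smith--Smolin.)

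The genuine gap is the inequality you reduce everything to, $\lambda_{\min}(G)\ge 1/\poly(r)$ for every minimal stabilizer decomposition, which is left entirely unproven and is not a lemma-sized hole: it is a uniform well-conditioning claim about arbitrary linearly independent families of stabilizer states that is at least as strong as the conjecture itself, since it would give $F(\ket{\psi})\ge 1/\poly(\chi(\ket{\psi}))$ for \emph{every} state, whereas the conjecture only asks for a conclusion under the hypothesis of exponentially small stabilizer fidelity. The paper's own observation that the best provable bound is $F\ge 1/4^r$ indicates that known techniques yield only exponentially-in-$r$ conditioning, so your reduction relocates the difficulty rather than resolving it, and you offer no evidence that the relocated statement is true (you note yourself it is not known). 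The directions you sketch --- the polynomially effective $U^3$ inverse theorem and exploiting the tensor-power structure of $\ket{T}^{\x n}$ --- are reasonable and close in spirit to the Labib-style route the paper itself points to, but as written the argument establishes nothing beyond the unconditional bound $F(\ket{\psi})\ge\lambda_{\min}(G)/r$, and the conjecture remains open.
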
 

For instance, we know that $F(\ket T^n) = \cos(\pi/8)^{2n}$ \cite{Bravyi_2019}; however, the best bound we can prove for the largest overlap is $1/4^r$, which fails at giving a super-linear lower bound. Our intuition is that to improve this bound, we need to find sharp bounds on the geometry of stabilizer states, e.g., given a stabilizer state, find how many stabilizer states there are that have at least $1 - \epsilon$ fidelity with that stabilizer state. 

Another venue for going beyond quadratic lower bounds is by lower bounding the stabilizer rank of $\ket \psi \otimes \ket \phi$ when $\ket \psi$ and $\ket \phi$ are $n/2$ qubit states sampled from the Haar measure. In particular, if we can show that $\chi_\delta (\ket \psi \otimes \ket \phi) > (\chi_\delta (\ket \psi) \chi_\delta(\ket \phi))^{1/2 + c}$ for some constant $c > 0$ and $\delta = 1/2^{\poly(n)}$, 
then we can improve the quadratic lower bound on the stabilizer rank to possibly even exponential. The intuition is that in order to prepare $\ket \psi \otimes \ket \phi$ we quadratically have fewer number of $\ket T$ states than a $n$ qubit state sampled from the Haar measure. We note that we can show $\chi_\delta (\ket \psi \otimes \ket \phi) \approx \chi_\delta (\ket \psi)  \chi_\delta(\ket \phi)$ for $\delta = 1/2^{2^{\Omega(n)}}$ which is not sufficient for our purpose.

\item \textbf{Strong lower bounds on exact rank from weak lower bounds on approximate ranks:} Another interesting question is whether our lower bound on approximate stabilizer rank has any implications for exact stabilizer rank. Lemma~\ref{lem:gap} shows that in general, we cannot get a super-quadratic lower bound on exact stabilizer rank from only knowing that the approximate stabilizer rank is $\Omega(m^2 /\poly\log(m))$. However, we might be able to use some additional structures of magic states to improve the lower bound. 

\item \textbf{Complexity theoretic connections:} A natural open question is whether either of the ideas used in this paper can be utilized to prove the quadratic uncertainty principle for the AND function; see Section \ref{sec:quad-uncertainty}. Our main theorem implies a lower bound on the decomposition of the Boolean function $M(x) = 1$ iff $|x| = 0 \mod 8$. Rather surprisingly, this result can be viewed as a lower bound on a classical problem from an upper bound on a quantum state synthesis problem. Understanding the full capabilities of this approach for classical problems is an interesting future direction.

As we discussed in Section \ref{sec:cond} since the amplitudes of quantum circuits encode arbitrary gap\textbf{P} problems, a bound on the exact rank provides a way of decomposing an arbitrary gap\textbf{P} function into a sum of polynomially many computable functions. We know that sampling one bit from a Clifford computation can be done using a $\oplus$\textbf{L} machine. If we can show that a strong simulation of Clifford quantum circuits is doable in a complexity class like gap\textbf{L}, then a polynomial upper bound on the stabilizer rank can be translated as the number of determinant computations needed in order to compute the permanent. The same applies to approximate rank; the approximation parameter is translated into an additive error in the above decomposition. In the field of enumerative combinatorics, a decomposition of the permanent into a summation over efficiently computable structures is called a permanent identity; many such identities are known e.g., Glynn, Ryser, MacMahon (See \cite{chabaud2022quantum}).


For the approximate rank, as we discussed in Section~\ref{sec:cond}, a lower bound of $r$ on the approximate rank of a magic state implies sampling within total variation distance $O(\delta)$ from quantum circuits in time proportional to $\poly(n) \cdot r$, and assuming conjectures about hardness of approximate counting for specific functions implies a super-polynomial lower bound on the approximate rank assuming the collapse of polynomial hierarchy \cite{AA10, bouland2017complexity, bremner2011classical}. Can we improve this result by basing the lower bound merely on the non-collapse of the polynomial hierarchy? We remark that unlike results such as \cite{AA10, AaronsonChen, bouland2017complexity, bremner2011classical} we can allow error correction and distributions that are not necessarily from random constrained circuits. 
\end{enumerate}

\section{Preliminaries}
Let $\F$ be the finite field of order two. 
$\F^n$ is a vector space over $\F$. 
An affine subspace of $\F^n$ is a linear subspace shifted by an arbitrary vector. 
A quadratic function over $\F^n$ is of the form $x\mapsto x^TAx + a^Tx$ where $T$ is an $n\times n$ matrix and $a\in \F^n$. 
The Hilbert space corresponding to an $n$-qubit system is $(\C^2)^{\otimes n}$. 
We identify the computational basis for this Hilbert space by elements of $x\in\F^n$. Let $I_2$ be the identity function on $\C^2$. Let $\mathcal{H}$ be a finite-dimensional Hilbert space in the following definitions. 
Let $U(\mathcal{H})$ denote all unitaries acting on $\mathcal{H}$. 
Let $\Proj{\mathcal{H}, M}$ denote all the orthogonal projections acting on $\mathcal{H}$ and rank at most $M$. For a linear operator $A:\mathcal{H} \to \mathcal{H}$, $\norm{A}$ denotes the operator norm of $A$ defined as
\begin{align}
    \|A\| := \sup_{\ket{\phi}\in\mathcal{H}\setminus \set{0}} \frac{\norm{A\ket{\phi}}}{\norm{\ket{\phi}}}.
\end{align}
Is the largest singular value of $A$ and is the same as the infinite Schatten norm. Here, by $\|\ket v\|$ we mean the $2$-norm of the vector $\ket v$. We also denote the trace-norm of $A$ by $\norm{A}_1 = \tr{\sqrt{A A^\dagger}}$.
A \emph{quantum channel} is a linear super-operator that is trace-preserving and completely positive. The diamond distance between two quantum channels $\Phi$ and $\Psi$ for a Hilbert space $\mathcal{H}$ is defined as
\begin{align}
    \sup_{d\geq 0}\sup_{ X\in \mathcal{H}\x \C^d} \frac{\norm{(\Phi\x\textnormal{id}_d(X)-\Psi\x\textnormal{id}_d(X))}_1}{\norm{X}_1},
\end{align}
where $\textnormal{id}_d$ is the identity channel on $\C^d$.

\subsection{Quantum circuits}
A quantum circuit on $n$ qubits is a unitary operation over $(\C^2)^{\otimes n}$. 
In most cases, we represent a quantum circuit as a product of $m$ unitaries $V_1, \cdots, V_m$ where each $V_i$ is a unitary acting on a subset of qubits (typically of size one, two, or three). 
The unitaries acting on a smaller number of qubits are called quantum gates, and the set of all allowed gates is called a gate set. 
A gate set is \emph{universal} if one can approximate any unitary on $n$ qubits within an arbitrary error using a sequence of gates from the gate set. We know that the gate set $\set{H, CNOT, S, T}$ is universal where
\begin{align}
    H &= \frac{1}{\sqrt{2}} \begin{bmatrix}1&1\\1&-1
    \end{bmatrix}\\
    S &=\begin{bmatrix}1&0\\0&i
    \end{bmatrix}\\
    CNOT &=\begin{bmatrix}1&0&0&0\\0&1&0&0\\ 0&0&0&1\\ 0&0&1&0\\
    \end{bmatrix}\\
    T &= \begin{bmatrix}
        1&0\\0&{e^{i\pi/4}}
    \end{bmatrix}.
\end{align}

Finally, we define a notion of ``non-Clifford complexity,'' quantifying how many non-Clifford resources we need to prepare a quantum state. 
\begin{definition}
\label{def:tau}
Let $\ket \phi$ be an $n$ qubit state. $\tau_\epsilon(\ket{\phi})$ is the smallest $k$ such that there exists a quantum circuit $V$, acting on $n + \lambda$ qubits for $\lambda >0$, consisting of arbitrary number of Clifford gates and $k$ number of  $T$ gates such that $\norm{\ket{\phi} \ket {0^{\lambda}} - V(\ket{0^n} \ket{0^{\lambda}}) }\leq \epsilon$.
\end{definition}

\subsection{Stabilizer formalism}
\label{sec:stab-form}
We briefly review stabilizer formalism here (see \cite{Nielsen_2012} for details). Let $\mathcal{P}_n$ be the Pauli group acting on $n$ qubits. The Clifford group on $n$ qubits, denoted by $\mathcal{C}_n$, is the normalizer of $\mathcal{P}_n$ in the unitary group modulo a phase. We denote by $\Stab_n$  the set of all stabilizer states, which can be written of the form \cite{VanDenNest_2010}
\begin{align}
    \frac{1}{\sqrt{|A|} }\sum_{x\in A} i^{\ell(x)}(-1)^{Q(x)} \ket{x},
\end{align}
where $A\subset \F^n$ is  an affine subspace of $\F^n$ ($A = \{L y + v: y \in \F^m\}$, where $L \in \F^{n \times m}$, $V \in \F^n$), $\ell:\F^n \to \F$ is a linear function, and $Q:\F^n\to \F$ is a quadratic function. Here are some known facts about stabilizer states.
\begin{lemma}
\label{lm:stabilizer}
The following are true
    \begin{enumerate}
    \item For any Clifford unitrary $C\in\mathcal{C}_n$ and any stabilizer state $\ket{s}\in\Stab_n$, we have $C\ket{s}\in\Stab_n$.
    \item Let $\ket{s}$ be an $n$-qubit stabilizer state and $b\in\F$. Then,  $I_2^{\otimes n-1} \otimes \bra{b} \ket{s}$ is either zero or proportional to a state in $\Stab_{n-1}$.
    \item $\abs{\Stab_n}\leq e^{0.54 n^2}$ for $n\geq 6$ \cite{WinterMagic2022}.
\end{enumerate}
\end{lemma}

We next define the (approximate) stabilizer rank of a quantum state. Let $\ket{\phi}$ be an $n$-qubit state. We denote by $\chi(\ket{\phi})$ the stabilizer rank of $\ket{\phi}$ defined as the minimum $M>0$ such that there exists $c_1, \cdots, c_M \in \C$ and $\ket{s_1}, \cdots, \ket{s_M}\in \Stab_n$ such that $\ket{\phi} = \sum_{i=1}^M c_i \ket{s_i}$. We also define $\chi_\delta(\ket{\phi})$ as 
\begin{align}
    \min_{\ket{\psi}: \norm{\ket{\phi}-\ket{\psi}}\leq \delta} \chi(\ket{\psi}),
\end{align}
where $\ket{\psi}$ does not need to be normalized. We remark that the stabilizer rank is operationally relevant to the cost of classical simulation. We may choose other measures of closeness to the set of stabilizer states which we will review in Appendix \ref{a:measures}.

\subsection{Haar measure and $t$-Designs}
Let $\mathcal{H}$ be a finite-dimensional Hilbert space. The Haar measure over $\mathcal{H}$ is the unique probability distribution on the unit vectors in $\mathcal{H}$ invariant under the action of any unitary in $U(\mathcal{H})$. We shall use the following concentration of measure result for the Haar measure over finite-dimensional Hilbert spaces known as L\'evy's theorem \cite[Theorem 7.37]{watrous2018theory}.
\begin{theorem} [L\'evy's concentration]
    \label{th-Levy}
    Let $\mathcal{H}$ be a $d$-dimensional Hilbert space and $\ket{\phi}$ be distributed according to the Haar measure in $\mathcal{H}$. For any $\kappa$-Lipschitz function $f$ from the unit sphere in $\mathcal{H}$ to $\R$ and $\epsilon > 0$, we have 
    \begin{align}
        \P{f(\ket{\phi}) - \E{f(\ket{\phi})} \geq \epsilon} \leq 2e^{-\frac{\epsilon^2d}{25\pi\kappa^2}}
    \end{align}
\end{theorem}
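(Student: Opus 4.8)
The plan is to derive this as the standard concentration-of-measure phenomenon on the sphere, via the spherical isoperimetric inequality. Viewing a Hilbert space of complex dimension $d$ as a real space of dimension $2d$, the set of unit vectors is the real sphere $S^{2d-1}$ and the Haar measure on it is exactly the rotation-invariant probability measure; a concentration bound on $S^{2d-1}$ with $2d-1$ in the exponent immediately implies the stated bound, which only asks for $d$. So it suffices to prove: for any $\kappa$-Lipschitz $f : S^{N} \to \R$ (with respect to the geodesic, or equivalently up to constants the chordal, metric), $\P{f(\ket{\phi}) - \E{f(\ket{\phi})} \geq \epsilon} \leq 2 e^{-c \epsilon^2 N/\kappa^2}$ for an explicit constant $c$.

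First I would invoke the L\'evy--Schmidt spherical isoperimetric inequality: among all Borel $A \subseteq S^{N}$ of a fixed measure, the $\epsilon$-neighborhood $A_\epsilon := \set{x : \textnormal{dist}(x, A) \leq \epsilon}$ has minimal measure when $A$ is a geodesic cap. I would cite this rather than reprove it, since its proof via two-point symmetrization is lengthy. Specializing to a hemisphere $A$ with measure $1/2$, a direct estimate of the spherical-cap volume integral gives $\mu(A_\epsilon) \geq 1 - 2 e^{-\epsilon^2 N / (2\pi)}$ (or a comparable bound with an explicit constant); combined with isoperimetry, \emph{every} $A$ with $\mu(A) \geq 1/2$ satisfies the same estimate. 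Next, passing from sets to functions: letting $m$ be a median of $f$ and applying this to $A = \set{x : f(x) \leq m}$, Lipschitzness gives $A_\epsilon \subseteq \set{x : f(x) \leq m + \kappa \epsilon}$, hence $\P{f(\ket{\phi}) > m + \kappa\epsilon} \leq 2 e^{-\epsilon^2 N/(2\pi\kappa^2)}$, and symmetrically for the lower tail. Finally, integrating this two-sided tail estimate bounds $\abs{m - \E{f(\ket{\phi})}} = O(\kappa/\sqrt N)$, and absorbing this gap into a slightly worse constant yields the mean-centered inequality; tracking each step carefully produces the quoted $25\pi$.

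\textbf{Main obstacle.} The qualitative statement is immediate once one has isoperimetry, so the genuinely delicate part is the numerical constant — the cap-volume estimate and the median-to-mean correction each lose a factor, and landing exactly on $\epsilon^2 d/(25\pi\kappa^2)$ requires care. An alternative that sidesteps isoperimetry is to write $\ket{\phi} = g/\norm{g}$ for a standard real Gaussian vector $g \in \R^N$, apply Gaussian concentration (via the Gaussian log-Sobolev inequality and Herbst's argument) to a globally Lipschitz extension of $x \mapsto f(x/\norm{x})$, and control the rare event that $\norm{g}$ is far from $\sqrt{N}$; this avoids isoperimetry but is fiddlier for constants because $x \mapsto f(x/\norm{x})$ is not Lipschitz near the origin. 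In the write-up I would simply cite \cite[Theorem 7.37]{watrous2018theory} for the precise constant.
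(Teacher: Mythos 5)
The paper does not prove this statement at all: it is imported as a background fact cited directly to \cite[Theorem 7.37]{watrous2018theory}, so your plan---sketching the standard L\'evy--Schmidt isoperimetry route (sphere $S^{2d-1}$, cap estimate, median-to-mean correction, or the Gaussian alternative) and then deferring the precise constant to that same reference---is correct in outline and effectively coincides with what the paper does. The only genuinely delicate point, landing exactly on the constant $25\pi$, is the part you explicitly resolve by citing \cite[Theorem 7.37]{watrous2018theory}, which is exactly the paper's move.
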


A function is $\kappa$-Lipschitz if $|f(\ket \phi) - f(\ket \psi)| \leq \kappa \|\ket \psi-\ket \phi\|$. We use the following property of Haar measure, whose proof is similar to that of \cite[Lemma 7.2]{watrous2018theory}.
\begin{lemma}
\label{lm:haar-expectation}
    Let $\mathcal{H}$ be a $d$-dimensional Hilbert space and $P$ be a projection on $\mathcal{H}$ of rank $M$. Let $\ket{\phi}$ be distributed according to the Haar measure in $\mathcal{H}$. Then, 
    \begin{align}
        \tr\pr{P^{\x t} \E{(\ketbra{\phi}{\phi})^{\x t}}} = \frac{\binom{M + t - 1}{t}}{\binom{d + t - 1}{t}} = \frac{(M+t -1) \cdots (M+1)M}{(d+t-1)\cdots (d+1)d}.
    \end{align}
\end{lemma}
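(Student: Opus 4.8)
The plan is to reduce the statement to two standard ingredients: the identity $\E{(\ketbra{\phi}{\phi})^{\otimes t}} = \Pi_{\mathrm{sym}}/\binom{d+t-1}{t}$, where $\Pi_{\mathrm{sym}}$ is the orthogonal projection onto the symmetric subspace $\mathrm{Sym}^t(\mathcal{H})\subseteq \mathcal{H}^{\otimes t}$, together with an explicit evaluation of $\tr(P^{\otimes t}\Pi_{\mathrm{sym}})$. Writing $\rho_t := \E{(\ketbra{\phi}{\phi})^{\otimes t}}$, the result then follows immediately by linearity of the trace once both quantities are known.

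First I would establish the moment formula for $\rho_t$, following the same strategy as \cite[Lemma 7.2]{watrous2018theory}. By invariance of the Haar measure, $\ket{\phi}$ and $U\ket{\phi}$ have the same law for every $U\in\U{\mathcal{H}}$, so $U^{\otimes t}\rho_t (U^{\otimes t})^\dagger = \rho_t$; that is, $\rho_t$ commutes with the representation $U\mapsto U^{\otimes t}$. On the other hand, $(\ketbra{\phi}{\phi})^{\otimes t} = \ket{\phi}^{\otimes t}\bra{\phi}^{\otimes t}$ is supported on $\mathrm{Sym}^t(\mathcal{H})$, hence so is $\rho_t$, i.e. $\Pi_{\mathrm{sym}}\rho_t\Pi_{\mathrm{sym}} = \rho_t$. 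Since the symmetric subspace carries an irreducible representation of $\U{\mathcal{H}}$, Schur's lemma forces $\rho_t$ to be a scalar multiple of $\Pi_{\mathrm{sym}}$. The scalar is pinned down by the normalization $\tr(\rho_t) = \E{\tr((\ketbra{\phi}{\phi})^{\otimes t})} = 1$ together with $\tr(\Pi_{\mathrm{sym}}) = \dim\mathrm{Sym}^t(\mathcal{H}) = \binom{d+t-1}{t}$, giving $\rho_t = \Pi_{\mathrm{sym}}/\binom{d+t-1}{t}$.

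Next I would compute $\tr(P^{\otimes t}\Pi_{\mathrm{sym}})$. Using the expansion $\Pi_{\mathrm{sym}} = \tfrac{1}{t!}\sum_{\pi\in S_t} W_\pi$, where $W_\pi$ permutes the $t$ tensor factors, and the cycle-decomposition rule $\tr(P^{\otimes t} W_\pi) = \prod_{c}\tr(P^{\abs{c}})$ with the product over the cycles $c$ of $\pi$, I would invoke that $P$ is a rank-$M$ projection, so $\tr(P^k) = \tr(P) = M$ for every $k\geq 1$. Hence $\tr(P^{\otimes t}W_\pi) = M^{c(\pi)}$ where $c(\pi)$ is the number of cycles of $\pi$, and the classical cycle-index identity $\sum_{\pi\in S_t} x^{c(\pi)} = x(x+1)\cdots(x+t-1)$ yields $\tr(P^{\otimes t}\Pi_{\mathrm{sym}}) = \frac{M(M+1)\cdots(M+t-1)}{t!} = \binom{M+t-1}{t}$. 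Equivalently, since $P^{\otimes t}$ commutes with $\Pi_{\mathrm{sym}}$, their product is the projection onto $\mathrm{Sym}^t(\mathrm{im}\,P)$, whose dimension is $\binom{M+t-1}{t}$ because $\mathrm{im}\,P$ is $M$-dimensional.

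Combining the two steps gives $\tr(P^{\otimes t}\rho_t) = \binom{M+t-1}{t}/\binom{d+t-1}{t}$, matching the stated closed form. I expect the only genuinely delicate point to be the rigorous justification of the first step, namely the representation-theoretic identification of $\rho_t$ with a multiple of $\Pi_{\mathrm{sym}}$ (irreducibility of the symmetric subspace under $\U{\mathcal{H}}$, or equivalently Schur--Weyl duality); once that is in place, the remainder is routine combinatorics via the cycle-index formula for $S_t$.
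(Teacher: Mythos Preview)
Your proposal is correct and follows essentially the same approach the paper has in mind: the paper does not spell out a proof but simply points to \cite[Lemma~7.2]{watrous2018theory}, which is precisely the identity $\E{(\ketbra{\phi}{\phi})^{\otimes t}} = \Pi_{\mathrm{sym}}/\binom{d+t-1}{t}$ you invoke in your first step. Your second step---computing $\tr(P^{\otimes t}\Pi_{\mathrm{sym}})$ either via the cycle-index identity or, more directly, by noting that $P^{\otimes t}\Pi_{\mathrm{sym}}$ is the projection onto $\mathrm{Sym}^t(\mathrm{im}\,P)$ of dimension $\binom{M+t-1}{t}$---completes the argument cleanly and is exactly the kind of routine extension the paper's ``similar to'' remark leaves to the reader.
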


The Haar measure over $U(\mathcal{H})$ is the unique probability distribution invariant under left or right multiplication by any unitary in $U(\mathcal{H})$.  While preparing a unitary according to the Haar measure requires an exponential amount of resources \cite{knill1995approximation}, unitary $t$-designs, which we formally define here, mimic the Haar measure up to the $t$-th moment and can be efficiently prepared for $t$ small enough. 
\begin{definition}
\label{def:t-design}
    Denote for  a distribution $\nu$ over $U(\mathcal{H})$
    \begin{align}
        M_t^{\nu}(\rho) \eqdef \int U^{\otimes t} \rho (U^\dagger)^{\otimes t} d\nu,
    \end{align}
    which is a quantum channel. The distribution $\nu$ is called an $\epsilon$-approximate $t$-design if 
    \begin{align}
        \norm{M_t^{\nu} - M_t^{\textnormal{Haar}}}_{\diamond} \leq \epsilon.
    \end{align}
\end{definition}
There are several constructions  of approximate $t$-designs \cite{Harrow_Low_2009, Nakata_Hirche_Koashi_Winter_2017, Haferkamp2022randomquantum, Roth_2023, Harrow_Horodecki_2016}. We state here two constructions: one that has a close connection to stabilizer formalism and one using random circuits.
\begin{theorem}[\cite{Roth_2023}]
\label{th:t-design-exsitence}
    There exists constant $C_1$ and $C_2$ such that for all $\epsilon$, $n$, and $t$ with $n \geq C_2 t^2$ the following holds. There exists an $\epsilon$-approximate $t$-design $\nu$ such that any unitary in the support of $\nu$ consists of Clifford gates and at most $C_1 \log^2(t)(t^4 + t\log(1/\epsilon))$ number of $T$ gates. 
\end{theorem}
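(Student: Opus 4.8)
This statement is invoked verbatim from \cite{Roth_2023} and we use it as a black box; here we only outline the structure of such an argument. The construction is an alternating circuit
\begin{align}
    U = C_L\, G_L\, C_{L-1}\, G_{L-1}\cdots C_1\, G_1\, C_0,
\end{align}
where each $C_i$ is an independent uniformly random $n$-qubit Clifford and each $G_i$ is a fixed ``magic'' block built from a constant number of $T$ gates on designated qubits. Writing $M_t^\nu$ for the $t$-th moment operator of the resulting distribution $\nu$, the goal is to bound $\norm{M_t^\nu - M_t^{\textnormal{Haar}}}_\diamond$. Since $M_t^\nu$ factors as an $L$-fold product of the single-layer superoperator $R_t := M_t^{\textnormal{Cliff}}\circ \mathcal{G}$ (with $\mathcal{G}(\cdot) = G^{\otimes t}(\cdot)(G^\dagger)^{\otimes t}$), up to one final Clifford average, and since $M_t^{\textnormal{Haar}}$ is a fixed point of $R_t$, it suffices to prove a spectral gap $\norm{R_t - M_t^{\textnormal{Haar}}}_\infty \leq 1 - \Delta$ for an explicitly bounded $\Delta = \Delta(t)$, and then iterate.

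The heart of the proof is lower bounding $\Delta$. The $n$-qubit Clifford group is an exact unitary $3$-design, and more importantly the commutant of its $t$-fold tensor action --- the \emph{stabilizer commutant} --- admits an explicit description, by stabilizer-group operators together with the permutation operators that already span the Haar commutant; consequently $M_t^{\textnormal{Cliff}}$ is precisely the projector onto this commutant. A single $T$-gate block therefore only has to supply contraction on the orthogonal complement of the symmetric subspace \emph{inside} the stabilizer commutant --- a finite-dimensional space whose structure, once $n \geq C_2 t^2$, no longer depends on $n$. A direct calculation on this space shows that conjugating by $G$ and re-averaging over Cliffords strictly shrinks that complement, with $\Delta^{-1}$ growing only polynomially in $t$ up to $\poly\log t$ factors. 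The hypothesis $n \geq C_2 t^2$ is exactly what makes the stabilizer commutant attain its generic dimension, so that this finite calculation --- and hence the gap bound --- is valid uniformly in $n$.

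Finally one assembles the pieces: converting the operator-norm gap into a diamond-norm bound costs a factor polynomial in the dimension of the stabilizer commutant, so $\norm{M_t^\nu - M_t^{\textnormal{Haar}}}_\diamond \leq 2^{O(t^2)}\,(1-\Delta)^L$, and choosing $L$ of order $\Delta^{-1}$ times a $\poly(t) + \log(1/\epsilon)$ term drives this below $\epsilon$. Since each layer contributes $O(1)$ $T$ gates, substituting the quantitative gap bound yields the claimed $T$-count $C_1\log^2(t)\,(t^4 + t\log(1/\epsilon))$. The main obstacle is the spectral-gap estimate: quantifying how much a single $T$ gate ``mixes'' inside the part of the $t$-th moment space on which the Clifford group alone is blind requires the representation-theoretic description of the stabilizer commutant and a careful contraction argument there, while everything else is bookkeeping.
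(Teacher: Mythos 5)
Your treatment matches the paper exactly: Theorem~\ref{th:t-design-exsitence} is stated there purely as an imported result from \cite{Roth_2023}, with no proof given, so using it as a black box is precisely what the authors do. Your outline of the cited construction (random Cliffords interleaved with constant-size $T$-gate blocks, a spectral gap on the stabilizer commutant uniform in $n$ once $n \geq C_2 t^2$, and a diamond-norm conversion paying a factor exponential in $O(t^2)$) faithfully reflects how that reference establishes the bound.
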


\begin{theorem} [\cite{Haferkamp2022randomquantum}]
\label{th:t-design-exsitence2}
    For $n\geq 2\log(4t) + 1.5\sqrt{\log(4t)}$, there exists an $\epsilon$-approximate $t$-design for $n$ qubits such that  each unitary in the support of $\nu$ is composed of 
    \begin{align}
        C n\ln^5(t) t^{4+3\frac{1}{\sqrt{\log(t)}}}(2nt + \log(1/\epsilon)),
    \end{align}
    two-qubit gates for absolute constant $C>0$.
\end{theorem}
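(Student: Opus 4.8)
The plan is to realize $\nu$ as the output of a local random circuit and to control its $t$-th moment operator through a single spectral gap. Fix the one-dimensional brickwork architecture: odd layers apply independent Haar-random two-qubit gates to the pairs $(1,2),(3,4),\dots$ and even layers to $(2,3),(4,5),\dots$, and let $\nu$ be the law of the composition of $\ell$ such layers. By \cref{def:t-design} it suffices to bound $\norm{M_t^{\nu}-M_t^{\textnormal{Haar}}}_{\diamond}$. Since the layers are i.i.d., $M_t^{\nu}=P^{\ell}$, where $P$ is the moment operator of one brickwork period acting on the $2^{2nt}$-dimensional $t$-fold doubled space; $P$ is a contraction with common fixed space $\mathcal F\eqdef\ker(I-P)$, and the whole argument reduces to the gap $\Delta\eqdef 1-\lambda$, where $\lambda$ is the norm of $P$ restricted to $\mathcal F^{\perp}$.

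First I would identify $\mathcal F$. By Schur--Weyl duality the operators commuting with $V^{\x t}$ for every two-qubit gate $V$ are spanned by the permutations of the $t$ copies, and $\mathcal F$ is exactly the intersection of these local commutants over all edges, which for $2^{n}$ large enough equals the image of $M_t^{\textnormal{Haar}}$. Writing $P$ as a product of the local edge projectors exhibits $I-P$ as frustration-free with ground space $\mathcal F$, so the design quality is governed entirely by $\Delta$ through $\norm{P^{\ell}-M_t^{\textnormal{Haar}}}\leq(1-\Delta)^{\ell}$.

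The heart of the argument, and the main obstacle, is the quantitative bound $\Delta\geq\Omega\pr*{\tfrac{1}{\ln^{5}(t)\,t^{4+o(1)}}}$, \emph{uniform in} $n$. I would prove it in two stages. Stage one is a local estimate: restricted to two neighbouring sites, the relevant projector has a gap decaying only like $\ln^{5}(t)\,t^{-4-o(1)}$; this is where representation theory enters, through sharp eigenvalue bounds for the Gram matrix of the permutation operators $\set{R(\sigma)\mid\sigma\in S_t}$ on the two active copies, whose entries are $4^{\,\mathrm{cyc}(\sigma^{-1}\pi)}$ with $\mathrm{cyc}$ the number of cycles, since these eigenvalues measure how far a locally invariant vector can be from globally invariant. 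The improved $t^{4+o(1)}$ exponent over earlier work comes precisely from replacing crude operator-norm bounds on these overlaps with a finer spectral analysis. Stage two promotes this local gap to a gap of the whole brickwork period while keeping it independent of $n$; here I would invoke a one-dimensional frustration-free gap criterion of Nachtergaele martingale / detectability-lemma type, whose very point is that locality prevents the gap from closing as the chain grows.

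Finally I convert the gap into a gate count. Passing from $(1-\Delta)^{\ell}\leq e^{-\Delta\ell}$ in operator norm to the diamond norm costs a factor $D=2^{2nt}$ equal to the dimension of the doubled $t$-fold space (the standard bound $\norm{\Phi}_{\diamond}\leq D\,\norm{\Phi}$ for moment operators), so demanding $D\,e^{-\Delta\ell}\leq\epsilon$ forces $\ell=\Theta\pr*{\Delta^{-1}(2nt+\log(1/\epsilon))}$. Each brickwork period uses $O(n)$ two-qubit gates, for a total of $O\pr*{n\,\Delta^{-1}(2nt+\log(1/\epsilon))}=O\pr*{n\,\ln^{5}(t)\,t^{4+o(1)}(2nt+\log(1/\epsilon))}$, as claimed; the hypothesis $n\geq 2\log(4t)+1.5\sqrt{\log(4t)}$ is exactly what makes $2^{n}$ polynomially larger than $t$, so that the Schur--Weyl commutant is the full symmetric-group algebra and $\mathcal F$ coincides with the Haar-invariant space with no spurious near-invariant directions. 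The difficulty is concentrated in Stage one, since both the $\ln^{5}t$ prefactor and the $t^{4+o(1)}$ exponent are sensitive to the tightness of the permutation-overlap estimates.
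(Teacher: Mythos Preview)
The paper does not prove this theorem at all: it is quoted verbatim from \cite{Haferkamp2022randomquantum} and used only as a black box (in the proof of Proposition~\ref{th:poly-stab-rank}), so there is no ``paper's own proof'' to compare against. What you have written is a sketch of Haferkamp's argument in the cited reference rather than anything the present paper supplies.

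As a summary of \cite{Haferkamp2022randomquantum}, your outline is broadly accurate: the brickwork local random circuit, the reduction to the spectral gap $\Delta$ of the one-period moment operator via $M_t^{\nu}=P^{\ell}$, the identification of the fixed space through Schur--Weyl, the two-stage gap argument (a local two-site estimate followed by a Nachtergaele/Knabe-type promotion to an $n$-independent global gap), and the final conversion $\ell=\Theta(\Delta^{-1}(2nt+\log(1/\epsilon)))$ are all the right ingredients. Two small caveats. First, your description of Stage one is slightly off in emphasis: the $t^{4+o(1)}$ improvement in \cite{Haferkamp2022randomquantum} over the earlier $t^{9+o(1)}$ of \cite{Harrow_Horodecki_2016} comes not from a sharper permutation-Gram analysis per se but from a more careful recursion relating gaps at different subsystem sizes combined with an improved Knabe-type finite-size criterion; the permutation overlaps enter the base case. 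Second, the role of the hypothesis $n\geq 2\log(4t)+1.5\sqrt{\log(4t)}$ is a bit more delicate than ``$2^n$ polynomially larger than $t$'': it is the precise threshold at which the auxiliary bounds in Haferkamp's recursion become effective, not merely the linear-independence condition $2^n\geq t$ for the permutation operators.
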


We also state a conjecture on the optimal number of non-Clifford gates in any $t$-design when $t$ scales with $n$. The intuition behind this conjecture is that by \cite[Proposition~8]{Harrow_Horodecki_2016}, we need $\tilde \Omega(nt)$ gates to get a $t$-design for $n$-qubits, and we expect that most of these gates should be non-Clifford.
\begin{conjecture}
\label{con:converse-t-design}
Let $\nu$ be a distribution supported on quantum circuits with an arbitrary number of Clifford gates and $k$ number of $T$ gates. If $\nu$ is an $\epsilon$-approximate $t$-design for $t=\omega(1)$, then $k = \Omega(t)$.
\end{conjecture}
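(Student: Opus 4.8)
The natural arena is the pair of moment superoperators $M_t^{\nu}$ and $M_t^{\mathrm{Haar}}$, exploiting that for $t\ge 4$ the Clifford group is no longer a $t$-design. Write $\mathrm{Comm}(t)$ for the commutant of $\{C^{\otimes t}:C\in\mathcal C_n\}$ inside $\mathcal L((\C^2)^{\otimes nt})$, equivalently the image of the Clifford twirl $\Pi:=M_t^{\mathrm{Cliff}}$; for $n\ge t-1$ one has $\dim\mathrm{Comm}(t)=\prod_{j=0}^{t-2}(2^j+1)$, which strictly contains the Haar commutant $\mathrm{span}\{P_\pi:\pi\in S_t\}$ of dimension $t!$ once $t\ge 4$, the gap being $D_\perp:=\dim\mathrm{Comm}(t)-t!=e^{\Theta(t^2)}$. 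Let $\Pi_\perp$ be the orthogonal projection onto $\mathrm{Comm}(t)\ominus\mathrm{span}\{P_\pi:\pi\in S_t\}$. Since twirling by any distribution absorbs Haar-twirling on either side, $M_t^{\nu}-M_t^{\mathrm{Haar}}$ vanishes upon composition with $M_t^{\mathrm{Haar}}$; hence $\|\Pi M_t^{\nu}\Pi-M_t^{\mathrm{Haar}}\|_\diamond\le\epsilon$ for an $\epsilon$-approximate $t$-design, and in particular $\Pi_\perp M_t^{\nu}\Pi_\perp$ must be small in norm (of order $\epsilon$, up to dimension-dependent factors I will not track) because $\Pi_\perp M_t^{\mathrm{Haar}}\Pi_\perp=0$.

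The first substantial step is a reduction that eliminates all Clifford gates. Conjugating the interleaved Cliffords through the $T$ gates, any $U$ in the support of $\nu$ can be written as $U=\mathcal C\,R_{Q_k}(\pi/4)\cdots R_{Q_1}(\pi/4)$ with $\mathcal C\in\mathcal C_n$ and $Q_1,\dots,Q_k\in\mathcal P_n$, where $R_Q(\theta)=e^{-i\theta Q/2}$ (each factor is a $T$ gate conjugated by a Clifford, up to global phase; ancilla registers, if present, are handled by routine bookkeeping and I suppress them). Because the Clifford twirl is invariant under conjugation by $\mathcal C^{\otimes t}$, the Cliffords $\mathcal C$ drop out once we project onto $\mathrm{Comm}(t)$, giving
\begin{equation*}
\Pi M_t^{\nu}\Pi=\mathbb E_{\vec Q\sim\mu}\bigl[\Pi\,\mathcal T_{\vec Q}\,\Pi\bigr],\qquad\text{hence}\qquad\Pi_\perp M_t^{\nu}\Pi_\perp=\mathbb E_{\vec Q\sim\mu}\bigl[\Pi_\perp\mathcal T_{\vec Q}\Pi_\perp\bigr],
\end{equation*}
where $\mu$ is the law of the Pauli tuple $\vec Q=(Q_1,\dots,Q_k)$ induced by $\nu$ and $\mathcal T_{\vec Q}$ is conjugation by the unitary $\prod_{i=1}^{k}R_{Q_i}(\pi/4)^{\otimes t}$. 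The conjecture is therefore equivalent to the purely linear-algebraic assertion that there is no distribution $\mu$ over $k$-tuples of $n$-qubit Paulis with $k=o(t)$ for which $\mathbb E_{\vec Q\sim\mu}[\Pi_\perp\mathcal T_{\vec Q}\Pi_\perp]$ has small operator norm on the $D_\perp$-dimensional space $\mathrm{range}(\Pi_\perp)$.

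To prove that, expand each rotation, $R_Q(\pi/4)^{\otimes t}=\sum_{A\subseteq[t]}\cos^{t-|A|}(\pi/8)\,(-i\sin(\pi/8))^{|A|}\,Q^A$, where $Q^A$ places the Pauli $Q$ on the tensor blocks indexed by $A$; this exhibits $\mathcal T_{\vec Q}$ as a fixed identity component plus ``block-Pauli'' corrections, and the aim is to show the corrections, once averaged over the $o(t)$ rotation slots and over $\mu$, cannot cancel the identity component \emph{simultaneously} on the entire $e^{\Theta(t^2)}$-dimensional $\mathrm{range}(\Pi_\perp)$. Concretely I would first settle $t=4$, where the complement is $6$-dimensional and contains $R_4\propto\sum_{P\in\mathcal P_n}P^{\otimes4}$: it should suffice to verify that the matrix of $\mathcal T_{\vec Q}$ restricted to that $6$-dimensional space stays bounded away from being a reflection, uniformly over $\vec Q$, for $k$ below a fixed threshold, so that no convex combination of these matrices can vanish; one would then bootstrap across levels, using the inductive structure of the stochastic-Lagrangian operators to argue that each constant increment of $t$ forces at least one more non-Clifford rotation, producing a linear accumulation $k=\Omega(t)$. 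The main obstacle is exactly this last point: lower bounding $\min_\mu\|\mathbb E_{\vec Q\sim\mu}[\Pi_\perp\mathcal T_{\vec Q}\Pi_\perp]\|$, i.e.\ controlling the destructive interference an adversarially chosen $\mu$ can manufacture among the corrections, which demands a genuine understanding of how $k$ Pauli $\pi/4$-rotations permute and rescale the stochastic-Lagrangian basis of the Clifford commutant. A more robust but harder alternative bypasses the commutant entirely: show directly that any operator $\epsilon$-close to $M_t^{\mathrm{Haar}}$ requires $\ell_1$-weight $e^{\Omega(t^2)}$ when expanded in Clifford unitaries on $2nt$ qubits; together with the bound $(\sqrt2\cos(\pi/8))^{2kt}$ on the $\ell_1$-weight attainable by $U^{\otimes t}\otimes\bar U^{\otimes t}$ for a $k$-$T$-gate $U$ (obtained by expanding each $R_{Q_i}(\pi/4)$ as above), this again forces $k=\Omega(t)$.
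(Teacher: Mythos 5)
This statement is a \emph{conjecture} in the paper (Conjecture~\ref{con:converse-t-design}); the authors do not prove it and only offer intuition, namely that by \cite[Proposition~8]{Harrow_Horodecki_2016} any $t$-design needs $\tilde\Omega(nt)$ gates in total and one expects most of these to be non-Clifford. So there is no paper proof to match, and the question is whether your argument closes the conjecture on its own. It does not: what you have written is a program, and you yourself flag the missing core. The reduction to the Clifford-commutant picture is sensible (absorbing the Cliffords into Pauli $\pi/4$-rotations, projecting onto the part of $\mathrm{Comm}(t)$ orthogonal to the permutation span, and noting $\Pi_\perp M_t^{\mathrm{Haar}}\Pi_\perp=0$), but the statement you reduce to --- that no distribution $\mu$ over $k$-tuples of Pauli rotations with $k=o(t)$ can make $\mathbb E_{\vec Q\sim\mu}\bigl[\Pi_\perp\mathcal T_{\vec Q}\Pi_\perp\bigr]$ small on all of $\mathrm{range}(\Pi_\perp)$ --- is essentially the conjecture itself, restated in the commutant language. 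The $t=4$ base case is only asserted (``it should suffice to verify\dots''), and the bootstrap ``each constant increment of $t$ forces at least one more rotation'' is exactly the linear-accumulation claim that needs proof; nothing in the sketch rules out the adversarial cancellations you correctly identify as the obstacle. Note also that any such argument must be consistent with Theorem~\ref{th:t-design-exsitence}, where $O(t^4\log^2 t)$ $T$ gates, independent of $n$, already suffice, so the mechanism cannot be a crude dimension count of $\mathrm{Comm}(t)\ominus\mathrm{span}\{P_\pi\}$ alone.

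A second, quantitative gap: you pass from the diamond-norm bound $\epsilon$ to ``$\Pi_\perp M_t^\nu\Pi_\perp$ is small in norm, up to dimension-dependent factors I will not track.'' Those factors are not cosmetic here. The relevant spaces have dimension $2^{\Theta(nt)}$, and converting a diamond-norm (or even $2\to2$) bound on $M_t^\nu-M_t^{\mathrm{Haar}}$ into a pointwise bound on its compression to a specific $e^{\Theta(t^2)}$-dimensional subspace can lose exponential factors; since the conjecture is supposed to hold for fixed constant $\epsilon$ (indeed the paper invokes it with $\epsilon$ as large as constant), an argument that only works for $\epsilon$ exponentially small in $nt$ would not establish the statement as used. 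The alternative route you mention (an $\ell_1$-weight lower bound $e^{\Omega(t^2)}$ in the Clifford-unitary expansion versus the $(\sqrt2\cos(\pi/8))^{2kt}$ upper bound for $k$-$T$-gate circuits) has the same status: the upper bound is fine, but the lower bound on the $\ell_1$-weight of anything close to $M_t^{\mathrm{Haar}}$ is again an unproven assertion of the same difficulty. In short, the framework is reasonable and compatible with how one might eventually attack the conjecture, but no step beyond the routine reductions is actually established, so the conjecture remains open after your argument.
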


\section{Lower bounds on approximate stabilizer rank of magic states }
In this section, we state and prove our main result, which is a lower bound on the approximate stabilizer rank of $\ket{T}^{\x n}$.
\begin{theorem}
\label{th:lower-bound-stab}
Let $1>\delta>0$. We have 
\begin{align}
    \chi_\delta(\ket{T}^{\otimes m}) = \Omega\pr*{\frac{(1-\delta^2)^2m^2}{\poly\log (m)}}.
\end{align}
\end{theorem}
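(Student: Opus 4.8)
The plan is to combine the three ingredients advertised in the proof sketch: (i) a strong concentration bound showing that a Haar-random $n$-qubit state has approximate stabilizer rank $\geq 2^{n-o(1)}$, (ii) an upper bound on the number of $T$ gates needed to (approximately) prepare an arbitrary $n$-qubit state, namely $\tilde O(2^{n/2})$ via the Clifford-for-$T$ trading result of \cite{low2018trading}, together with magic state teleportation realized by adaptive computational-basis measurements, and (iii) a lemma that these adaptive measurements, when applied to $\ket{T}^{\otimes m}$, do not increase the approximate stabilizer rank — crucially using the balanced structure $|\langle 0|T\rangle|^2 = |\langle 1|T\rangle|^2$. Putting these together yields $\chi_\delta(\ket{T}^{\otimes m}) \gtrsim 2^n / \mathrm{polylog}$ for $m = n 2^{n/2}$, and rescaling gives $\Omega(m^2/\poly\log m)$.

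First I would prove the concentration bound \eqref{eq:conc}. Fix a rank-$M$ projection $P$ and consider the function $f(\ket\phi) = \norm{P\ket\phi}$, or rather work with $1 - F$ where $F$ is the fidelity with the best rank-$M$ stabilizer-spanned subspace. The key estimate is Lemma~\ref{lm:haar-expectation}: for a fixed rank-$M$ projection $P$, $\E{\tr(P\ketbra{\phi}{\phi})^{\otimes t}}$ is exactly the ratio of binomial coefficients $\binom{M+t-1}{t}/\binom{d+t-1}{t}$ with $d = 2^n$, which for $M \ll d$ is roughly $(M/d)^t$. Taking $t$-th moments and a union bound over the set of subspaces spanned by $M$ stabilizer states — of which there are at most $\abs{\Stab_n}^M \leq e^{0.54 n^2 M}$ by Lemma~\ref{lm:stabilizer}(3) — gives that $\Pr[\exists\text{ such subspace with fidelity} \geq 1-\delta^2] \leq e^{0.54 n^2 M} \cdot (\text{something like } (M/d)^{t} (1-\delta^2)^{-t})$. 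Choosing $t \approx 2^n$ (all moments matter, hence the $t$-design remark in the sketch) makes the tail $e^{n^2 M - \Omega(2^n)}$, as claimed; some care is needed to relate "approximate stabilizer rank $\leq M$" to "fidelity $\geq 1-\delta^2$ with an $M$-dimensional subspace spanned by stabilizer states," and to handle the non-normalization in the definition of $\chi_\delta$. One then concludes that all but an exponentially small fraction of states have $\chi_\delta \geq 2^{n}/(n^2 \cdot \omega(1))$.

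Second I would set up the teleportation/synthesis step. Let $\ket\phi$ be Haar-random on $n$ qubits; by \cite{low2018trading} there is a Clifford$+T$ circuit with $k = \tilde O(2^{n/2})$ $T$ gates (on $n + \lambda$ qubits, $\lambda$ small) preparing $\ket\phi\ket{0^\lambda}$ up to tiny error, i.e.\ $\tau_\epsilon(\ket\phi) = \tilde O(2^{n/2})$. Gadgetizing each $T$ gate by magic-state teleportation converts this into: a Clifford circuit applied to $\ket{0^{n+\lambda}}\otimes \ket{T}^{\otimes k}$, followed by $k$ adaptive single-qubit computational-basis measurements (each measurement outcome controlling a Clifford correction on later wires). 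The third step is the technical heart: I would prove that post-selecting (or conditioning on) these adaptive measurement outcomes applied to $\ket{T}^{\otimes k}$ produces a state whose approximate stabilizer rank is at most $\chi_\delta(\ket{T}^{\otimes k})$, up to constant factors and up to adjusting $\delta$. Because $\ket{T}$ is balanced, each outcome of measuring a teleportation gadget occurs with probability exactly $1/2$ and leaves the remaining $T$-registers in a state Clifford-equivalent to $\ket{T}^{\otimes (k-1)}$ (times known ancilla) — so a stabilizer decomposition of $\ket T^{\otimes k}$ projects, under any measurement branch, to a decomposition with no more terms of the teleported state, and one tracks how the $\delta$-error behaves under the (norm-nonincreasing) projections. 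Combining: $\chi_\delta(\ket T^{\otimes k}) \gtrsim \chi_{\delta'}(\ket\phi) \gtrsim 2^n/\poly\log$, and substituting $k = \tilde\Theta(2^{n/2})$ (so $2^n = \tilde\Theta(k^2)$, relabel $k \to m$) gives the theorem, with the $(1-\delta^2)^2$ factor coming from the two places $\delta$ enters the moment computation and the branch-error bookkeeping.

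The main obstacle I expect is step three — controlling the approximate rank (not just exact rank) under a long sequence of \emph{adaptive} measurements. For exact rank this is essentially immediate (projecting a linear combination of stabilizer states onto a computational-basis outcome gives a linear combination of at most as many stabilizer states, by Lemma~\ref{lm:stabilizer}(2)), but with an approximation error $\delta$ one must ensure the error does not blow up across the $\sim 2^{n/2}$ adaptive rounds, and that conditioning on a particular (good) branch rather than the full measured mixture is legitimate; this is exactly where the balanced structure of $\ket T$ is used, and why the argument does not extend to arbitrary magic states. A secondary technical point is making the union-bound/moment computation in step one tight enough to keep the loss to $\poly\log m$ rather than, say, $m^{o(1)}$ — this requires choosing $t$ as a sufficiently large constant multiple of $2^n$ and carefully estimating $\binom{M+t-1}{t}/\binom{d+t-1}{t}$ in the regime $M, t \sim 2^n$.
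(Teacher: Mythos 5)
Your proposal follows the same three-ingredient architecture as the paper's proof: a Haar-measure concentration bound (the paper's Lemma~\ref{lm:random-stab-rank}), the $T$-count upper bound $\tau_{4^{-n}}(\ket\phi)=\tilde O(2^{n/2})$ from \cite{low2018trading} (Lemma~\ref{lm:t-complex2}), and the gadget/balanced-measurement argument showing $\chi_\delta(V\ket{0})\leq\chi_\delta(\ket T^{\otimes k})$ (Lemmas~\ref{lm:unif-T}, \ref{lm:T-gadget}, and \ref{lm:stab-rank-circuit}), followed by the same rescaling $m\approx n2^{n/2}$; your step three, including the observation that the error splits over the two orthogonal branches and that the exact probability $1/2$ makes the renormalization compensate the loss, is precisely the paper's resolution. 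The one place you genuinely deviate is the concentration step: you propose a high-moment Markov bound via Lemma~\ref{lm:haar-expectation}, whereas the paper applies L\'evy's inequality (Theorem~\ref{th-Levy}) to the $2$-Lipschitz function $\norm{P\ket\phi}^2$ and reserves the moment method for the $t$-design analogue (Lemma~\ref{lm:t-design-rank}); both routes work, but your prescription ``choose $t$ a sufficiently large constant multiple of $2^n$'' is backwards in one respect: the per-projector bound is roughly $\pr[\big]{\frac{M+t}{2^n+t}}^{t}(1-\delta^2)^{-t}$, so you need $\frac{M+t}{2^n+t}<1-\delta^2$, i.e.\ $t\lesssim\frac{1-\delta^2}{\delta^2}2^n$, and taking $t$ too large makes the $(1-\delta^2)^{-t}$ factor overwhelm the ratio term; with $t\asymp(1-\delta^2)2^n$ the argument goes through (and in fact gives a slightly better $(1-\delta^2)$-dependence than the paper's $(1-\delta^2)^2$, which in the paper comes solely from the L\'evy exponent, not from the branch bookkeeping as you attribute it). L\'evy buys a cleaner, choice-free tail $e^{-\Omega((1-\delta^2-M/2^n)^2 2^n)}$; the moment method buys a proof that transfers verbatim to $t$-designs, which is exactly how the paper uses it in Section~\ref{sec:designs}.
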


We prove the above result using three ingredients.
\begin{enumerate}
    \item In Section~\ref{sec:ing1}, we show  that for each $n$ and $\delta$ there exists an arbitrary $n$-qubit quantum state with an approximate stabilizer rank $\Omega(2^n/n^2)$ (Lemma~\ref{lm:random-stab-rank}). To prove this result, we consider the approximate stabilizer rank of a random $n$-qubit state distributed according to the Haar measure. For each choice of $\Omega(2^n/n^2)$ number of stabilizer states, we obtain a doubly exponential upper bound on the probability that we can estimate the random state as a linear combination of those stabilizer states. We then use the union bound to obtain an upper bound on the probability that a random state has approximate stabilizer rank $\Omega(2^n/n^2)$.

    \item In Section~\ref{sec:ing2}, we state a result of \cite{low2018trading} that shows every $n$-qubit state can be approximated using Clifford gates and at most $O(\poly(n 2^{n/2})$ number of  $T$ gates and many ancilla qubits (Lemma~\ref{lm:t-complex2}).
    
    \item In Section~\ref{sec:ing3}, we finally use the ideas in gadget-based implementation of $T$ gates to show that the approximate stabilizer rank of the output of state is upper bounded by stabilizer rank of $\ket{T}^{\otimes m}$ (Lemma~\ref{lm:stab-rank-circuit}).
\end{enumerate}
Having these ingredients, we prove Theorem~\ref{th:lower-bound-stab} in Section~\ref{sec:proof-lb}, but provide the high-level argument here. Fixing $m$, we choose $n$ such that $\poly(n) 2^{n/2} \approx m$, which implies that $n \approx 2 \log m$  and $2^{n/2} \approx m/\poly \log (m)$. We then choose quantum state $\ket{\phi}$ with $n$ qubits and approximate stabilizer rank $\Omega((1-\delta^2)^22^n/n^2) = \Omega((1-\delta^2)^2m/\poly\log (m)) $ according to the first ingredient. Let $V$ be the quantum circuit according to the second ingredient, i.e., $V\ket{0^n} \approx \ket{\phi}$ and $V$ contains Clifford gates and $m = \poly(n)2^{n/2}$ number of $T$ gates. Then, using the last ingredient, we obtain
\begin{align}
    \Omega((1-\delta^2)^2m^2/\poly\log(m)) = \chi_\delta(\ket{\phi}) \approx \chi_{\delta}(V(\ket{0^n}  |0^\lambda\rangle)) \leq \chi_\delta(\ket{T}^{\x \poly(n) 2^{n/2}}) \approx  \chi_\delta(\ket{T}^{\x m}).
\end{align}
See Section~\ref{sec:proof-lb} for details.

\subsection{Existence of quantum states with large approximate stabilizer rank} 
\label{sec:ing1}
The following lemma provides an upper bound on the likelihood that a Haar random state of $n$ qubits has a small approximate stabilizer rank.
\begin{lemma}
    \label{lm:random-stab-rank}
    Let $\ket{\phi}$ be a random $n$-qubit state distributed according to Haar measure with $n\geq 6$. Let $M$ be a positive integer and $0<\delta<1$ be such that $1-\delta^2 - \frac{M}{2^n} > 0$.
    We have 
    \begin{align}
        \P{\chi_\delta(\ket{\phi}) \leq M} \leq 2e^{0.54n^2M - \frac{(1-\delta^2 - M/2^n)^22^n}{100\pi}}
    \end{align}
    In particular, for  $n \geq 2\log \frac{1}{1-\delta^2} + 9$, there exists an $n$-qubit state $\ket{\phi}$ with
    \begin{align}
        \chi_\delta(\ket{\phi}) \geq  C \frac{(1-\delta^2)^22^n}{n^2}
    \end{align}
    for an absolute constant $C \geq \frac{1}{1000}$.
\end{lemma}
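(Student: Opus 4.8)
The plan is to bound $\P{\chi_\delta(\ket{\phi}) \leq M}$ by a union bound over all possible choices of $M$ stabilizer states, using the concentration tool (L\'evy's theorem, Theorem~\ref{th-Levy}) applied to a well-chosen Lipschitz function. First I would observe that $\chi_\delta(\ket{\phi}) \leq M$ means there is a subset $S \subseteq \Stab_n$ with $|S| \leq M$ and some (unnormalized) $\ket{\psi}$ in $\textnormal{span}(S)$ with $\norm{\ket{\phi} - \ket{\psi}} \leq \delta$. For a fixed such subspace $\mathcal{V} = \textnormal{span}(S)$ of dimension at most $M$, with orthogonal projector $P_\mathcal{V}$, the best approximation is the orthogonal projection, so the condition becomes $\norm{(\one - P_\mathcal{V})\ket{\phi}}^2 \leq \delta^2$, i.e. $\norm{P_\mathcal{V}\ket{\phi}}^2 \geq 1 - \delta^2$. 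So I would define $f(\ket{\phi}) = \norm{P_\mathcal{V}\ket{\phi}}$; this is $1$-Lipschitz in $\ket{\phi}$ since $P_\mathcal{V}$ is a contraction. The event of interest is $f(\ket{\phi}) \geq \sqrt{1-\delta^2}$.

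Next I would control $\E{f(\ket{\phi})}$. By Jensen, $\E{f} \leq \sqrt{\E{f^2}} = \sqrt{\E{\bra{\phi}P_\mathcal{V}\ket{\phi}}} = \sqrt{\tr(P_\mathcal{V})/2^n} \leq \sqrt{M/2^n}$, using the $t=1$ case of Lemma~\ref{lm:haar-expectation} (or just the standard fact that $\E{\ketbra{\phi}{\phi}} = \one/2^n$). Hence by Theorem~\ref{th-Levy} with $d = 2^n$, $\kappa = 1$, and $\epsilon = \sqrt{1-\delta^2} - \sqrt{M/2^n}$ (which is positive under a suitable hypothesis), we get
\begin{align}
    \P{f(\ket{\phi}) \geq \sqrt{1-\delta^2}} \leq 2 e^{-\frac{(\sqrt{1-\delta^2} - \sqrt{M/2^n})^2 2^n}{25\pi}}.
\end{align}
A small calculation shows $(\sqrt{1-\delta^2} - \sqrt{M/2^n})^2 \geq$ a constant times $(1 - \delta^2 - M/2^n)^2$ on the relevant range — actually one can be cruder: since $\sqrt{a}-\sqrt{b} \geq (a-b)/(2\sqrt a) \geq (a-b)/2$ when $a \leq 1$, with $a = 1-\delta^2$, $b = M/2^n$, we get $\epsilon \geq (1-\delta^2 - M/2^n)/2$, giving the $(1-\delta^2-M/2^n)^2 2^n/(100\pi)$ exponent. (One should double-check whether the paper's Lemma~\ref{lm:haar-expectation} moment bound or a direct second-moment bound gives a cleaner constant; either way the factor $100\pi$ is comfortable.)

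Then I would take a union bound over all subsets $S \subseteq \Stab_n$ with $|S| \leq M$: there are at most $|\Stab_n|^M \leq e^{0.54 n^2 M}$ such subsets by Lemma~\ref{lm:stabilizer}(3), since $n \geq 6$. This yields
\begin{align}
    \P{\chi_\delta(\ket{\phi}) \leq M} \leq 2 e^{0.54 n^2 M - \frac{(1-\delta^2 - M/2^n)^2 2^n}{100\pi}},
\end{align}
which is the claimed bound. For the ``in particular'' clause, I would set $M = C(1-\delta^2)^2 2^n/n^2$ with $C = 1/1000$; one checks that under $n \geq 2\log\frac{1}{1-\delta^2} + 9$ we have $M/2^n \leq \frac{1}{2}(1-\delta^2)$ (so $1-\delta^2 - M/2^n \geq \frac12(1-\delta^2)$, keeping things positive), and then the exponent becomes $0.54 n^2 M - \Theta((1-\delta^2)^2 2^n) < 0$ because $n^2 M = C(1-\delta^2)^2 2^n$ with $C$ small enough that $0.54 C < \frac{1}{400\pi}$; hence the probability is strictly less than $1$ and a state with the desired rank exists. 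The main obstacle is purely bookkeeping: tracking the constants through the two-term comparison $\sqrt{a}-\sqrt{b}$ vs. $a - b$ and making sure the chosen $C$ simultaneously keeps $1-\delta^2 - M/2^n$ bounded below by a constant multiple of $1-\delta^2$ and makes the Gaussian tail beat the union-bound entropy — there is no conceptual difficulty, only a need to pick the numerology consistently.
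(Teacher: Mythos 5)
Your proposal follows essentially the same route as the paper's proof: reduce $\chi_\delta(\ket{\phi})\leq M$ to the projection condition $\norm{P_S\ket{\phi}}^2\geq 1-\delta^2$, union-bound over the at most $e^{0.54n^2M}$ size-$M$ subsets of $\Stab_n$, and apply L\'evy's theorem plus the $t=1$ Haar moment bound — your choice of the $1$-Lipschitz function $\norm{P\ket{\phi}}$ combined with $\sqrt{a}-\sqrt{b}\geq (a-b)/2$ lands on exactly the same $100\pi$ exponent the paper gets by instead taking the $2$-Lipschitz squared norm, so the first bound is fine. The only caveat is in the ``in particular'' numerology: with your crude estimate $1-\delta^2-M/2^n\geq\tfrac12(1-\delta^2)$ the exponent is only about $-(\tfrac{1}{400\pi}-0.54\cdot10^{-3})(1-\delta^2)^2 2^n\approx -2.6\times10^{-4}(1-\delta^2)^2 2^n$, which does not overcome the prefactor $2$ when the hypothesis only guarantees $(1-\delta^2)^2 2^n\geq 2^9$; using the sharper bound $M/2^n\leq(1-\delta^2)/1000$, as the paper does, gives roughly $-2.6\times10^{-3}(1-\delta^2)^2 2^n\leq -1$ and closes the argument.
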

\begin{remark}
    A random state distributed according to Haar measure with probability zero lies inside the span of any $2^n - 1$ stabilizer states. Therefore, by union bound,
    \begin{align}
        \P{\chi(\ket{\phi}) = 2^n} = 1.
    \end{align}
    Lemma~\ref{lm:random-stab-rank} states a robust version of this observation.
\end{remark}
To prove the above lemma, we first introduce a necessary condition for the approximate stabilizer rank of an arbitrary state to be less than $M$ in the following lemma. 

\begin{lemma}
    \label{lm:stab-rank-nec}
    Let $\ket{\phi}$ be an $n$-qubit state.  If $\chi_\delta(\phi) \leq M$, then $\max_{S\subset\Stab_n: |S| = M} \norm{P_S\ket{\phi}}^2 \geq 1 - \delta^2$ where $P_S$ denotes the orthogonal projection onto the subspace spanned by the elements of $S$.
\end{lemma}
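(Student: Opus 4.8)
The plan is to unwind the definition of $\chi_\delta$ and then invoke the elementary optimality property of orthogonal projection. Concretely: the hypothesis $\chi_\delta(\ket{\phi}) \leq M$ hands us a (possibly unnormalized) state $\ket{\psi}$ attaining the minimum in the definition of $\chi_\delta(\ket{\phi})$, so $\norm{\ket{\phi} - \ket{\psi}} \leq \delta$ and $\ket{\psi} = \sum_{i=1}^{M'} c_i \ket{s_i}$ with $M' = \chi(\ket{\psi}) \leq M$ and each $\ket{s_i} \in \Stab_n$. Since $\abs{\Stab_n}$ is enormous, I would enlarge $\{\ket{s_1}, \dots, \ket{s_{M'}}\}$ (which already has at most $M$ distinct elements) to a set $S \subset \Stab_n$ with exactly $\abs{S} = M$; this is harmless because enlarging $S$ only enlarges $\mathrm{span}(S)$ and therefore can only increase $\norm{P_S \ket{\phi}}$, so it suffices to prove the claimed lower bound for this particular $S$.

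Next I would note that $\ket{\psi} \in \mathrm{span}(S)$, while $P_S\ket{\phi}$ is, by the definition of orthogonal projection, the element of $\mathrm{span}(S)$ closest to $\ket{\phi}$: for every $\ket{v} \in \mathrm{span}(S)$ one has the Pythagorean identity $\norm{\ket{\phi} - \ket{v}}^2 = \norm{\ket{\phi} - P_S\ket{\phi}}^2 + \norm{P_S\ket{\phi} - \ket{v}}^2$. Applying this with $\ket{v} = \ket{\psi}$ gives $\norm{\ket{\phi} - P_S\ket{\phi}} \leq \norm{\ket{\phi} - \ket{\psi}} \leq \delta$. Then, using that $\ket{\phi}$ is a unit vector and that $\ket{\phi} = P_S\ket{\phi} + (\one - P_S)\ket{\phi}$ is an orthogonal decomposition,
\begin{align}
\norm{P_S\ket{\phi}}^2 = \norm{\ket{\phi}}^2 - \norm{(\one - P_S)\ket{\phi}}^2 = 1 - \norm{\ket{\phi} - P_S\ket{\phi}}^2 \geq 1 - \delta^2,
\end{align}
and since the maximum over all $S \subset \Stab_n$ with $\abs{S} = M$ is at least the value attained by the $S$ we constructed, the lemma follows.

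I do not expect any substantive obstacle in this argument — the content is essentially "approximate stabilizer rank $\leq M$ forces $\ket{\phi}$ to be within $\delta$ of an $M$-dimensional stabilizer-spanned subspace." The only two points deserving a word of care are that the witness $\ket{\psi}$ need not be normalized (irrelevant, since we only use that $\ket{\psi}$ lies in $\mathrm{span}(S)$ and is $\delta$-close to $\ket{\phi}$, and the Pythagorean step uses only $\norm{\ket{\phi}} = 1$), and that $\chi(\ket{\psi})$ may be strictly smaller than $M$ (handled by the padding of $S$ described above).
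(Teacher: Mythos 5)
Your proposal is correct and follows essentially the same route as the paper's proof: both exhibit a stabilizer set $S$ of size $M$ containing the witness decomposition and then use the Pythagorean identity together with the optimality of the orthogonal projection $P_S\ket{\phi}$ to get $\norm{P_S\ket{\phi}}^2 \geq 1-\delta^2$. Your explicit remarks about padding the set to exactly $M$ elements and about the witness being unnormalized are minor technicalities the paper glosses over, but they do not change the argument.
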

\begin{proof}
    Assume that $\chi_\delta(\phi) \leq M$. It means that there exist $c_1, \cdots, c_M \in \C$ and $S = \set{\ket{s_1}, \cdots, \ket{s_M}} \subset \Stab_n$ such that 
    \begin{align}
        \norm*{\ket{\phi} - \sum_{i=1}^M c_i \ket{s_i}} \leq \delta
    \end{align}
    We also know that 
    \begin{align}
        1 
        &= \norm{\ket{\phi}}^2\\
        &\stackrel{(a)}{=} \norm{\ket{\phi} - P_S \ket{\phi}}^2 + \norm{P_S \ket{\phi}}^2\\
        &\stackrel{(b)}{\leq}   \norm*{\ket{\phi} - \sum_{i=1}^M c_i \ket{s_i}}^2+ \norm{P_S \ket{\phi}}^2 \\
        &\leq \delta^2 + \norm{P_S \ket{\phi}}^2
    \end{align}
    where $(a)$ follows from Pythagoras Theorem in Hilbert Spaces, and $(b)$ follows since $P_S\ket{\phi}$ is the closest point to $\ket{\phi}$ in the subspace spanned by elements of $S$. Therefore, $\norm{P_S \ket{\phi}}^2 \geq 1-\delta^2$ as desired.

\end{proof}
\begin{proof}[Proof of Lemma~\ref{lm:random-stab-rank}]
    Before delving into the technical proof, we sketch the main ideas. Lemma~\ref{lm:stab-rank-nec} helps us to upper bound the probability that the approximate stabilizer rank of a random state is less than $M$ in terms of the norm of the projection of the random state into several subspaces. We use union bound and L\'evy's theorem (Theorem \ref{th-Levy}) to upper bound that probability.
    
    Note that
    \begin{align}
        \P{\chi_\delta(\ket{\phi}) \leq M} 
        &\stackrel{(a)}\leq \P{\max_{S\subset\Stab_n: |S| = M} \norm{P_S\ket{\phi}}^2 \geq 1 - \delta^2}\\
        &\stackrel{(b)}\leq \sum_{S\subset\Stab_n: |S| = M} \P{\norm{P_S\ket{\phi}}^2 \geq 1 - \delta^2}\\
        &\leq \binom{|\Stab_n|}{M} \sup_{P\in \Proj{(\C^2)^{\otimes n}, M}} \P{\norm{P\ket{\phi}}^2 \geq 1 - \delta^2}\\
        &\leq|\Stab_n| ^M \sup_{P\in \Proj{(\C^2)^{\otimes n}, M}} \P{\norm{P\ket{\phi}}^2 \geq 1 - \delta^2}\\
        &\stackrel{(c)}{\leq} e^{0.54n^2M} \sup_{P\in \Proj{(\C^2)^{\otimes n}, M}} \P{\norm{P\ket{\phi}}^2 \geq 1 - \delta^2}\label{eq:union-bound}
    \end{align}
    where $(a)$ follows from Lemma~\ref{lm:stab-rank-nec}, $(b)$ follows from the union bound, and $(c)$ follows from Lemma~\ref{lm:stabilizer}.
We now upper bound $\P{\norm{P\ket{\phi}}^2 \geq 1 - \delta^2}$ for $P\in\Proj{(\C^2)^{\otimes n}, M}$ using L\'evy's theorem (Theorem~\ref{th-Levy}).

We take $\mathcal{H} = (\C^2)^{\otimes n}$ and $f(\ket{\phi}) = \norm{P\ket{\phi}}^2$ and show that $f$ is $2$-Lipschitz and $\E{f(\ket{\phi})} \leq \frac{M}{2^n}$. First note that for arbitrary unit vectors $\ket{\phi}$ and $\ket{\psi}$, we have
\begin{align}
    \abs{f(\ket{\phi}) - f(\ket{\psi})} 
    &= \abs{\norm{P\ket{\phi}}^2 - \norm{P\ket{\psi}}^2}\\
    &= \abs{\norm{P\ket{\phi}} - \norm{P\ket{\psi}}}(\norm{P\ket{\phi}} + \norm{P\ket{\psi}})\\
    &\leq \norm{P\ket{\phi}-P\ket{\psi}}(\norm{P\ket{\phi}} + \norm{P\ket{\psi}})\\
    & \leq \norm{\ket{\phi}-\ket{\psi}}(\norm{\ket{\phi}} + \norm{\ket{\psi}})\\
    &= 2\norm{\ket{\phi}-\ket{\psi}},
\end{align}
which means that $f$ is 2-Lipschitz. By Lemma~\ref{lm:haar-expectation}, we also have
\begin{align}
    \E{f(\ket{\phi})} = \frac{\rank P}{2^n} \leq \frac{M}{2^n}.
\end{align}
Applying L\`evy's theorem, we obtain
\begin{align}
    \P{\norm{P\ket{\phi}}^2 \geq 1-\delta^2} \leq 2e^{-\frac{(1-\delta^2 - M/2^n)^22^n}{100\pi}}.\label{eq:levy-bound}
\end{align}
Combining \eqref{eq:union-bound} and \eqref{eq:levy-bound} yields the first part of Lemma~\ref{lm:random-stab-rank}. 

To prove the second part, it is enough to show that for $n\geq 2 \log \frac{1}{1-\delta^2} + 9$ and $M = \frac{(1-\delta^2)^22^n}{1000n^2}$,
\begin{align}
    \P{\chi_\delta(\ket{\phi}) \leq M } < 1,
\end{align}
 By the first part of the lemma, it is enough to show that
\begin{align}
    0.54n^2M - \frac{(1-\delta^2 - M/2^n)^22^n}{100\pi} \leq -1.
\end{align}
We have by direct calculation
\begin{align}
    1 - \delta^2 - \frac{M}{2^n}
    &= 1 - \delta^2 - \frac{(1-\delta^2)^2}{1000n^2}\\
    &\geq (1-\delta^2)\pr*{1-\frac{1}{1000}}.
\end{align}
and therefore
\begin{align}
    0.54n^2M - \frac{(1-\delta^2 - M/2^n)^22^n}{100\pi}
    &\leq 0.54n^2M - \frac{(1-\delta^2)^2 (0.999)^2 2^n}{100\pi}\\
    &= (1-\delta^2)^2 2^n\pr{5.4 \times 10^{-4} - \frac{0.999^2}{100\pi}}\\
    &\leq - 0.0026 (1-\delta^2)^2 2^n,
\end{align}
which is less than $-1$ for $n \geq 2\log \frac{1}{1-\delta^2} + 9$.
\end{proof}

\subsection{Trading $T$ gates for Clifford operations}
\label{sec:ing2}

In the previous section, we showed that, except for an exponentially small fraction, Haar random quantum states have exponentially large approximate ranks. Our strategy is to translate this lower into a lower bound on the approximate stabilizer rank of $\ket T^{\otimes n}$. As an intermediate step toward this goal, we need to find an upper bound on the number of $T$ gates necessary to sample from the Haar measure. This section finds an upper bound on the number of $T$ gates to construct arbitrary quantum states.

One might expect that an arbitrary quantum state would require $\Omega(2^n)$ numbers of $T$ gates to prepare. It turns out that by allowing ancilla qubits, arbitrary quantum states may be prepared using $\tilde O (2^{n/2})$ number of $T$ gates and $\tilde O (2^{n/2})$ ancillae.

\begin{lemma} [Trading $T$ gates with Clifford operations \cite{low2018trading}]
    Let $\ket \phi$ be an arbitrary $n$-qubit quantum state then $\tau_{4^{-n}} (\ket \phi) = 2^{n/2} O(n)$ where $\tau_\epsilon$ is defined in Definition \ref{def:tau}.
\label{lm:t-complex2}
\end{lemma}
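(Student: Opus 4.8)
The plan is to quantify the result of \cite{low2018trading} about state preparation. At a high level, the argument has two layers: first synthesize an approximation $\ket{\phi}$ using a naive decomposition into $\tilde O(2^n)$ $T$ gates, and then use the ``trading'' trick to reduce the $T$-count to $\tilde O(2^{n/2})$ at the cost of introducing $\tilde O(2^{n/2})$ ancillae.

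First I would recall that an arbitrary $n$-qubit state $\ket{\phi}$ can be prepared up to error $4^{-n}$ using a circuit whose $T$-count is $O(2^n \cdot \poly(n))$: one prepares $\ket{\phi}$ by a sequence of controlled rotations (a standard state-preparation ``cascade''), and each single-qubit rotation is compiled into $O(\log(1/\epsilon'))$ gates from $\{H,S,T\}$ by Solovay--Kitaev or by the Ross--Selinger synthesis, with $\epsilon' = 4^{-n}/2^n$ to control the accumulated error by the triangle inequality over the $O(2^n)$ rotations. This gives the weak bound $\tau_{4^{-n}}(\ket{\phi}) = \tilde O(2^n)$, which is the starting point.

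Second, and this is the crux, I would invoke the ``trading Clifford operations for $T$ gates'' gadget of \cite{low2018trading}: given a circuit with $k$ $T$ gates, one can group the $T$ gates into $O(\sqrt{k})$ blocks of $O(\sqrt{k})$ gates each, precompute the resource state for each distinct block pattern (there are $2^{O(\sqrt k)}$ of them, but only $O(\sqrt k)$ are actually used) using Clifford operations acting on a larger Hilbert space, and then consume these resource states via Clifford-only gate teleportation. The net effect is that a circuit using $k$ $T$ gates can be re-expressed as a circuit using only $O(\sqrt{k}\,\poly(n,\log k))$ $T$ gates together with $O(\sqrt{k}\,\poly(n))$ ancilla qubits, incurring only a small additional error that can be absorbed into the $4^{-n}$ budget. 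Applying this with $k = \tilde O(2^n)$ yields $T$-count $O(\sqrt{2^n}\cdot \poly(n)) = 2^{n/2} O(\poly(n))$; tightening the polynomial factor to linear, as claimed in the statement, requires using the sharpest version of the synthesis/trading bounds (e.g.\ that the block-Clifford preparation is itself efficient and the error per block is $2^{-\Omega(n)}$), which is where the $O(n)$ rather than $O(\poly(n))$ comes from.

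The main obstacle is the careful error bookkeeping in the trading step: one must verify that the diamond-distance (or state-fidelity) errors introduced by (i) the approximate block resource states and (ii) the teleportation corrections do not blow up when composed $O(\sqrt{k})$ times, and that the number of \emph{distinct} resource states that must be prepared is small enough that their preparation does not reintroduce a large $T$-count. I would handle this by choosing all intermediate approximation parameters to be $2^{-\Omega(n)}$ (affordable since $\log(1/\epsilon)$ only enters polynomially), so that the union/triangle bound over the $\poly(2^{n/2})$ sources of error still leaves total error below $4^{-n}$, and by noting that the $T$-gates inside the block-preparation circuits can themselves be traded recursively or are already few because each block resource state lives on only $O(\sqrt k)$ qubits. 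Since \cite{low2018trading} already carries out this analysis, the proof here is essentially a citation together with the observation that their parameters give exactly the stated $\tau_{4^{-n}}(\ket{\phi}) = 2^{n/2}O(n)$ bound.
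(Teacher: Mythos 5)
There is a genuine gap in the crux step. The generic ``trading'' transformation you invoke --- that any circuit with $k$ $T$ gates can be re-expressed with only $O(\sqrt{k}\,\poly(n,\log k))$ $T$ gates by grouping the $T$ gates into $O(\sqrt k)$ blocks whose resource states are ``precomputed using Clifford operations'' and consumed by gate teleportation --- does not exist, and it is not what \cite{low2018trading} proves. The block resource states encoding $O(\sqrt k)$ $T$ gates are magic states; by the Gottesman--Knill theorem Clifford operations map stabilizer states to stabilizer states, so preparing each such resource state itself costs $\Theta(\sqrt k)$ $T$ gates, and summing over the $O(\sqrt k)$ blocks you are back to $\Theta(k)$: the scheme only relocates the $T$ gates. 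A quick sanity check confirms something must be wrong: if such a square-root compression of arbitrary circuits were available (and, as you suggest, applicable recursively), every state would have polylogarithmic $T$-count, contradicting the $\tilde\Omega(2^{n/2})$ lower bound on the $T$-count of generic $n$-qubit states that this very lemma's source establishes and that the paper cites as the bottleneck for improving the main result. Your first layer (a naive $\tilde O(2^n)$-$T$ synthesis via a rotation cascade plus Solovay--Kitaev/Ross--Selinger, with error budget $4^{-n}$) is fine and matches the ancilla-free bound the paper quotes from \cite{knill1995approximation}, but the square-root saving cannot be obtained by post-processing that circuit.

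What \cite{low2018trading} (and the paper's sketch) actually exploits is the specific structure of state preparation, not a generic circuit rewrite. The state is prepared by rotating the qubits one by one, where the rotation angles are \emph{classical} data retrieved by a data-lookup oracle mapping $\ket{x}\ket{0^b}\ket{0^\lambda}\mapsto\ket{x}\ket{a(x)}\ket{g_x}$. The square-root saving lives entirely inside this lookup: one applies SELECT only on the first $n/2$ address bits (costing $O(n)$ $T$ gates per branch), writes all $O(2^{n/2})$ candidate outputs for the second half into $\lambda=\tilde O(2^{n/2})$ ancillae using WRITE operations that are products of (controlled) $X$ gates --- Clifford, hence $T$-free precisely because the data is classical --- and then routes the correct output into place with $O(2^{n/2})$ controlled-SWAPs at $O(n)$ $T$ gates each. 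This yields $\tilde O(2^{n/2})$ $T$ gates and $\tilde O(2^{n/2})$ ancillae for the oracle, and the $n$ rotation rounds with uncomputation give the stated $2^{n/2}\,O(n+\log\frac1\epsilon)$ total with $\epsilon=4^{-n}$. So the correct proof is a citation of this SELECT/SWAP construction, not of a block-teleportation compression; as written, your second layer would need to be replaced wholesale by this argument.
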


\begin{proof} [Proof sketch of \cite{low2018trading}.] For completeness and to highlight certain construction features, we provide a brief sketch of the proof of this theorem; for details, see \cite{low2018trading}. We show that there exists unitary $U$ which is a product of $ 2^{n/2} O(n + \log \frac 1 \epsilon)$ number of $T$ gates and arbitrary number of Clifford operations on $n + \lambda$ qubits for $\lambda = \tilde O(2^{n/2})$, such that
    $$
    \|U \ket{0^n} |0^{\lambda}\rangle - \ket \phi |0^{\lambda}\rangle\| \leq \epsilon.
    $$  
The proof is in two steps. The first step gives a $T$ efficient way of constructing a certain data look-up oracle that queries classical data. In the second step, the data look-up oracle is used to query relevant rotation angles to rotate quantum bits one-by-one from qubit $i = 1$ to qubit $i = n$ to their proper positions; each round affects one qubit, and the looked-up data can be uncomputed exactly. We highlight that the first step is the main bottleneck in the number of $T$ gates. 

The data query oracle uses $\tilde O (2^{n/2})$ number of $T$ gates and use $\tilde O(2^{n/2})$ ancillae. This oracle takes an $n$-bit string $x$ as input and outputs a $b$ bit output string $a(x)$ according to $\ket {0^n} \ket {0^b} \ket {0^\lambda} \rightarrow \ket {0^n} \ket {a(x)} \ket {g_x}$, where $g_x$ can be viewed as a $\lambda$-qubit ``garbage'' state. First, assume a construction that uses $\lambda = 1$ ancilla. The oracle uses an operator SELECT$(x) = \ket x \bra x \otimes I + \ket x \bra x \otimes X$ which flips the ancilla bit only if the input data is provided as $x$, then conditioned on the ancilla having been flipped the circuit applies controlled $X$ gates to write the output $a(x)$ value on the second register; we call this second procedure WRITE$(x)$. The oracle does this for all $2^n$ input strings, and hence the $\lambda =1 $ construction uses $\tilde O(2^n)$ number of $T$ gates. For any $x$, SELECT$(x)$ uses $O(n)$ number of $T$ gates, but the WRITE$(a(x))$ operations are Clifford and hence are free. The main idea is to trade SELECT operations for WRITE operations by applying SELECT on the first $n/2$ bits of the use $\lambda = \tilde O (2^{n/2})$ ancilla bits to store all the possible $O(2^{n/2})$ outputs corresponding to the second half of the string. We then use $O(2^{n/2})$ controlled SWAP gates, each taking $O(n)$ number of $T$ gates, to move the right output bits in place of the $a$ register. The whole procedure uses $\tilde O(2^{n/2})$ number of $T$ gates.
\end{proof}

We remark that the above result makes a nontrivial use of ancilla qubits in order to obtain a quadratic improvement on the $T$ count for arbitrary quantum states. Without allowing ancillae, only weaker $\tilde O(2^n)$ upper bounds are known; see, for example, \cite[Theorem 3.3]{knill1995approximation}. Our overall strategy allows the use of ancilla qubits, so long as they start from stabilizer states and end with stabilizer states. We hence make a nontrivial use of Lemma~\ref{lm:t-complex2}. This lemma enables us to cross the linear lower-bound barrier on the stabilizer rank and obtain a quadratic lower bound.

\subsection{Approximate stabilizer rank and quantum circuits}
\label{sec:ing3}
\begin{lemma}
\label{lm:stab-rank-circuit}
    Let $V$ be a quantum circuit consisting of Clifford gates and $k$ $T$ gates. Then, 
    \begin{align}
        \chi_\delta(V\ket{0}) \leq \chi_\delta(\ket{T}^{\otimes k}).
    \end{align}
\end{lemma}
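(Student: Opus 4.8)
The plan is to use the standard magic-state teleportation gadget that implements a $T$ gate using a $\ket{T}$ resource state, a $CNOT$, a computational-basis measurement, and a Clifford correction conditioned on the measurement outcome. Concretely, recall that for any qubit state $\ket{\psi}$, the circuit that takes $\ket{\psi}\otimes\ket{T}$, applies $CNOT$ (control on the data qubit), measures the second qubit in the computational basis, and applies an $S$ correction if the outcome is $1$, outputs $T\ket{\psi}$. The first step is to take an optimal \emph{exact} decomposition $\ket{T}^{\otimes k}=\sum_{i=1}^{r}c_i\ket{s_i}$ with $\ket{s_i}\in\Stab_k$ and $r=\chi(\ket{T}^{\otimes k})$; but since we want the \emph{approximate} rank we instead start from a state $\ket{\psi^\star}$ with $\norm{\ket{\psi^\star}-\ket{T}^{\otimes k}}\leq\delta$ and $\chi(\ket{\psi^\star})=\chi_\delta(\ket{T}^{\otimes k})$, and feed $\ket{\psi^\star}$ into the $k$-fold teleportation gadget in place of $\ket{T}^{\otimes k}$.

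Second, I would show that running $V$ via the teleportation gadgets, but with $\ket{\psi^\star}$ substituted for the magic states, produces a state within $\delta$ of $V\ket{0}$ — this is just because the gadget circuitry (the Cliffords, $CNOT$s, measurements, and corrections) is a fixed linear map on the enlarged space, hence $1$-Lipschitz in trace/Euclidean norm on the relevant branch, so substituting a $\delta$-close resource state perturbs the output by at most $\delta$. Third, and this is the heart of the argument, I would show that the exact stabilizer rank does not grow under the gadget: writing $\ket{\psi^\star}=\sum_i c_i\ket{s_i}$, linearity pushes the decomposition through the $CNOT$s and the final Clifford corrections (which map stabilizer states to stabilizer states by Lemma~\ref{lm:stabilizer}(1)), and crucially through the computational-basis measurements, since projecting a stabilizer state onto $\ket{b}$ on one qubit yields (a multiple of) a stabilizer state on the remaining qubits by Lemma~\ref{lm:stabilizer}(2). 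Here one must be careful: the measurement outcomes are the \emph{same} bits $z\in\F^k$ for every term in the sum (we post-select the whole circuit on a fixed branch), and the final Clifford correction $C_z$ depends only on $z$, not on $i$; so on each branch the output is $C_z\sum_i c_i (\bra{z}\text{-projected }\ket{s_i})$, a combination of at most $r$ stabilizer states, and this is exactly $V\ket{0}$ up to the $\delta$-error and up to normalization on that branch. Since $V$ deterministically prepares $V\ket{0}$, the branch that the ideal circuit follows is the all-correct one, and we conclude $\chi(V\ket{\psi^\star}\text{-substituted output})\leq r$, hence $\chi_\delta(V\ket{0})\leq r=\chi_\delta(\ket{T}^{\otimes k})$.

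The main obstacle — and the place where the paper's remark that the argument "works only for the $\ket{T}$ state due to its balanced structure" presumably bites — is the normalization/branch-selection bookkeeping: after post-selecting on a measurement outcome $z$, the projected state $\bra{z}\ket{\psi^\star}$-substituted-circuit has some norm $p_z^{1/2}$ that need not equal the ideal $2^{-k/2}$, and one must argue that the \emph{relative} error between the normalized branch output and $V\ket{0}$ is still $O(\delta)$. Because $\ket{T}$ has $|\langle 0|T\rangle|^2=|\langle 1|T\rangle|^2=\tfrac12$, all $2^k$ branches of the ideal circuit are equiprobable with amplitude exactly $2^{-k/2}$, so the perturbation to the branch weight induced by replacing $\ket{T}^{\otimes k}$ with a $\delta$-close state is uniformly controlled; I would make this quantitative by bounding $\big|\,\norm{(\text{proj}_z)\ket{\psi^\star}\text{-circuit}} - 2^{-k/2}\,\big|$ in terms of $\delta$ using the $1$-Lipschitz property again, and then divide through. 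The remaining steps (closing the teleportation identity, checking the Clifford corrections, invoking Lemma~\ref{lm:stabilizer}) are routine, so I expect the write-up to spend most of its effort on exactly this branch-normalization estimate and on stating the $k$-qubit gadget cleanly as a single fixed Clifford-plus-measurement circuit acting on $\ket{0}$ together with the resource register.
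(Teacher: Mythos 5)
Your overall strategy is the same as the paper's: gadgetize each $T$ gate via magic-state teleportation, push a stabilizer decomposition of a $\delta$-close resource state through the Cliffords and computational-basis projections (Lemma~\ref{lm:stabilizer}), and use the balanced structure of $\ket{T}$ to handle the post-measurement renormalization. However, the quantitative step you yourself identify as the heart of the argument is not correctly handled, and as written it fails for constant $\delta$. The map from the resource register to a fixed measurement branch of the circuit is ``projection onto $\bra{z}$ followed by renormalization by $2^{k/2}$''; the projection is norm-nonincreasing, but the $2^{k/2}$ renormalization is not, so the branch output is \emph{not} a $1$-Lipschitz function of the resource state. Likewise, your proposed fix --- bounding $\abs[\big]{\,\norm{(\text{proj}_z)\,\text{output}} - 2^{-k/2}}\leq \delta$ per branch and ``dividing through'' --- only yields a relative error of order $2^{k/2}\delta$ after renormalization, which is vacuous unless $\delta$ is exponentially small. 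There is also a conceptual slip in the branch selection: there is no single ``all-correct'' branch; with the conditional Clifford corrections, \emph{every} outcome string $z$ reproduces $V\ket{0}$ exactly on the ideal input, so correctness of the branch cannot be the selection criterion.

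The missing idea is that the error, not the state, must be tracked across branches: since the $2^k$ post-measurement sectors are orthogonal, the squared error $\delta^2$ of the resource decomposition splits (Pythagoras) among the outcomes, so \emph{some} outcome string $z$ carries unnormalized error at most $\delta\,2^{-k/2}$; because the ideal branch amplitude is exactly $2^{-k/2}$ for every $z$ (this is where $\abs{\braket{0}{T}}=\abs{\braket{1}{T}}$ enters, cf.\ Lemma~\ref{lm:unif-T}), renormalizing that particular branch by $2^{k/2}$ restores the error to at most $\delta$ while the number of stabilizer terms does not increase. The paper implements exactly this bookkeeping one measurement at a time: Lemma~\ref{lm:T-gadget} shows that for a balanced measurement at least one outcome satisfies $\norm{\ket{\phi_b}-\ket{\psi_b}}\leq \delta/\sqrt{2}$, and the factor $\sqrt{2}$ from renormalization exactly cancels this gain, so $\chi_\delta$ is preserved with the \emph{same} $\delta$ at every step. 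You need to replace your Lipschitz argument by this averaging/branch-selection estimate (either step-by-step as in Lemma~\ref{lm:T-gadget}, or globally over all $2^k$ branches); with that substitution your proof goes through and coincides with the paper's.
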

The proof is based on the idea of teleportation of each $T$ gate using a magic $\ket{T}$ state. We need the following two technical results for the measurement required in the teleportation of $T$ gate. We first show that under certain conditions, the measurement output has uniform distribution.\footnote{This result is implicitly assumed in the literature, but we provide proof for completeness.}
\begin{lemma}
    \label{lm:unif-T}
    Let $\ket{\psi}$ be an $n$-qubit quantum state. Apply controlled-not gate on the last qubit of $\ket{\psi} \otimes \ket{T}$ controlled on one of the qubits in $\ket{\psi}$ and then measure the last qubit in the computational basis. Then, the output of the measurement has a uniform distribution.
\end{lemma}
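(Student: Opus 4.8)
The plan is to compute the two outcome probabilities explicitly and verify that both equal $\tfrac12$. The first step is to decompose the input according to the value of the control qubit: write $\ket{\psi} = \ket{\psi_0} + \ket{\psi_1}$, where $\ket{\psi_b}$ is the (sub-normalized) component of $\ket{\psi}$ in which the control qubit carries the classical value $b$. Since $\ket{\psi_0}$ and $\ket{\psi_1}$ are supported on orthogonal subspaces — they disagree on the control qubit — we have $\langle \psi_0|\psi_1\rangle = 0$ and $\norm{\ket{\psi_0}}^2 + \norm{\ket{\psi_1}}^2 = \norm{\ket{\psi}}^2 = 1$.

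The second step is to push this decomposition through the controlled-not gate. The gate fixes $\ket{\psi_0}\otimes\ket{T}$ and maps $\ket{\psi_1}\otimes\ket{T}$ to $\ket{\psi_1}\otimes X\ket{T}$, so the state right before the measurement is $\ket{\psi_0}\otimes\ket{T} + \ket{\psi_1}\otimes X\ket{T}$. Projecting the last qubit onto $\ket{b}$ and taking the squared norm, the probability of outcome $b\in\set{0,1}$ equals $\bigl\|\langle b|T\rangle\,\ket{\psi_0} + \langle b|X|T\rangle\,\ket{\psi_1}\bigr\|^2$. Expanding the square and using $\langle\psi_0|\psi_1\rangle = 0$ to kill the cross term, this reduces to $\abs{\langle b|T\rangle}^2\,\norm{\ket{\psi_0}}^2 + \abs{\langle b|X|T\rangle}^2\,\norm{\ket{\psi_1}}^2$.

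The third step is to invoke the balanced structure of the magic state. We have $\abs{\langle 0|T\rangle} = \abs{\langle 1|T\rangle} = \tfrac{1}{\sqrt{2}}$, and since $X$ merely swaps the amplitudes of $\ket{0}$ and $\ket{1}$ in $\ket{T}$ (which have equal modulus), also $\abs{\langle 0|X|T\rangle} = \abs{\langle 1|X|T\rangle} = \tfrac{1}{\sqrt{2}}$. Substituting, both outcome probabilities collapse to $\tfrac12\bigl(\norm{\ket{\psi_0}}^2 + \norm{\ket{\psi_1}}^2\bigr) = \tfrac12$, which is the claim.

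There is no real obstacle here; the one point that deserves care is \emph{why} the control must be taken to be one of the qubits already present in $\ket{\psi}$: this is exactly what forces $\langle\psi_0|\psi_1\rangle = 0$, and hence makes the cross term vanish regardless of the relative phase between $\ket{\psi_0}$ and $\ket{\psi_1}$. It is also worth recording that the computation uses nothing about $\ket{T}$ beyond $\abs{\langle 0|T\rangle} = \abs{\langle 1|T\rangle}$; this is the precise sense in which the teleportation-based argument underlying Lemma~\ref{lm:stab-rank-circuit} is tied to the $\ket{T}$ state (and its Clifford images) and does not extend to arbitrary magic states.
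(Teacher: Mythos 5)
Your proof is correct and is essentially the paper's argument in a mildly repackaged form: the paper expands $\ket{\psi}$ in the full computational basis and computes the outcome probability directly (picking up the phase $e^{i\pi x_1/4}$ term by term), which amounts to exactly your orthogonal decomposition by control-qubit value together with $\abs{\braket{0}{T}}=\abs{\braket{1}{T}}=\tfrac{1}{\sqrt{2}}$. Your closing observation that only the balancedness of $\ket{T}$ is used matches the remark the authors themselves make when explaining why the argument is specific to $\ket{T}$ and its Clifford equivalents.
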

\begin{proof}
    Write $\ket{\psi} = \sum_{x\in\F^n} \alpha_x \ket{x}$. Without loss of generality, we assume that the CNOT is controlled on the first qubit. The state after applying CNOT is
    \begin{align}
        \sum_{x\in\F^n} \alpha_x \ket{x} \otimes \frac{\ket{x_1} + e^{i\pi/4}\ket{1-x_1}}{\sqrt{2}}
    \end{align}
    Hence, the probability that we obtain zero after measuring the last qubit is 
    \begin{align}
        \norm*{(I_2^{\otimes n} \otimes \bra{0})\sum_{x\in\F^n} \alpha_x \ket{x} \otimes \frac{\ket{x_1} + e^{i\pi/4}\ket{1-x_1}}{\sqrt{2}}}^2
        &= \frac{1}{2} \norm*{\sum_{x\in\F^n} e^{x_1i\pi/4}\alpha_x \ket{x} }^2\\
        &=\frac{1}{2} \norm{\psi}^2 = \frac{1}{2},
    \end{align}
    as desired.
\end{proof}
The next lemma states that performing a ``balance'' measurement on one of the qubits does not increase the approximate stabilizer rank for at least one outcome.
\begin{lemma}
    \label{lm:T-gadget}
     Consider measuring the last qubit in an $n$-qubit state $\ket{\phi}$ in the computational basis. Let $p_b$ be the probability that the measurement output is $b$ and $\ket{\widetilde{\phi}_b}$ be the post-measurement state when the output is $b$.
     If $p_0 = p_1$, then $\min_{b=0, 1} \chi_\delta(\ket{\widetilde{\phi}_b}) \leq \chi_\delta(\ket{\phi})$.
\end{lemma}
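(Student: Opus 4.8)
The plan is to take an optimal approximate stabilizer decomposition of $\ket{\phi}$ and show that projecting it onto the two outcomes of the measurement yields, for at least one outcome, a decomposition that is no larger. Concretely, let $\chi_\delta(\ket{\phi}) = M$, so there is $\ket{\psi}$ with $\norm{\ket{\phi} - \ket{\psi}} \leq \delta$ and $\ket{\psi} = \sum_{i=1}^M c_i \ket{s_i}$ for stabilizer states $\ket{s_i}$. Write $\Pi_b = I_2^{\otimes n-1} \otimes \ketbra{b}{b}$ for the (unnormalized) projector onto outcome $b$ on the last qubit. By Lemma~\ref{lm:stabilizer} item 2, each $\Pi_b \ket{s_i}$ is either zero or proportional to a stabilizer state in $\Stab_{n-1}$ (tensored with $\ket b$, which we drop), so $\Pi_b \ket{\psi}$ is a linear combination of at most $M$ stabilizer states. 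The unnormalized post-measurement vectors of $\ket{\phi}$ are $\Pi_0\ket{\phi}$ and $\Pi_1\ket{\phi}$, with $\norm{\Pi_b\ket{\phi}}^2 = p_b = 1/2$, and $\ket{\widetilde{\phi}_b} = \sqrt{2}\,\Pi_b\ket{\phi}$ after normalization.

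The key computation is to control the approximation error after projection. Since projections are contractions, $\norm{\Pi_b\ket{\phi} - \Pi_b\ket{\psi}} \leq \norm{\ket{\phi}-\ket{\psi}} \leq \delta$ for each $b$. Moreover $\norm{\Pi_0\ket{\phi}-\Pi_0\ket{\psi}}^2 + \norm{\Pi_1\ket{\phi}-\Pi_1\ket{\psi}}^2 = \norm{\ket{\phi}-\ket{\psi}}^2 \leq \delta^2$ because $\Pi_0$ and $\Pi_1$ are orthogonal and sum to the identity. Hence there exists $b \in \set{0,1}$ with $\norm{\Pi_b\ket{\phi}-\Pi_b\ket{\psi}}^2 \leq \delta^2/2$, and for that $b$ we get $\norm{\ket{\widetilde{\phi}_b} - \sqrt{2}\,\Pi_b\ket{\psi}} = \sqrt 2\,\norm{\Pi_b\ket{\phi}-\Pi_b\ket{\psi}} \leq \sqrt 2 \cdot \delta/\sqrt 2 = \delta$. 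The vector $\sqrt 2\,\Pi_b\ket{\psi} = \sum_{i=1}^M \sqrt 2\, c_i\, \Pi_b\ket{s_i}$ is (after discarding the last $\ket b$ factor and dropping zero terms) a combination of at most $M$ states in $\Stab_{n-1}$, so $\chi_\delta(\ket{\widetilde{\phi}_b}) \leq M = \chi_\delta(\ket{\phi})$, which is the claim.

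A small point that needs care is that $\ket{\psi}$ need not be normalized and need not satisfy $\norm{\Pi_0\ket{\psi}} = \norm{\Pi_1\ket{\psi}}$; the hypothesis $p_0=p_1$ is a statement about $\ket{\phi}$, not about $\ket{\psi}$. This is exactly why I do not try to rescale the two halves symmetrically — I only use $p_b = 1/2$ to identify $\ket{\widetilde{\phi}_b}$ with $\sqrt 2\,\Pi_b\ket{\phi}$, and then compare $\sqrt 2\,\Pi_b\ket{\phi}$ with $\sqrt 2\,\Pi_b\ket{\psi}$ directly. I should also double-check the degenerate cases: if some $\Pi_b\ket{s_i} = 0$ it simply drops out and only helps; if $\Pi_b\ket{\phi}$ happened to vanish then $p_b = 0 \neq 1/2$, contradicting the hypothesis, so both branches are genuine. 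The main (and only real) obstacle is this bookkeeping around normalization and the definition of $\chi_\delta$ on unnormalized vectors; the geometric heart — that an orthogonal pair of projections splits the squared error additively, so one side inherits at most half of it — is routine Pythagoras, identical in spirit to the argument in Lemma~\ref{lm:stab-rank-nec}.
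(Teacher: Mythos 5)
Your proof is correct and follows essentially the same route as the paper's: take an optimal $M$-term approximate decomposition, split the error additively over the two orthogonal measurement branches via Pythagoras so one branch has squared error at most $\delta^2/2$, use $p_0=p_1=1/2$ to rescale by $\sqrt{2}$ back to error $\delta$, and note that projecting each stabilizer term onto $\ket{b}$ on the last qubit leaves at most $M$ stabilizer states (Lemma~\ref{lm:stabilizer}, item 2). Your added remarks on normalization of $\ket{\psi}$ and the degenerate cases are sound but do not change the argument.
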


\begin{proof}
    Let $\chi_\delta(\ket{\phi}) = M$ and consider $c_1, \cdots, c_M \in \C$ and $\ket{s_1}, \cdots, \ket{s_M}\in \Stab_n$ such that
    \begin{align}
        \norm*{\ket{\phi} - \sum_{i=1}^M c_i \ket{s_i}} \leq \delta.
    \end{align} 
    We also define $\ket{\psi} = \sum_{i=1}^M c_i \ket{s_i}$ and decompose 
     $\ket{\phi} = \ket{\psi_0} \otimes \ket{0} + \ket{\phi_1} \otimes \ket{1}$ and $\ket{\psi} = \ket{\psi_0} \otimes \ket{0} + \ket{\psi_1} \otimes \ket{1}$.
     Since $\ket{0}$ and $\ket{1}$ are orthogonal, we have
     \begin{align}
         \norm{\ket{\phi_0} - \ket{\psi_0}}^2 +          \norm{\ket{\phi_1} - \ket{\psi_1}}^2 \leq \delta^2.
     \end{align}
     Hence,
     \begin{align}
         \min_{b=0, 1} \norm{\ket{\phi_b} - \ket{\psi_b}} \leq \frac{\delta}{\sqrt{2}}.
     \end{align}
     Without loss of generality, assume that the minimum is achieved for $b=0$. Since the post-measurement state on the first $n - 1$ qubits is $\sqrt{2} \ket{\phi_0}$ and $\norm{\ket{\widetilde{\phi}_0} - \sqrt{2}\ket{\psi_0}} \leq \delta$. Finally, we  note that 
     \begin{align}
         \sqrt{2}\ket{\psi_0} &= (I_2^{\otimes n - 1}\otimes \bra{0})\ket{\psi} \\
         &= \sum_{i=1}^M c_i (I_2^{\otimes n - 1} \otimes \bra{0}) \ket{s_i}.
     \end{align}
     which is a linear combination of at most $M$ stabilizer states. This proves that $\chi_\delta(\ket{\widetilde{\phi}_0}) \leq M$ as desired.
\end{proof}

\begin{proof}[Proof of Lemma~\ref{lm:stab-rank-circuit}]
    \begin{figure}
        \centering
        \begin{quantikz}
        & \qw\gategroup[2,steps=3,style={dashed,
        rounded corners,fill=blue!20, inner xsep=2pt},
        background]{}  & \ctrl{1}&  \gate{S} & \qw \\
        &\ket{T} & \targ{}  & \meter{} \vcw{-1}
        \end{quantikz}
        \caption{Gadget for implementing $T$ gate}
        \label{fig:gadget-T}
    \end{figure}
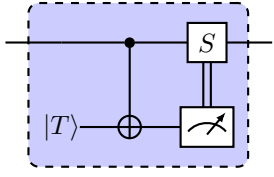
    We replace each $T$ gate in the circuit with the gadget introduced in \cite{Gottesman_Chuang_1999} as in Fig.~\ref{fig:gadget-T}. We index $n + k - i +1$ the second qubit in the gadget corresponding to the $i^{th}$ $T$ gate in $V$ and denote by $x_i\in\set{0, 1}$ the output of the measurement in that gadget. We note that $x = (x_1, \cdots, x_k)$ has uniform distribution over $\set{0, 1}^k$ by Lemma~\ref{lm:unif-T}. We can  formally express the equivalence of gadget-based implementation of as 
    \begin{align}
    V\ket{0^n} = 2^{k/2}C_k^{x_k} (I_2 ^{\otimes n}\otimes \bra{x_k}) C_{k-1}^{x_{k-1}} (I_2 ^{\otimes n + 1} \otimes \bra{x_{k-1}})  \cdots C_1^{x_1} (I_2 ^{\otimes n + k - 1} \otimes \bra{x_1}) C_0 \ket{0^n} \otimes \ket{T}^{\otimes k}.
    \end{align}
    where $C_i^{b}$ is a Clifford circuit acting on $n + k - i$ first qubits for all $b\in\set{0, 1}$ and $i\in[n]$.
    In the beginning, $\chi_\delta(\ket{0^n} \otimes \ket{T}^{\otimes k}) = \chi_\delta(\ket{T}^{\otimes k}) $. By Lemma~\ref{lm:T-gadget}, at each measurement, the approximate stabilizer rank does not increase for one output. As approximate stabilizer rank is invariant under Clifford operations, we conclude that for one particular choice of $x_1, \cdots, x_n$
    \begin{align}
        \chi_\delta(\ket{T}^{\otimes k}) &\geq
        \chi_\delta(2^{k/2}C_k^{x_k} (I_2 ^{\otimes n} \otimes \bra{x_k}) C_{k-1}^{x_{k-1}} (I_2 ^{\otimes n + 1}\otimes \bra{x_{k-1}})  \cdots C_1^{x_1} (I ^{\otimes n + k - 1} \otimes \bra{x_1}) C_0 \ket{0^n} \otimes \ket{T}^{\otimes k})\\
        &=\chi_\delta(V\ket{0^n}).
    \end{align}
    
    
    
\end{proof}

\subsection{Concluding the proof of Theorem~\ref{th:lower-bound-stab}}
\label{sec:proof-lb}
Let $c n$ for constant $c > 0$ be the linear term in the statement of Lemma~\ref{lm:t-complex2} and $C$ be the constant in the statement of Lemma~\ref{lm:random-stab-rank}. Given a positive integer $m$, we choose the largest $n$ such that $c n 2^{n/2} \leq m$. By Lemma~\ref{lm:random-stab-rank}, there exists  $n$-qubit state $\ket{\phi}$ with $\chi_{\delta + 4^{-n}}(\ket{\phi}) \geq C\frac{(1-(\delta + 4^{-n})^2)^22^n}{n^2}$  when $n\geq 2\log\frac{1}{1-(\delta+4^{-n})^2} + 9 = O(1)$. According to Lemma~\ref{lm:t-complex2} and our assumption that $cn2^{n/2}\leq m$, there exists a quantum circuit consisting of Clifford gates and at most $m$ number of  $T$ gates such that $\norm{\ket{\phi} |0^{\lambda}\rangle - V(\ket{0^n}|0^\lambda\rangle)} \leq 4^{-n}$ for some $\lambda > 0$. Using Lemma~\ref{lm:stab-rank-circuit}, we  have 
\begin{align}
    \chi_\delta(\ket{T}^{\otimes m})
    &\geq \chi_\delta(V|0^{n + \lambda}\rangle)\\
    &{\geq} \chi_{\delta + 4^{-n}}(\ket{\phi}|0^\lambda\rangle)\\
    &\stackrel{(a)}{\geq} \chi_{\delta + 4^{-n}}(\ket{\phi})\\
    &\geq C\frac{(1-(\delta+4^{-n})^2)^2 2^n}{n^2}.
\end{align}
where $(a)$ follows since $I \otimes \ket{0^\lambda} \bra{0^\lambda}$ maps stabilizer states to scalar multiples of stabilizer states or $0$, and therefore, any approximate decomposition of $\ket{\phi}\ket{0^\lambda}$ into stabilizer states yields a decomposition of $\ket{\phi}$ with the same number of terms and approximation parameter by first applying $I \otimes \ket{0^\lambda} \bra{0^\lambda}$ to each stabilizer term and then tensoring out the last $\lambda$ qubits.
Since $n$ was largest integer such that $c n 2^{n/2} \leq m$, we have $m < \sqrt 2 c (n+1) 2^{n/2}$. Furthermore, we have $2^{n/2} \leq c n 2^{n/2} \leq m$ and therefore $n\leq 2 \log(m)$. Combining these two inequalities, we obtain that $2^{n} > \frac{m}{\sqrt 2 c(2 \log m + 1)}$. Hence, substituting $n$ and $2^n$, we get 
\begin{align}
    C\frac{(1-(\delta+4^{-n})^2)^22^n}{n^2} 
    & > C\frac{(1-(\delta+\frac{4c^4(1+2\log m)}{m^2})^2)^2}{8c^2(1+2 \log m)^2\log^2m} m^2 \\
    &= \Omega\pr*{\frac{(1-\delta^2)^2m^2}{\log^4 m}}.
\end{align}

\begin{remark}
    We proved the result for a fixed $\delta$. However, we highlight the proof works for a sequence  $\set{\delta_m}_{m\geq 1}$ where $\delta_m \to 1$ slowing enough (e.g., $\delta_m = 1 - O(\frac{\log m}{m})$ is enough). More precisely, as long as our constraint $n\geq  2\log\frac{1}{1-(\delta_m+4^{-n})^2} + 9  $ is consistent with our other constraint that $c n 2^{n/2}\leq m$, our argument goes through.
\end{remark}

We conclude this section by stating a corollary of our proof techniques on the existence of quantum states with low circuit complexity but high approximate stabilizer rank.
\begin{corollary}
\label{cor:circuit-rank}
    Let $\delta$, $n$, and $M$ be fixed such that $M = O( (1-\delta^2)^22^nn^{-2})$. Then, there exists a quantum circuit $V$ consisting of $M\poly\log(M) $ gates such that $\chi_\delta(V\ket{0^n}) \geq M$.
\end{corollary}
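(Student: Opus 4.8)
The plan is to run the argument of Section~\ref{sec:proof-lb} ``in reverse'': instead of extracting from it the lower bound on $\chi_\delta(\ket{T}^{\otimes m})$, I retain the intermediate circuit $V$ that prepares a Haar-random high-rank state and read off a bound on its size. Given $M$ and $\delta$, fix $n$ to be the least integer with $M\le C(1-(\delta+4^{-n})^2)^2\,2^n/n^2$ and $n\ge 2\log\frac{1}{1-(\delta+4^{-n})^2}+9$, where $C$ is the constant of Lemma~\ref{lm:random-stab-rank}. Since $4^{-n}\to 0$ the shift $\delta\mapsto\delta+4^{-n}$ only affects constants, so the hypothesis $M=O\big((1-\delta^2)^2\,2^n n^{-2}\big)$ guarantees that such an $n$ exists, and the same kind of minimality computation as in Section~\ref{sec:proof-lb} gives $n=O(\log M)$ and $2^n=O\big(Mn^2/(1-\delta^2)^2\big)$.

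Put $\delta':=\delta+4^{-n}$. By the second part of Lemma~\ref{lm:random-stab-rank} there is an $n$-qubit state $\ket{\phi}$ with $\chi_{\delta'}(\ket{\phi})\ge C(1-\delta'^2)^2\,2^n/n^2\ge M$. By Lemma~\ref{lm:t-complex2}, $\tau_{4^{-n}}(\ket{\phi})=O(n2^{n/2})$, so there is a circuit $V$ on $n+\lambda$ qubits built from Clifford gates and $O(n2^{n/2})$ $T$ gates with $\norm{V(\ket{0^n}\ket{0^\lambda})-\ket{\phi}\ket{0^\lambda}}\le 4^{-n}$. To count \emph{all} gates of $V$ and not only the $T$-gates, I use the description in the sketch of Lemma~\ref{lm:t-complex2}: synthesis proceeds in $n$ rounds, and round $i$ invokes a data-lookup oracle over $2^{i-1}$ addresses of $O(n)$-bit words; although that oracle has been engineered to use only $2^{(i-1)/2}\poly(n)$ $T$-gates, loading the $2^{i-1}$ entries still costs $2^{i-1}\poly(n)$ Clifford gates, so over all rounds $V$ has $2^n\poly(n)$ gates in total. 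Substituting $n=O(\log M)$ and $2^n=O(Mn^2/(1-\delta^2)^2)$, this is $M\poly\log M$, with the implicit constant depending only on the fixed $\delta$.

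The rank bound is now exactly the chain of inequalities in Section~\ref{sec:proof-lb}. If $\ket{\psi}$ satisfies $\norm{\ket{\psi}-V(\ket{0^n}\ket{0^\lambda})}\le\delta$ and $\chi(\ket{\psi})=r$, then $\norm{\ket{\psi}-\ket{\phi}\ket{0^\lambda}}\le\delta+4^{-n}=\delta'$; applying $I\otimes\ketbra{0^\lambda}{0^\lambda}$ to each of the $r$ stabilizer terms of a minimal decomposition of $\ket{\psi}$ and tracing out the last $\lambda$ qubits (using that a stabilizer state projected onto $\ket{0^\lambda}$ on a block of qubits is a scalar multiple of a stabilizer state or $0$) yields a decomposition of $\ket{\phi}$ with at most $r$ terms and approximation $\delta'$, so $r\ge\chi_{\delta'}(\ket{\phi})\ge M$. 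Hence $\chi_\delta\big(V(\ket{0^n}\ket{0^\lambda})\big)\ge M$, the corollary writing $\ket{0^n}$ as shorthand for $\ket{0^{n+\lambda}}$.

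The main obstacle is the two-sided bookkeeping on $n$: it must be large enough for Lemma~\ref{lm:random-stab-rank} to guarantee rank $\ge M$ at the \emph{inflated} error $\delta+4^{-n}$, yet small enough that the total gate count stays within $M\poly\log M$; and here the governing scale is $2^n$ rather than $2^{n/2}$, because the data-lookup oracle must still load all $2^{i-1}$ entries with Clifford gates even though its $T$-count has been squeezed to $2^{(i-1)/2}$. This is precisely the role of the hypothesis $M=O((1-\delta^2)^2\,2^n n^{-2})$, which should be read as pinning $2^n$ to within a $\poly\log$ factor of $M$. A secondary, routine point is verifying the $2^n\poly(n)$ bound on the \emph{total} gate count of the state-synthesis routine of \cite{low2018trading}, which is immediate from its round-by-round description recalled after Lemma~\ref{lm:t-complex2}.
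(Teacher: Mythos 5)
Your argument is sound in substance, but it takes a noticeably heavier route than the paper and, strictly speaking, proves a slightly different statement. The paper's proof is a one-liner by comparison: pick the smallest $k$ with $C(1-\delta^2)^2 2^k/k^2 \geq M$, take the Haar-random $k$-qubit state of Lemma~\ref{lm:random-stab-rank} with $\chi_\delta \geq M$, and synthesize it \emph{exactly} with $O(k2^k) = M\poly\log(M)$ two-qubit gates via \cite[Theorem~3.3]{knill1995approximation}, finally tensoring with $I_2^{\otimes n-k}$ so the circuit really acts on the given $n$ qubits. Because the synthesis is exact and ancilla-free, there is no $\delta \mapsto \delta + 4^{-n}$ bookkeeping and no projection onto $\ket{0^\lambda}$. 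Your route instead reuses the machinery of Section~\ref{sec:proof-lb}, which costs you two things. First, Lemma~\ref{lm:t-complex2} as stated only bounds the $T$-count and explicitly allows an arbitrary number of Clifford gates, so the total gate count $2^n\poly(n)$ is not available from the paper's lemmas; you have to extract it from the internals of \cite{low2018trading}, which you do plausibly, but note that once you are counting \emph{all} gates the $T$-count optimization buys you nothing --- plain exact synthesis already achieves $2^n\poly(n)$, which is why the paper bypasses the trading lemma entirely here. Second, your circuit lives on $n'+\lambda$ qubits with $\lambda = \tilde O(2^{n'/2}) = \tilde O(\sqrt{M})$, whereas Corollary~\ref{cor:circuit-rank} fixes $n$ in the hypothesis and asserts a circuit acting on $\ket{0^n}$; reading $\ket{0^n}$ as $\ket{0^{n+\lambda}}$ weakens the qubit-count parameter (the paper's construction needs no ancillas at all). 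Neither point breaks the gate-count-versus-rank content of the corollary, but if you want the statement as written, the exact-synthesis route is both simpler and sharper.
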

\begin{proof}
    Let $C$ be the constant in Lemma~\ref{lm:random-stab-rank}. We choose $k$ the smallest integer such that $C(1-\delta^2)^2 \frac{2^k}{k^2} \geq M$. By Lemma~\ref{lm:random-stab-rank}, there exists $k$ qubit state $\ket{\psi}$ such that $\chi_\delta(\ket{\psi}) \geq M$. By \cite[Theorem 3.3]{knill1995approximation}, there exists a quantum circuit $V$ consisting of $O(k2^k) = M\poly\log(M)$ two-qubit gates such that $V\ket{0^k} = \ket{\psi}$. Then, $V\otimes I_2^{\x n-k}$ has the desired properties.
\end{proof}

\section{Approximate stabilizer rank of $t$-designs}
\label{sec:designs}
This section discusses what happens if we replace the Haar measure in Lemma~\ref{lm:random-stab-rank} with a $t$-design. 
Initially, we discuss whether we can expect a bound on the approximate stabilizer rank of $t$-designs. 
The following proposition shatters any such hope for $t=O(1)$.
\begin{prop}
There exist  absolute constants $C_1>0$ and $C_2>0$ such that for all $t$, $n\geq C_2 t^2$, and $1>\epsilon>0$, there exists an $\epsilon$-approximate unitray $t$-design for $n$-qubits where the following holds. If $U$ is sampled according to the $t$-design, $\chi_{\delta}(U\ket{0^n}) \leq 2^{C_1 \log^2(t)(t^4 + t\log(1/\epsilon))}$ with probability one.
\end{prop}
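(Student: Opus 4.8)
The plan is to combine the $t$-design construction from Clifford$+T$ circuits (Theorem~\ref{th:t-design-exsitence}) with the fact that the stabilizer rank of the output of such a circuit is controlled by the stabilizer rank of the $\ket{T}$-state resource (Lemma~\ref{lm:stab-rank-circuit}), and then bound the latter trivially. Concretely, fix $t$, $n \geq C_2 t^2$, and $\epsilon>0$, and let $\nu$ be the $\epsilon$-approximate $t$-design guaranteed by Theorem~\ref{th:t-design-exsitence}: every $U$ in the support of $\nu$ is a product of Clifford gates together with at most $k := C_1\log^2(t)(t^4 + t\log(1/\epsilon))$ many $T$ gates. This $\nu$ will be the design witnessing the proposition.

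The next step is to observe that for any such $U$, Lemma~\ref{lm:stab-rank-circuit} gives
\begin{align}
    \chi_\delta(U\ket{0^n}) \leq \chi_\delta(\ket{T}^{\otimes k}).
\end{align}
Here a small bookkeeping point must be addressed: Lemma~\ref{lm:stab-rank-circuit} as stated takes a circuit $V$ acting on some number of qubits and bounds $\chi_\delta(V\ket{0})$; since the $t$-design circuit $U$ acts exactly on the $n$ qubits (no extra ancillae are needed beyond what is internal to the construction, and internal ancillae starting and ending in stabilizer states are handled as in the proof of Theorem~\ref{th:lower-bound-stab}), we directly get the bound above. Then I would bound $\chi_\delta(\ket{T}^{\otimes k}) \leq \chi_0(\ket{T}^{\otimes k}) \leq 2^k$, using the elementary fact that a single $\ket{T}$ has exact stabilizer rank at most $2$ (write $\ket{T} = a\ket{0} + b\ket{+}$, say, so it is a linear combination of two stabilizer states) and that stabilizer rank is submultiplicative under tensor product, so $\chi(\ket{T}^{\otimes k}) \leq 2^k$. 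Chaining these inequalities yields $\chi_\delta(U\ket{0^n}) \leq 2^k = 2^{C_1\log^2(t)(t^4 + t\log(1/\epsilon))}$, and this holds for \emph{every} $U$ in the support of $\nu$, hence with probability one, which is exactly the claimed statement (after renaming the absolute constants $C_1, C_2$ to match those of Theorem~\ref{th:t-design-exsitence}).

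There is essentially no hard analytic obstacle here; the proposition is a short corollary. The only points requiring care are: (i) making sure the constants $C_1$ and $C_2$ in the statement are literally inherited from Theorem~\ref{th:t-design-exsitence} (the exponent $C_1\log^2(t)(t^4 + t\log(1/\epsilon))$ is exactly the $T$-count bound there, so the match is immediate); (ii) confirming that the ancilla qubits used internally by the $t$-design construction do not inflate the bound — this is the same observation used in step $(a)$ of the proof of Theorem~\ref{th:lower-bound-stab}, namely that projecting onto $\ket{0^\lambda}$ sends stabilizer states to (scalar multiples of) stabilizer states, so an approximate decomposition of the padded state restricts to one of the unpadded state with no increase in the number of terms; and (iii) the trivial bound $\chi(\ket{T}^{\otimes k}) \leq 2^k$, which one should perhaps state explicitly as a one-line sublemma since it is the only place the value $2$ (rather than the generic single-qubit bound) is used. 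The conceptual content of the proposition — which the surrounding text emphasizes — is the contrast with Lemma~\ref{lm:random-stab-rank}: for $t = O(1)$ the bound $2^{O(\poly\log)}$ is subpolynomial in the dimension $2^n$, so $t$-designs with $t$ constant cannot reproduce the $\Omega(2^n/n^2)$ lower bound, and this is why the main theorem genuinely needs the full Haar measure rather than a low-order design.
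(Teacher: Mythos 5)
Your proposal is correct and follows essentially the same route as the paper: invoke Theorem~\ref{th:t-design-exsitence} to get a design whose every unitary uses at most $m = C_1\log^2(t)(t^4+t\log(1/\epsilon))$ $T$ gates, apply Lemma~\ref{lm:stab-rank-circuit} to get $\chi_\delta(U\ket{0^n})\leq \chi_\delta(\ket{T}^{\otimes m})$, and finish with the trivial bound $\chi_\delta(\ket{T}^{\otimes m})\leq 2^m$. Your extra remarks (the explicit rank-$2$ decomposition of $\ket{T}$ with submultiplicativity, and the ancilla bookkeeping) only make explicit what the paper leaves implicit.
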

\begin{proof}
    By Theorem~\ref{th:t-design-exsitence}, there exists an $\epsilon$-approximate $t$-design such that if $U$ is sampled from the $t$-design, $U$ with probability one has at most $m = C_1 \log^2(t)(t^4 + t\log(1/\epsilon))$ $T$-gates and arbitrary number of Clifford gates. By Lemma~\ref{lm:stab-rank-circuit}, we have
    \begin{align}
        \chi_\delta(U\ket{0^n}) \leq \chi_\delta(\ket{T}^{\otimes m}) \leq 2^{m}.
    \end{align}
\end{proof}
This proposition together with Lemma~\ref{lm:random-stab-rank} imply that when $t=O(1)$, the approximate stabilizer rank of a $t$-design can range from $O(1)$ to $\Omega(n^{-2}2^n)$. 
However, we provide a lower bound on the approximate rank when $t$ grows with $n$ using a similar approach as in the proof of Lemma~\ref{lm:random-stab-rank}.
\begin{lemma}
    \label{lm:t-design-rank}
    Let $U$ be distributed according to an $\epsilon$-approximate $t$-design for $n$-qubits. Then
    \begin{align}
        \P{\chi_\delta(U\ket{0^n}) \leq M} \leq e^{0.54 n^2M} \frac{\pr*{\frac{M+t-1}{2^n + t - 1}}^t + \epsilon}{(1-\delta^2)^t}
    \end{align}
\end{lemma}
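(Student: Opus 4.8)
The plan is to mimic the proof of Lemma~\ref{lm:random-stab-rank} almost verbatim, replacing the single application of L\'evy's concentration theorem with the $t$-th moment bound that is available for a $t$-design via a Markov-type argument. First I would invoke the necessary condition from Lemma~\ref{lm:stab-rank-nec}: if $\chi_\delta(U\ket{0^n})\leq M$ then there is a set $S\subset\Stab_n$ with $|S|=M$ and $\norm{P_S U\ket{0^n}}^2\geq 1-\delta^2$. Taking a union bound over all $\binom{|\Stab_n|}{M}\leq |\Stab_n|^M \leq e^{0.54 n^2 M}$ choices of $S$ (using Lemma~\ref{lm:stabilizer}, item 3, valid for $n\geq 6$), it suffices to bound, for a fixed rank-$M$ projection $P$,
\begin{align}
    \P{\norm{P U\ket{0^n}}^2 \geq 1-\delta^2}.
\end{align}

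The key step is to control this probability using the $t$-design property. Since $\norm{P U\ket{0^n}}^2 = \tr(P U\kb{0^n}U^\dagger)$ is a nonnegative random variable, Markov's inequality applied to its $t$-th power gives
\begin{align}
    \P{\norm{P U\ket{0^n}}^2 \geq 1-\delta^2} \leq \frac{\E{\tr(P U\kb{0^n}U^\dagger)^t}}{(1-\delta^2)^t}.
\end{align}
Now $\tr(P U\kb{0^n}U^\dagger)^t = \tr\pr{P^{\x t} (U\kb{0^n}U^\dagger)^{\x t}} = \tr\pr{P^{\x t} U^{\x t}\kb{0^n}^{\x t}(U^\dagger)^{\x t}}$, so its expectation over the $t$-design $\nu$ equals $\tr\pr{P^{\x t} M_t^\nu(\kb{0^n}^{\x t})}$. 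Comparing to the Haar value via the diamond-norm bound $\norm{M_t^\nu - M_t^{\textnormal{Haar}}}_\diamond \leq \epsilon$ (which controls the difference of the two channels applied to the trace-$1$ state $\kb{0^n}^{\x t}$, and $P^{\x t}$ has operator norm $1$), we get
\begin{align}
    \E{\tr(P U\kb{0^n}U^\dagger)^t} \leq \tr\pr{P^{\x t}\,\E[\textnormal{Haar}]{\kb{\phi}^{\x t}}} + \epsilon.
\end{align}
By Lemma~\ref{lm:haar-expectation} the Haar term is $\binom{M+t-1}{t}\big/\binom{2^n+t-1}{t} \leq \pr*{\frac{M+t-1}{2^n+t-1}}^t$, the last inequality because each of the $t$ factors $\frac{M+j}{2^n+j}$ is at most $\frac{M+t-1}{2^n+t-1}$ for $0\leq j\leq t-1$ when $M\leq 2^n$. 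Assembling the union bound with this estimate yields
\begin{align}
    \P{\chi_\delta(U\ket{0^n})\leq M} \leq e^{0.54 n^2 M}\cdot\frac{\pr*{\frac{M+t-1}{2^n+t-1}}^t + \epsilon}{(1-\delta^2)^t},
\end{align}
which is the claim.

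The main obstacle — really the only nonroutine point — is justifying that the diamond-norm closeness of the channels $M_t^\nu$ and $M_t^{\textnormal{Haar}}$ transfers cleanly to closeness of the scalars $\tr(P^{\x t} M_t^\nu(\kb{0^n}^{\x t}))$. This follows because $|\tr(A(\Phi-\Psi)(\rho))| \leq \norm{A}\,\norm{(\Phi-\Psi)(\rho)}_1 \leq \norm{A}\,\norm{\Phi-\Psi}_\diamond\,\norm{\rho}_1$, applied with $A = P^{\x t}$ (operator norm $1$, since $P$ is a projection), $\rho = \kb{0^n}^{\x t}$ (trace norm $1$), and $\Phi-\Psi = M_t^\nu - M_t^{\textnormal{Haar}}$; one should double-check the definition of diamond norm in the Preliminaries allows dropping the ancilla, which it does since it is a supremum over $d\geq 0$ including $d=0$. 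Everything else is the same union-bound/moment skeleton as Lemma~\ref{lm:random-stab-rank}, with the $t$-th moment playing the role previously played by the Lipschitz/L\'evy estimate.
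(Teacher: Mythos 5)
Your proposal is correct and follows essentially the same route as the paper's proof: the same union bound giving the $e^{0.54 n^2 M}$ factor, Markov's inequality on the $t$-th moment, the diamond-norm comparison of $M_t^{\nu}$ with the Haar twirl, and Lemma~\ref{lm:haar-expectation} for the Haar term. The only difference is that you spell out details the paper leaves implicit (the operator-norm/trace-norm justification of the $\epsilon$ error and the factor-by-factor bound on the ratio of binomial coefficients), which is fine.
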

\begin{proof}
    With the same argument as in the proof of Lemma~\ref{lm:random-stab-rank}, we obtain that 
    \begin{align}
        \P{\chi_\delta(U\ket{0^n}) \leq M} \leq e^{0.54 n^2M} \sup_{P\in \Proj{(\C^2)^{\x n}, M}} \P{\norm{PU\ket{0^n}}^2 \geq 1 - \delta^2}.
    \end{align}
    Using Markov's inequality, we further have
    \begin{align}
        \P{\norm{PU\ket{0^n}}^2 \geq 1 - \delta^2}
        &\leq \frac{\E{\norm{PU\ket{0^n}}^{2t}}}{(1-\delta^2)^t}.\label{eq:t-union}
    \end{align}
    We can also write
    \begin{align}
        \E{\norm{PU\ket{0^n}}^{2t}} 
        &= \E{\tr\pr{PU\ketbra{0^n}{0^n} U^\dagger}^t}\\
        &= \E{\tr\pr{P^{\x t}U^{\x t}(\ketbra{0^n}{0^n})^{\x t} (U^\dagger)^{\x t}}}\\
        &= \tr\pr{P^{\x t} M_t^{\nu}((\ketbra{0^n}{0^n})^{\x t})}\\
        &\stackrel{(a)}{\leq} \tr\pr{P^{\x t} M_t^{Haar}((\ketbra{0^n}{0^n})^{\x t})} + \epsilon \\
        &\stackrel{(b)}{\leq} \frac{(M+t -1) \cdots (M+1)M}{(2^n+t-1)\cdots (2^n+1)2^n} + \epsilon\\
        &\leq \pr*{\frac{M+t-1}{2^n + t - 1}}^t + \epsilon\label{eq:t-markov}
    \end{align}
    where $M_t^{\nu}$ is defined in Definition~\ref{def:t-design}, $(a)$ follows since $\nu$ is an $\epsilon$-approximate $t$-design, and $(b)$ follows from Lemma~\ref{lm:haar-expectation}. Substituting the bound from \eqref{eq:t-markov} into \eqref{eq:t-union} completes the proof.
\end{proof}
In the following, we provide an upper bound on how many gates we need to get a state with approximate stabilizer rank $\poly(n)$. Perhaps surprisingly, the bound we obtain using $t$-designs is weaker than what we obtain from the Haar measure in Corollary~\ref{cor:circuit-rank}.
\begin{prop}
\label{th:poly-stab-rank}
    Let $d\geq 1$ and $1>\delta>0$. There exists a quantum circuit $V$ with $O(\log^5(n) n^{3 + 5d + \frac{3\sqrt{(d+1)}}{\sqrt{\log n}} })$ gates and $\chi_{\delta}(V\ket{0^n}) \geq n^{d}$.
\end{prop}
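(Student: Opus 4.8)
The plan is to rerun the probabilistic argument behind Corollary~\ref{cor:circuit-rank}, but with the Haar measure replaced by a $t$-design that is realized by short random circuits: apply Lemma~\ref{lm:t-design-rank} in place of Lemma~\ref{lm:random-stab-rank}, and read the gate count off Theorem~\ref{th:t-design-exsitence2}. Fix $M=\lceil n^d\rceil$. I want to choose the parameters $t$ and $\epsilon$ so that the bound in Lemma~\ref{lm:t-design-rank} is strictly below $1$; an $\epsilon$-approximate $t$-design $\nu$ of the form produced by Theorem~\ref{th:t-design-exsitence2} then contains in its support a unitary $U$ with $\chi_\delta(U\ket{0^n})>M\ge n^d$, and I let $V$ be the random circuit realizing that $U$ — which, by construction, uses the advertised number of gates.

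For the parameters: $M$ and $t$ will both be polynomial in $n$, so $\frac{M+t-1}{2^n+t-1}\le\frac{M+t}{2^n}=2^{-(1-o(1))n}$ and hence $\bigl(\frac{M+t-1}{2^n+t-1}\bigr)^t=e^{-(1-o(1))(\ln 2)\,nt}$. Comparing this with the prefactor $e^{0.54n^2M}$ and the denominator $(1-\delta^2)^t$ shows that it is enough to take $t$ equal to a large enough constant (depending on $\delta$) times $nM$, that is $t=\Theta(n^{d+1})$: then $(\ln 2)\,nt$ dominates $0.54n^2M+t\ln\tfrac1{1-\delta^2}$ for all large $n$, so the first summand of the bound is at most $\tfrac12$. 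For the additive error, choose $\epsilon$ so that $\ln\tfrac1\epsilon\ge 0.54n^2M+t\ln\tfrac1{1-\delta^2}+1$; this forces $\log\tfrac1\epsilon=\Theta(n^2M)=\Theta(n^{d+2})$ and drives the second summand below $\tfrac12$ as well. Since $t=\poly(n)$, the size hypothesis $n\ge 2\log(4t)+1.5\sqrt{\log(4t)}$ of Theorem~\ref{th:t-design-exsitence2} holds for all sufficiently large $n$, so the design exists and $\P{\chi_\delta(U\ket{0^n})\le M}<1$, producing $U$.

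It remains to count gates. Substituting $t=\Theta(n^{d+1})$ and $\log\tfrac1\epsilon=\Theta(n^{d+2})$ into the bound $Cn\ln^5(t)\,t^{4+3/\sqrt{\log t}}\,(2nt+\log\tfrac1\epsilon)$ of Theorem~\ref{th:t-design-exsitence2}: we have $\ln^5(t)=O(\log^5 n)$, and since $\log_2 t=(1+o(1))(d+1)\log_2 n$ the factor $t^{3/\sqrt{\log t}}$ equals $n^{(1+o(1))\cdot 3\sqrt{d+1}/\sqrt{\log n}}$, while $2nt+\log\tfrac1\epsilon=\Theta(n^{d+2})$. A routine simplification then collapses the product to the gate count claimed in the proposition.

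The difficulty is entirely bookkeeping rather than conceptual: one has to tune $t$ and $\epsilon$ so that the three competing contributions in Lemma~\ref{lm:t-design-rank} — the stabilizer-count blow-up $e^{0.54n^2M}$, the measure-concentration factor $\bigl(\frac{M+t-1}{2^n+t-1}\bigr)^t$, and the design error $\epsilon$ — are all simultaneously under control, and then chase $t$ and $\log\tfrac1\epsilon$ through the somewhat ornate gate-count formula of Theorem~\ref{th:t-design-exsitence2}; the unusual exponent $3\sqrt{d+1}/\sqrt{\log n}$ is precisely the image of the $t^{3/\sqrt{\log t}}$ factor under the substitution $t=\Theta(n^{d+1})$. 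This is also why the bound obtained this way is quantitatively so much weaker than the Haar-based Corollary~\ref{cor:circuit-rank}: one is forced to use a $t$-design with $t$ as large as $\Theta(n^{d+1})$, whereas $O(\log n)$ qubits and $\poly(n)$ gates already suffice in the Haar setting.
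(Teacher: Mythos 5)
Your proposal is correct and follows essentially the same route as the paper's proof: it invokes Lemma~\ref{lm:t-design-rank} with $t=\Theta(n^{d+1})$ and $\log(1/\epsilon)=\Theta(n^{d+2})$ (the paper fixes $t=n^{d+1}$, $\epsilon=e^{-n^{d+2}}$) to get $\P{\chi_\delta(U\ket{0^n})\le n^d}<1$, and then reads the gate count off Theorem~\ref{th:t-design-exsitence2} by the same substitution, including the $t^{3/\sqrt{\log t}}=n^{3\sqrt{d+1}/\sqrt{\log n}}$ bookkeeping. The only differences are cosmetic (your slightly more explicit, $\delta$-dependent tuning of the constants in $t$ and $\epsilon$), so no further comment is needed.
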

\begin{proof}
    We first apply Lemma~\ref{lm:t-design-rank} to prove that for appropriate choices of $\epsilon$ and $t$, if $\nu$ is an $\epsilon$-approximate $t$-design, then
    \begin{align}
        \P{\chi_\delta(U\ket{0^n}) \leq n^d} < 1.
    \end{align}
    Then, we use Theorem~\ref{th:t-design-exsitence2} to find such $\epsilon$-approximate $t$-design with $O(\log^5(n) n^{3 + 5d + \frac{3\sqrt{(d+1)}}{\sqrt{\log n}} })$ two qubit gates. 
    
    Let $\nu$ be an $\epsilon$-approximate $t$-design for $\epsilon = e^{-n^{d+2}}$ and $t = n^{d + 1}$. By Lemma~\ref{lm:t-design-rank},
    \begin{align}
        \P{\chi_\delta(U\ket{0^n}) \leq n^d} 
        &\leq  e^{0.54 n^{d+2}} \frac{\pr*{\frac{n^d+t-1}{2^n + t - 1}}^t + \epsilon}{(1-\delta^2)^t}\\
        & =  e^{0.54 n^{d+2}} \frac{\pr*{\frac{n^d+n^{d+1}-1}{2^n + n^{d+1} - 1}}^{n^{d+1}}} + e^{-n^{d+2}}{(1-\delta^2)^{n^{d+1}}}\\
        &\leq  e^{0.54 n^{d+2}} \frac{\pr{2n^{d+1}}^{n^{d+1}}2^{-n^{d+2}} + e^{-n^{d+2}}}{(1-\delta^2)^{n^{d+1}}}\\
        &\leq e^{-\Omega(n^{d+2})},
    \end{align}
    as claimed. Next, it is a direct calculation for our choice of $t$ and $\epsilon$
    \begin{align}
        C n\ln^5(t) t^{4+3\frac{1}{\sqrt{\log(t)}}}(2nt + \log(1/\epsilon)) =  O(\log^5(n) n^{3 + 5d + \frac{3\sqrt{(d+1)}}{\sqrt{\log n}} })
    \end{align}
    which is the number of two-qubits gate in an $\epsilon$-approximate $t$-design according to Theorem~\ref{th:t-design-exsitence2}.
\end{proof}

We finally comment on the possibility of using $t$-design to improve the lower bound on the approximate stabilizer of $\ket{T}^{\x m}$. Using Lemma~\ref{lm:t-design-rank}, we should take $t = \Omega(Mn)$ for a state with approximate stabilizer state $M$. Assuming that Conjecture~\ref{con:converse-t-design} is true, we need at least $\Omega(Mn)$ $T$-gates to construct such $t$-design. Therefore, using our approach, one cannot expect to get a lower bound better than linear on $\chi_\delta(\ket{T}^{\x m})$.

\section{Acknowledgement}

S. M. and M. T. acknowledge funding provided by NSF CCF-2013062. S. M. acknowledges funding by the Institute for Quantum Information and Matter, an NSF Physics Frontiers Center (NSF Grant PHY-1733907). S. M. and M. T. are grateful to Ryan Williams, Ulysse Chabaud, and Arsalan Motamedi for their insightful conversations. We thank the anonymous referee who helped us spot an error in Theorem 1.6. (now fixed).

\appendix
\section{Stabilizerness measures}
\label{a:measures}

\begin{table}[]
    \centering
\begin{tabular}{|c|c|c|c|c|c|}
\hline
     &  $\xi(\ket{\phi})$ & $\chi(\ket{\phi})$ & $\chi_\delta(\phi)$ & $\norm{\ket{\phi}}_{U_3}$ \\ \hline
     $F(\ket{\phi})$  & \makecell{$\zeta(\ket{\phi}) = \sup_{\ket{\omega}} \frac{\abs{\braket{\phi}{\omega}}^2}{F(\ket{\omega})}$\\ \cite[Theorem 4]{Bravyi_2019} }   & \makecell{Proposition~\ref{prop:f-chi}\\ See also \cite{Labib2022stabilizerrank}}& Open question& \makecell{Remark~\ref{rem:gowers}\\\cite{samorodnitsky2007low}}  \\ \hline
     $\xi(\ket{\phi})$&\cellcolor{gray} &Open question&\cite[Theorem 1]{Bravyi_2019}& Open question\\ \hline
     $\chi(\ket{\phi})$& Open question&\cellcolor{gray}&\makecell{Lemma~\ref{lem:gap}\\\cite[Lemma~3.5]{Lovitz2022newtechniques}}&Open question\\ \hline
\end{tabular}
    \caption{Known relations among measures of stabilizerness.}
    \label{tab:stab-measures}
\end{table}
Approximate and exact stabilizer ranks measure the closeness of a quantum state to the set of all stabilizer states. Other measures of stabilizerness have been studied in the literature that might be mathematically more tractable but are not operationally as relevant as stabilizer ranks. In this section, we review some of these measures and also suggest a novel one based on the Gowers norm \cite{Gowers_2001} (See Definition~\ref{def:stab-measures}). Gowers norms are extensively studied in the context of higher-order Fourier analysis and have found several applications in different areas of mathematics and theoretical computer science \cite{tao2012higher, hatami2019higher}. In particular, for a polynomial $P:\F^n\to \F$ of degree $d$, the Gowers norm of degree $d$ of $(-1)^P$ is 1. Since stabilizer states are defined in terms of quadratic phases, the useful properties of Gowers norm could be exploited to study stabilizerness of arbitrary quantum states  (See Remark~\ref{rem:gowers}). Since some of these measures are easier to deal with, it is interesting to ask if a bound on one measure implies any bound on other measures. As an example, we prove a relation between stabilizer fidelity and stabilizer rank (Proposition~\ref{prop:f-chi}), which immediately implies a linear lower-bound on $\chi(\ket T^{\x n}) $ using the results of \cite{Bravyi_2019} (Corollary~\ref{cor:t-exact}). We summarize other relations in Table~\ref{tab:stab-measures} that exist in literature.
\begin{definition} [Stabilizerness measures]
\label{def:stab-measures}
    Let $\ket{\phi} = \frac{1}{2^{n/2}}\sum_{x\in\F^n} f(x) \ket{x}$ be an $n$-qubit state. We consider the following quantities.
    \begin{enumerate}
    \item Stabilizer fidelity: $F(\ket{\phi}) = \max_{\ket{s} \in \Stab_n} \abs{\braket{s}{\phi}}^2$
    \item Stabilizer extent: 
    \begin{align}
        \xi(\phi) = \inf\set{\norm{c}_1^2: c \in \C^M: \exists \ket{s_1}, \cdots, \ket{s_M} \in \Stab_n: \ket{\phi} = \sum_{i=1}^M c_i \ket{s_i}}.
    \end{align}
    \item Stabilizer rank $\chi(\ket{\phi})$ and approximate stabilizer rank $\chi_\delta(\ket{\phi})$ as defined in Section~\ref{sec:stab-form}.
    \item Gowers norm:
    \begin{multline}
        \norm{\ket{\phi}}_{U^3}^8 = \frac{1}{16^n} \sum_{x, h_1, h_2, h_3\in\F^n} f(x) \overline{f(x+h_1) f(x+h_2) f(x+h_3)} \\
        \times f(x+h_1+h_2)f(x+h_1 + h_3) f(x+h_2+h_3) \overline{f(x+h_1+h_2+h_3)}.
    \end{multline}

\end{enumerate}
\end{definition}
\begin{remark}
\label{rem:gowers}
    We believe that Gowers norm $\norm{\ket{\phi}}_{U^3}$ is a relevant measure of stabilizerness because of its relation with the overlap of a function with quadratic phase functions \cite[Theorem~5.3]{hatami2019higher}. In particular, let $\norm{\ket{\phi}}_{U^3} = \delta$. Then, the direct part of \cite[Theorem~5.3]{hatami2019higher} implies that for any stabilizer state $\ket{s}$ such that $\braket{s}{x} \neq 0$ for all $x\in \F^n$, we have $\abs{\braket{s}{\phi}} \leq \delta$. Furthermore, the converse part of \cite[Theorem~5.3]{hatami2019higher} implies that there exists a stabilizer state $\ket{s}$ such that $\abs{\braket{s}{\phi}} \geq 2^{-c\log^4\frac{1}{\delta}}$ for a universal constant $c>0$. While the direct and converse theorems do not match, $\norm{\ket{\phi}}_{U^3}$ has a closed expression unlike $F(\ket{\phi})$.
\end{remark}

In the following proposition, we, establish a relation between $F(\ket{\phi})$ and $\chi(\ket{\phi})$. We closely follow the proof approach of \cite{Labib2022stabilizerrank}, but we manage to avoid any tools from the higher-order Fourier analysis. 

\begin{prop}
\label{prop:f-chi}
For any $n$-qubit state $\ket{\phi} = \frac{1}{\sqrt{2^n}} \sum_{x\in\F^n}f(x)\ket{x}$, we have 
\begin{align}
    \chi(\ket{\phi}) \geq \frac{2}{3} \log\pr*{\frac{\alpha^2}{\beta\sqrt{F(\ket{\phi})}}}
\end{align}
where
\begin{align}
    \alpha \eqdef \min_{x\in\F^n} \abs{f(x)}\\
    \beta \eqdef \max_{x\in\F^n} \abs{f(x)}.
\end{align}
\end{prop}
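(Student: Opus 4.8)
The plan is to lower bound the stabilizer rank $\chi(\ket{\phi})$ by examining a single stabilizer state $\ket{s}$ achieving (or nearly achieving) the stabilizer fidelity $F(\ket{\phi})$, together with any exact decomposition $\ket{\phi} = \sum_{i=1}^{\chi(\ket{\phi})} c_i \ket{s_i}$ of minimal size. The key quantity to control is the overlap $\langle s | \phi \rangle = \sum_i c_i \langle s | s_i \rangle$. On one hand, I want to show this overlap cannot be too small, using that $\ket{\phi}$ has all computational-basis amplitudes of roughly the same magnitude (between $\alpha/\sqrt{2^n}$ and $\beta/\sqrt{2^n}$). On the other hand, I want to argue that each term $c_i \langle s | s_i \rangle$ is controlled: the inner product of two stabilizer states is either $0$ or has magnitude $2^{-k/2}$ for some nonnegative integer $k$, so in particular $|\langle s | s_i \rangle|^2$ takes the value $0$ or lies in $\{1, 1/2, 1/4, \dots\}$; and the coefficients $c_i$ themselves are bounded in terms of $\beta$ (and the geometry of the $\ket{s_i}$). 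Combining a lower bound on $|\langle s|\phi\rangle|$ with an upper bound on each summand in terms of $\chi(\ket{\phi})$ and $\sqrt{F(\ket{\phi})}$ should give the claimed logarithmic bound after taking logs.

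Concretely, the first step I would carry out is to pin down the lower bound on $|\langle s | \phi\rangle|$ for the optimal stabilizer state: by definition $|\langle s|\phi\rangle|^2 = F(\ket{\phi})$, but more usefully I want to relate $F(\ket{\phi})$ to $\alpha$. Any stabilizer state $\ket{s}$ is supported on an affine subspace $A$ of $\F^n$ with uniform amplitude magnitudes $1/\sqrt{|A|}$ on that support, so $|\langle s | \phi\rangle| = \frac{1}{\sqrt{|A| 2^n}} | \sum_{x \in A} \overline{g(x)} f(x) |$ where $g$ is the quadratic/linear phase part — this is at least something like $\frac{\alpha |A|}{\sqrt{|A| 2^n}}$ in the best case but in general requires care because of phase cancellation. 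The trick that avoids higher-order Fourier analysis (following \cite{Labib2022stabilizerrank}) is presumably to instead bound $F(\ket{\phi})$ from below by $\alpha^2$ times a combinatorial factor, or to use the normalization $\sum_x |f(x)|^2 = 2^n$ together with the pointwise bounds. Step two is to expand $\langle s|\phi\rangle$ through the minimal decomposition and bound $|c_i|$: since the decomposition is minimal the $\ket{s_i}$ are linearly independent, and one can bound $\max_i |c_i|$ in terms of $\beta$ (for instance via the fact that $\langle s_i | \phi \rangle$-type overlaps are bounded and the Gram matrix is invertible, or more cheaply by noting $\|\ket{\phi}\| = 1$ and using a crude bound on the coefficients of a minimal stabilizer decomposition). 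Step three assembles these: $\sqrt{F(\ket{\phi})} = |\langle s|\phi\rangle| \le \sum_i |c_i| \, |\langle s|s_i\rangle| \le \chi(\ket{\phi}) \cdot \max_i |c_i| \cdot \max_i |\langle s | s_i\rangle|$, and then one feeds in the lower bound on $\sqrt{F}$ in terms of $\alpha$, the upper bound on $|c_i|$ in terms of $\beta$, and rearranges; the factor of $2/3$ and the $\log$ strongly suggest that somewhere a bound of the form $\max_i |\langle s|s_i\rangle| \ge 2^{-\chi/2}$ or $\chi \ge \frac{2}{3}\log(\cdots)$ emerges from a counting argument over the possible values $2^{-k/2}$ that stabilizer overlaps can take, combined with a pigeonhole or product argument over the $\chi$ terms.

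The step I expect to be the main obstacle is controlling the potential cancellation among the terms $c_i \langle s | s_i\rangle$ and converting this into the clean multiplicative bound with the precise constant $2/3$. A naive triangle inequality gives $\sqrt{F} \le \chi \cdot \max_i |c_i \langle s|s_i\rangle|$, but to get the exponent right one likely needs a sharper argument — for instance, choosing $\ket{s}$ cleverly (perhaps a stabilizer state adapted to one of the $\ket{s_i}$ in the decomposition rather than the global fidelity optimizer), or using the structure of stabilizer inner products more carefully so that the bound reads something like $F(\ket{\phi}) \le \beta^2 \cdot \chi \cdot 2^{-(\text{something})}$ where the exponent grows with $\chi$. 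Getting the $\alpha^2/(\beta \sqrt{F})$ combination inside the log, rather than just $1/F$, requires carrying both the lower bound (from $\alpha$, the minimum amplitude) and an upper bound (from $\beta$, the maximum amplitude) through the estimate simultaneously; keeping the two normalizations straight and ensuring no spurious factors of $2^n$ survive is the delicate bookkeeping. Once that structural inequality is in place, taking logarithms and rearranging to isolate $\chi(\ket{\phi})$ is routine.
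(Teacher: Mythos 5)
Your plan does not prove the stated inequality, and the obstruction is structural rather than bookkeeping. Rewritten, the proposition says $\beta\sqrt{F(\ket{\phi})}\,2^{3\chi/2}\geq\alpha^2$: a small decomposition must \emph{force the existence} of a stabilizer state with large overlap with $\ket{\phi}$, so the proof has to construct a good stabilizer witness out of the decomposition. Your step three runs the other way: from $\sqrt{F}=|\langle s|\phi\rangle|\leq\chi\cdot\max_i|c_i|\cdot\max_i|\langle s|s_i\rangle|$ you obtain a lower bound on $\chi$ proportional to $\sqrt{F}$, which \emph{degenerates} as $F\to0$, whereas the claimed bound strengthens as $F\to0$; no rearrangement of that inequality can yield $\chi\gtrsim\log(1/\sqrt{F})$. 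Feeding in a lower bound on $\sqrt{F}$ in terms of $\alpha$ does not rescue this: without using the decomposition, the only available bound is of the form $F\geq\alpha^2/2^n$ (take $\ket{s}$ a computational basis state), which is exponentially small -- for uniform-magnitude states such as random phase states $F$ really is $2^{-\Omega(n)}$ -- so the spurious $2^{n}$ factor you worried about is unavoidable on that route, and moreover $F$ would then vanish from the conclusion although it appears explicitly in the statement. A further gap: $\max_i|c_i|$ cannot be bounded in terms of $\beta$. Minimality only gives linear independence of the $\ket{s_i}$; their Gram matrix can be arbitrarily ill-conditioned, so the coefficients of a minimal decomposition of a normalized state can be enormous due to cancellation. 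Finally, nothing in your outline produces the exponent $3\chi/2$; the fact that stabilizer overlaps are quantized to powers of $2^{-1/2}$ does not by itself couple to $\chi$.

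For comparison, the paper's proof constructs the witness as follows. Writing $f=\sum_{i=1}^{M}c_iQ_i1_{A_i}$ with $M=\chi(\ket{\phi})$, it first builds an affine subspace $U\subseteq\F^n$ of dimension at least $n-M$ on which every indicator $1_{A_i}$ is constant (dropping at most one dimension per term), so that $f|_U$ is a combination of quadratic phases alone. It then Fourier-expands, over the group $\set{\pm1,\pm i}^{S}$, the function $\Gamma$ of the phase values, re-expressing $f|_U$ as a combination of at most $4^{M}$ quadratic phases $Q_\gamma$; each $Q_\gamma$ gives a stabilizer state $\ket{s_\gamma}$ supported on $U$, and Parseval plus Cauchy--Schwarz give $\sum_\gamma|\widehat{\Gamma}(\gamma)|\leq2^{M}\beta$ -- note it is the \emph{values} of $f$ (bounded by $\beta$), not the coefficients $c_i$, that get controlled. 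Combining $\frac{1}{|U|}\sum_{x\in U}|f(x)|^2\geq\alpha^2$ with $|\langle\phi|s_\gamma\rangle|\leq\sqrt{F(\ket{\phi})}$ and $|U|\geq2^{n-M}$ yields $\alpha^2\leq2^{3M/2}\beta\sqrt{F(\ket{\phi})}$, which is the claim. This witness-construction step (common subspace plus $\Z_4^{S}$ Fourier expansion) is exactly what is missing from your proposal.
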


 Before starting the proof, we briefly review the setup for (classical) Fourier analysis here. Let $G$ be any finite Abelian group. A character for $G$ is a function $\gamma:G \to \C$ such that $\gamma(g + h) = \gamma(g) \gamma(h)$. We denote the set of all characters by $\widehat G$, which is finite and of the same cardinality as $G$. For any function  $f:G\to \C$, there exists a unique function $\widehat{f}:\widehat{G} \to \C$ such that $f= \sum_{\gamma \in \widehat{G}}\widehat{f}(\gamma) \gamma$. Using Parseval equality and Cauchy Schwartz inequality, we have
\begin{align}
     \sum_{\gamma \in \widehat{G}} \abs{\widehat{f}(\gamma)} \leq \sqrt{\abs{G}} \max_{g\in G} \abs{f(g)}.
\end{align}
Let $G_1$ and $G_2$ be two finite Abelian groups. We have $\widehat{G_1 \times G_2} = \set{\gamma_1\gamma_2: \gamma_1\in \widehat{G_1}, \gamma_2\in\widehat{G_2}}$. Finally, consider the group $G=\set{\pm 1, \pm i}$ under multiplication. Then, $\widehat{G} \cong	
 \Z_4$ and the characters are $x \mapsto x^i$ for $i\in \Z_4$.
\begin{proof}[Proof of Proposition \ref{prop:f-chi}]
We define the set of functions $\F^n \to \C$ of the form $x\mapsto i^{\ell(x)}(-1)^{Q(x)}$ for $\ell$ linear and $Q$ quadratic as $\mathcal{Q}$, which is closed under function multiplication.
Let $\chi_\delta(\ket{\phi}) = M$, i.e., there exist $c_1, \cdots, c_M\in \C$, $A_1, \cdots, A_M \subset \F^n$ affine subspaces, $Q_1, \cdots, Q_M\in \mathcal{Q}$ such that 
\begin{align}
    f = \sum_{i=1}^M c_i Q_i 1_{A_i}.
\end{align}
We recursively construct an affine subspace $U\subset \F^n$ with dimension at least $n - M$ such that $1_{A_i}$ is constant on $U$ for all $i$. \footnote{This is a minor improvement to  Claim~3.3 in \cite{Peleg2022lowerbounds}.}. Start with $U_0 = \F^n$. For each $i\in[M]$ let  $U_{i-1}$ be given such that $1_{A_1}, \cdots, 1_{A_{i-1}}$ are constant on $U_{i-1}$. If $U_{i-1} \subset A_i$, then take $U_i = U_{i-1}$ and $1_{A_i}$ is one on $U_i$. Otherwise, there is a subspace $U_i$ of $U_{i-1}$ with codimension one such that $U_{i-1}\cap A_i = \empty$, i.e., $\one_{A_i}$ is zero on $U_i$ (and since $U_{i} \subset U_{i-1}$,  $1_{A_1}, \cdots, 1_{A_{i-1}}$ are constant on $U_i$ too). We take $U = U_M$, which has a dimension at least $n - M$ because the dimension of $U_0$ is $n$, and in each step, the dimension is decreased at most by one. 

Upon defining $S \eqdef \set{i\in [M]: 1_{A_i}|_U = 1}$, we have
\begin{align}
    f|_U = \sum_{i\in S} c_i Q_i|_U.
\end{align}

Define  $G = \set{\pm 1, \pm i}$, which is a group under multiplication. We also define
\begin{align}
    &h:U \to G^S\quad x\mapsto (Q_i(x))_{i\in S}\\
    &\Gamma: G^S \to \C \quad y \mapsto \begin{cases}\sum_{i\in S}c_i y_i\quad&y\in \textnormal{Im}(h)\\ 0 \quad \textnormal{Otherwise} \end{cases}
\end{align}
for which
\begin{align}
    f(x) = \Gamma(h(x)) ~\forall x \in U.
\end{align}

Using our discussion before the proof about Fourier analysis of $G^S$, we can write
\begin{align}
    \Gamma = \sum_{\gamma \in \Z_4^S} \widehat{\Gamma}(\gamma) \gamma.
\end{align}
and 
\begin{align}
    f(x) = \sum_{\gamma \in \Z_4^S} \widehat{\Gamma}(\gamma) Q_\gamma(x) ~\forall x\in U
\end{align}
where $Q_\gamma \eqdef \prod_{i\in S} Q_i^{\gamma_i} \in \mathcal{Q}$. Therefore, 
\begin{align}
    \label{eq:fourier-ex}
     \frac{1}{\abs{U}} \sum_{x\in U} \abs{f(x)}^2 = \sum_{i\in S} \widehat{\Gamma}(\gamma) \frac{1}{\abs{U}}\sum_{x\in U} Q_\gamma(x) \overline{f(x)}.
\end{align}

We consider the stabilizer state 
\begin{align}
     \ket{s_\gamma} \eqdef \frac{1}{\sqrt{\abs{U}}} \sum_{x\in U} Q_\gamma(x) \ket{x}.
\end{align}
It holds that
\begin{align}
\label{eq:stab-inner}
    \braket{\phi}{s_\gamma} = \frac{1}{\sqrt{\abs{U}2^n}}\sum_{x\in U}Q_\gamma(x) \overline{f(x)}.
\end{align}
Combining \eqref{eq:fourier-ex} and \eqref{eq:stab-inner}, we obtain
\begin{align}
    \frac{1}{\abs{U}} \sum_{x} \abs{f(x)}^2 
    &= \sum_{i\in S} \widehat{\Gamma}(\gamma) \sqrt{\frac{2^n}{|U|}}  \braket{\phi}{s_\gamma}\\
    &\leq \pr{\sum_{\gamma} \abs{\widehat{\Gamma}(\gamma)}}\sqrt{\frac{2^nF(\ket{\phi})}{\abs{U}}}\\
    &\leq 2^{|S|} \max_{y\in G^S}\abs{\Gamma(y)} \sqrt{\frac{2^nF(\ket{\phi})}{\abs{U}}}\\
    &\leq 2^{|S|} \beta \sqrt{\frac{2^nF(\ket{\phi})}{\abs{U}}}\\
    & \leq 2^{M} \beta \sqrt{\frac{F(\ket{\phi})}{2^{M}}}
\end{align}
Combining the above inequality with $\frac{1}{\abs{U}} \sum_{x} \abs{f(x)}^2  \geq \alpha^2$ completes the proof.
\end{proof}
Proposition~\ref{prop:f-chi} immediately implies a linear lower bound on exact stabilizer rank of $\ket{T}^{\x m}$ using $F(\ket{T}^{\x m}) = \cos(\frac{\pi}{8}) ^{2m}$ \cite{Bravyi_2019}.

\begin{corollary}
\label{cor:t-exact}
We have
\begin{align}
    \chi(\ket{T}^{\x m}) \geq \frac{2}{3} \log\pr*{\frac{1}{\cos(\frac{\pi}{8})}} m > 0.076 m.
\end{align}
\end{corollary}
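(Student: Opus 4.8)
The plan is to apply Proposition~\ref{prop:f-chi} directly to the state $\ket{T}^{\otimes m}$, so the only real task is to identify the three quantities $\alpha$, $\beta$, and $F(\ket{T}^{\otimes m})$ appearing in that bound. First I would write $\ket{T}^{\otimes m}$ in the computational basis: since $\ket{T} = \frac{1}{\sqrt{2}}(\ket{0} + e^{i\pi/4}\ket{1})$, we have $\ket{T}^{\otimes m} = \frac{1}{\sqrt{2^m}}\sum_{x\in\F^m} e^{i\pi\abs{x}/4}\ket{x}$, where $\abs{x}$ denotes the Hamming weight of $x$. Thus, in the notation of Proposition~\ref{prop:f-chi}, the amplitude function is $f(x) = e^{i\pi\abs{x}/4}$, which is a pure phase, so $\abs{f(x)} = 1$ for every $x$ and hence $\alpha = \min_{x}\abs{f(x)} = 1 = \max_{x}\abs{f(x)} = \beta$.

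Next I would invoke the known value of the stabilizer fidelity of tensor powers of the $T$ state, namely $F(\ket{T}^{\otimes m}) = \cos(\pi/8)^{2m}$, established in \cite{Bravyi_2019} (this is the multiplicative statement $F(\ket{T}^{\otimes m}) = F(\ket{T})^m$ together with $F(\ket{T}) = \cos^2(\pi/8)$). Substituting $\alpha = \beta = 1$ and this value of $F$ into Proposition~\ref{prop:f-chi} gives
\begin{align}
    \chi(\ket{T}^{\otimes m}) \geq \frac{2}{3}\log\pr*{\frac{1}{\sqrt{\cos(\pi/8)^{2m}}}} = \frac{2}{3}\log\pr*{\frac{1}{\cos(\pi/8)^m}} = \frac{2m}{3}\log\pr*{\frac{1}{\cos(\pi/8)}},
\end{align}
where $\log$ denotes the base-two logarithm, consistent with the constant $\tfrac{2}{3}$ appearing in Proposition~\ref{prop:f-chi}.

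Finally I would carry out the elementary numerical check: using $\cos(\pi/8)\approx 0.92388$ one has $\log_2(1/\cos(\pi/8)) = -\log_2\cos(\pi/8)\approx 0.1142$, so $\frac{2}{3}\log_2(1/\cos(\pi/8)) \approx 0.0761 > 0.076$, which yields the claimed bound $\chi(\ket{T}^{\otimes m}) > 0.076\, m$. There is no substantive obstacle here — the result is a straightforward corollary of Proposition~\ref{prop:f-chi} — and the only points requiring a modicum of care are (i) ensuring that the logarithm base matches the one implicit in Proposition~\ref{prop:f-chi} (it must be $\log_2$ for the final numerical inequality to hold) and (ii) verifying that the multiplicativity $F(\ket{T}^{\otimes m}) = F(\ket{T})^m$ is legitimately available from \cite{Bravyi_2019}.
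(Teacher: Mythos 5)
Your proposal is correct and follows essentially the same route as the paper: the paper's proof is exactly the one-line application of Proposition~\ref{prop:f-chi} with $\alpha=\beta=1$ (since the amplitudes of $\ket{T}^{\otimes m}$ are pure phases) and $F(\ket{T}^{\otimes m})=\cos(\pi/8)^{2m}$ cited from \cite{Bravyi_2019}. Your two points of care are also well placed --- in particular the logarithm in Proposition~\ref{prop:f-chi} is indeed base two (the constant $2/3$ arises from the $2^{3M/2}$ factor in its proof), which is what makes the numerical bound $>0.076\,m$ go through.
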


The next lemma shows that we cannot, in general, control the gap between approximate and exact rank.
\begin{lemma}
\label{lem:gap}
    Let $1>\delta >0$, $n$, and $M$ be given such that $1 \leq M\leq 2^n$. Then, there are two $n$-qubit quantum states $\ket{\phi_1}$ and $\ket{\phi_2}$ such that
    \begin{align}
        \chi(\ket{\phi_1}) &= \chi(\ket{\phi_2}) = \Theta(M)\\
        \chi_\delta(\ket{\phi_1}) &= \Omega\pr*{\frac{M}{\log^2M}}\\
        \chi_\delta(\ket{\phi_2}) &= O(1).
    \end{align}
\end{lemma}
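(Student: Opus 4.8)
The plan is to realise $\ket{\phi_1}$ as a Haar-typical state on a smaller register padded with zeros, and $\ket{\phi_2}$ as a stabilizer state perturbed, in a direction orthogonal to it, by a small multiple of another such state. Set $k=\lceil\log_2 M\rceil$, so that $M\le 2^k<2M$, $k=\Theta(\log M)$, and $k\le n$ (since $M\le 2^n$); for $M$ large enough this also gives $k\ge\max\{6,\,2\log\tfrac1{1-\delta^2}+9\}$, the small-$M$ regime being absorbed into the hidden constants. First I would fix a single $k$-qubit witness state $\ket{\psi}$: by Lemma~\ref{lm:random-stab-rank} the event $\chi_\delta(\ket{\psi})>C(1-\delta^2)^2 2^k/k^2$ has positive Haar probability, while by the Remark following that lemma the event $\chi(\ket{\psi})=2^k$ has Haar probability one, so these two events can be intersected to produce a $\ket{\psi}$ with $\chi(\ket{\psi})=2^k$ and $\chi_\delta(\ket{\psi})=\Omega(M/\log^2 M)$ simultaneously.

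Put $\ket{\phi_1}=\ket{\psi}\otimes\ket{0^{n-k}}$. Tensoring with $\ket{0^{n-k}}$ changes neither the exact nor the approximate stabilizer rank: the upper bounds are immediate, and for the lower bounds one applies $I\otimes\ket{0^{n-k}}\bra{0^{n-k}}$ to an (approximate) stabilizer decomposition exactly as in step $(a)$ of the proof of Theorem~\ref{th:lower-bound-stab}, obtaining a decomposition of $\ket{\psi}$ with at most as many terms and no larger error. Hence $\chi(\ket{\phi_1})=2^k=\Theta(M)$ and $\chi_\delta(\ket{\phi_1})=\Omega(M/\log^2 M)$. For $\ket{\phi_2}$, I would take a generic $(k-1)$-qubit state $\ket{\varphi}$ with $\chi(\ket{\varphi})=2^{k-1}$ (again from the Remark after Lemma~\ref{lm:random-stab-rank}), set $\ket{\eta}=\ket{1}\otimes\ket{\varphi}\otimes\ket{0^{n-k}}$, which is orthogonal to $\ket{0^n}$, and define $\ket{\phi_2}=\sqrt{1-\delta^2}\,\ket{0^n}+\delta\,\ket{\eta}$. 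This is a unit vector and $\bigl\|\ket{\phi_2}-\sqrt{1-\delta^2}\ket{0^n}\bigr\|=\delta$, so $\chi_\delta(\ket{\phi_2})\le 1=O(1)$. Since $\ket{\eta}=\delta^{-1}\bigl(\ket{\phi_2}-\sqrt{1-\delta^2}\ket{0^n}\bigr)$, any exact decomposition of $\ket{\phi_2}$ into $R$ stabilizer states yields one of $\ket{\eta}$ into $R+1$, so $R\ge\chi(\ket{\eta})-1=2^{k-1}-1$; together with $\chi(\ket{\phi_2})\le 1+\chi(\ket{\eta})$ and $\chi(\ket{\eta})=\chi(\ket{\varphi})=2^{k-1}$ this gives $\chi(\ket{\phi_2})=\Theta(2^{k-1})=\Theta(M)$.

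There is no serious obstacle here; the three points that need care are (i) using the Remark after Lemma~\ref{lm:random-stab-rank} to pin the exact rank of the Haar witness to exactly $2^k$, since the approximate-rank bound alone only controls $\chi(\ket{\phi_1})$ from below; (ii) intersecting the two positive-probability events so that a single state serves both purposes; and (iii) choosing the perturbation defining $\ket{\phi_2}$ orthogonal to the reference stabilizer state, so that $\ket{\phi_2}$ remains a unit vector while having small approximate but large exact rank. If one insists on $\chi(\ket{\phi_1})$ and $\chi(\ket{\phi_2})$ being literally equal rather than both $\Theta(M)$, a little more bookkeeping in the choice of register sizes is needed, but for the stated $\Theta$-bound this is unnecessary.
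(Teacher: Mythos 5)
Your proposal is correct, and for the state $\ket{\phi_1}$ it is essentially the paper's own argument: pick $k\approx\log_2 M$, take a Haar-typical $k$-qubit witness whose approximate rank is $\Omega(2^k/k^2)$ by Lemma~\ref{lm:random-stab-rank} and whose exact rank is $2^k$ with probability one (the Remark after that lemma), intersect the two events, and pad with $\ket{0^{n-k}}$, using the projector argument from step $(a)$ of the proof of Theorem~\ref{th:lower-bound-stab} to see that padding changes neither rank. Where you diverge is $\ket{\phi_2}$: the paper disposes of it in one line by citing \cite[Lemma~3.5]{Lovitz2022newtechniques}, whereas you give a self-contained construction, $\ket{\phi_2}=\sqrt{1-\delta^2}\,\ket{0^n}+\delta\ket{\eta}$ with $\ket{\eta}$ a padded full-exact-rank state orthogonal to $\ket{0^n}$, and you verify directly that $\chi_\delta(\ket{\phi_2})\leq 1$ (using that the definition of $\chi_\delta$ permits unnormalized approximants) while $\chi(\ket{\phi_2})\geq\chi(\ket{\eta})-1=\Theta(M)$. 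This is in fact the same construction the paper sketches informally in Section~\ref{sec:cond} (the remark that $\delta\ket{\psi}+\sqrt{1-\delta^2}\ket{0}$ has exponential exact rank but approximate rank $1$), so your route is more elementary and explicit than the published proof, at the cost of no generality; it also handles more carefully than the paper the point that the exact rank of the Haar witness must be pinned to $2^k$ by a separate probability-one event, since the approximate-rank bound alone does not give the $\Theta(M)$ upper-and-lower control on $\chi(\ket{\phi_1})$. Your closing caveat that exact equality $\chi(\ket{\phi_1})=\chi(\ket{\phi_2})$ is not delivered (only both being $\Theta(M)$) applies equally to the paper's proof, so it is not a defect of your argument relative to the stated bound.
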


\begin{proof}
    Let $k$ be the smallest integer such that $2^k\geq M$. By Lemma~\ref{lm:random-stab-rank}, there exists a $k$ qubit state $\ket{\psi_1}$ such that 
    \begin{align}
        \chi_\delta(\ket{\psi_1}) = \Omega(\frac{2^k}{k^2}).
    \end{align}
    Furthermore, by a probabilistic argument, we can further assume that 
    \begin{align}
        \chi_\delta(\ket{\psi_1}) = 2^k.
    \end{align}
    Taking $\ket{\phi_1} = \ket{\psi_1} \x \ket{0}^{\x n - k}$ has the desired property.
Next, using the same argument and \cite[Lemma~3.5]{Lovitz2022newtechniques} implies the existence of $\ket{\phi_2}$. 
\end{proof}

\bibliographystyle{alphaurl}
\newcommand{\etalchar}[1]{$^{#1}$}

\end{document}